\newtheorem{theorem}{Theorem}
\newtheorem{corollary}[theorem]{Corollary}
\newtheorem{definition}[theorem]{Definition}
\newtheorem{example}[theorem]{Example}
\newtheorem{proposition}[theorem]{Proposition}
\newtheorem{remark}[theorem]{Remark}
\newenvironment{proof}[1][Proof]{\noindent\textbf{#1.} }{\ \rule{0.5em}{0.5em}}
\begin{document}

\title{Mechanical Systems in the Generalized Lie Algebroids Framework }
\author{Constantin M. ARCU\c{S} \\
\ \ \\
[0pt] \ \ \\
[0pt] 
\begin{tabular}{c}
Secondary School \textquotedblleft CORNELIUS RADU\textquotedblright , \\ 
R\u{a}dine\c{s}ti Village, 217196, Gorj County, Rom\^{a}nia \\ 
e-mail: constantin\_arcus@yahoo.com, c\_arcus@radinesti.ro%
\end{tabular}%
}
\maketitle

\begin{abstract}
\emph{Mechanical systems} called by use, \emph{mechanical }$\left( \rho
,\eta\right) $\emph{-systems, Lagrange mechanical }$\left( \rho ,\eta
\right) $\emph{-systems} or \emph{Finsler mechanical }$\left( \rho ,\eta
\right) $\emph{-systems} are presented. The canonical $\left( \rho ,\eta
\right) $\emph{-}semi(spray) associated to a mechanical $\left( \rho ,\eta
\right) $-system is obtained. New and important results are obtained in the
particular case of Lie algebroids. The Lagrange mechanical $(\rho ,\eta)$%
-systems are the spaces necessary to develop a new Lagrangian formalism. We
obtain the $(\rho ,\eta)$-semispray associated to a regular Lagrangian $L$
and external force $F_{e}$ and we derive the equations of Euler-Lagrange
type. So, a new solution for the Weinstein's Problem in the general
framework of generalized Lie algebroids is presented. \ \ \bigskip\newline
\textbf{2010 Mathematics Subject Classification:} 00A69, 58B34, 53B05,
53B50, 53C05.\bigskip\newline
\ \ \ \textbf{Keywords:} vector bundle, (generalized) Lie algebroid,
(linear) connection, curve, lift, natural base, adapted base, projector,
almost product structure, almost tangent structure, spray, semispray,
mechanical system, Lagrangian formalism.
\end{abstract}

\tableofcontents

\section{Introduction}

\ \ \ \ \ 

In general, if $\mathcal{C}$ is a category, then we denote $\left\vert 
\mathcal{C}\right\vert $ the class of objects and for any $A,B{\in }%
\left\vert \mathcal{C}\right\vert $, we denote $\mathcal{C}\left( A,B\right) 
$ the set of morphisms of $A$ source and $B$ target and $Iso_{\mathcal{C}%
}\left( A,B\right) $ the set of $\mathcal{C}$-isomorphisms of $A$ source and 
$B$ target. Let $\mathbf{LieAlg},~\mathbf{Mod,}$ $\mathbf{Man}$ and $\mathbf{%
B}^{\mathbf{v}}$ be the category of Lie algebras, modules, manifolds and
vector bundles respectively.

We know that if 
\begin{equation*}
\left( E,\pi ,M\right) \in \left\vert \mathbf{B}^{\mathbf{v}}\right\vert ,
\end{equation*}%
\begin{equation*}
\Gamma \left( E,\pi ,M\right) =\left\{ u\in \mathbf{Man}\left( M,E\right)
:u\circ \pi =Id_{M}\right\}
\end{equation*}%
and 
\begin{equation*}
\mathcal{F}\left( M\right) =\mathbf{Man}\left( M,\mathbb{R}\right) ,
\end{equation*}%
then $\left( \Gamma \left( E,\pi ,M\right) ,+,\cdot \right) $ is a $\mathcal{%
F}\left( M\right) $-module.

If \ $\left( \varphi ,\varphi _{0}\right) \in \mathbf{B}^{\mathbf{v}}\left(
\left( E,\pi ,M\right) ,\left( E^{\prime },\pi ^{\prime },M^{\prime }\right)
\right) $ such that $\varphi _{0}\in Iso_{\mathbf{Man}}\left( M,M^{\prime
}\right) ,$ then, using the operation 
\begin{equation*}
\begin{array}{ccc}
\mathcal{F}\left( M\right) \times \Gamma \left( E^{\prime },\pi ^{\prime
},M^{\prime }\right) & ^{\underrightarrow{~\ \ \cdot ~\ \ }} & \Gamma \left(
E^{\prime },\pi ^{\prime },M^{\prime }\right) \\ 
\left( f,u^{\prime }\right) & \longmapsto & f\circ \varphi _{0}^{-1}\cdot
u^{\prime }%
\end{array}%
\end{equation*}%
it results that $\left( \Gamma \left( E^{\prime },\pi ^{\prime },M^{\prime
}\right) ,+,\cdot \right) $ is a $\mathcal{F}\left( M\right) $-module and we
obtain the $\mathbf{Mod}$-morphism%
\begin{equation*}
\begin{array}{ccc}
\Gamma \left( E,\pi ,M\right) & ^{\underrightarrow{~\ \ \Gamma \left(
\varphi ,\varphi _{0}\right) ~\ \ }} & \Gamma \left( E^{\prime },\pi
^{\prime },M^{\prime }\right) \\ 
u & \longmapsto & \Gamma \left( \varphi ,\varphi _{0}\right) u%
\end{array}%
\end{equation*}%
defined by 
\begin{equation*}
\begin{array}{c}
\Gamma \left( \varphi ,\varphi _{0}\right) u\left( y\right) =\varphi \left(
u_{\varphi _{0}^{-1}\left( y\right) }\right) =\left( \varphi \circ u\circ
\varphi _{0}^{-1}\right) \left( y\right) ,%
\end{array}%
\end{equation*}%
for any $y\in M^{\prime }.$

If $\left( F,\nu ,M\right) \in \left\vert \mathbf{B}^{\mathbf{v}}\right\vert 
$ so that there exists 
\begin{equation*}
\begin{array}{c}
\left( \rho ,Id_{M}\right) \in \mathbf{B}^{\mathbf{v}}\left( \left( F,\nu
,M\right) ,\left( TM,\tau _{M},M\right) \right)%
\end{array}%
\end{equation*}%
and an operation 
\begin{equation*}
\begin{array}{ccc}
\Gamma \left( F,\nu ,M\right) \times \Gamma \left( F,\nu ,M\right) & ^{%
\underrightarrow{~\ \left[ ,\right] _{F}~\ }} & \Gamma \left( F,\nu ,M\right)
\\ 
\left( u,v\right) & \longmapsto & \left[ u,v\right] _{F}%
\end{array}%
\end{equation*}%
with the following properties:\bigskip

\noindent $\qquad LA_{1}$. \emph{the equality holds good }%
\begin{equation*}
\begin{array}{c}
\left[ u,f\cdot v\right] _{F}=f\left[ u,v\right] _{F}+\Gamma \left( \rho
,Id_{M}\right) \left( u\right) f\cdot v,%
\end{array}%
\end{equation*}%
\qquad \quad\ \ \emph{for all }$u,v\in \Gamma \left( F,\nu ,M\right) $\emph{%
\ and} $f\in \mathcal{F}\left( M\right) .$

\medskip $LA_{2}$. \emph{the }$4$\emph{-tuple} $\left( \Gamma \left( F,\nu
,M\right) ,+,\cdot ,\left[ ,\right] _{F}\right) $ \emph{is a Lie} $\mathcal{F%
}\left( M\right) $\emph{-algebra,}

$LA_{3}$. \emph{the }$\mathbf{Mod}$\emph{-morphism }$\Gamma \left( \rho
,Id_{M}\right) $\emph{\ is a }$\mathbf{LieAlg}$\emph{-morphism of }%
\begin{equation*}
\left( \Gamma \left( F,\nu ,M\right) ,+,\cdot ,\left[ ,\right] _{F}\right)
\end{equation*}%
\emph{\ source and }%
\begin{equation*}
\left( \Gamma \left( TN,\tau _{N},M\right) ,+,\cdot ,\left[ ,\right]
_{TM}\right)
\end{equation*}%
\emph{target, then the triple }%
\begin{equation}
\begin{array}{c}
\left( \left( F,\nu ,M\right) ,\left[ ,\right] _{F},\left( \rho
,Id_{M}\right) \right)%
\end{array}
\label{eq1}
\end{equation}%
\emph{is an object of the category }$\mathbf{LA}$\emph{\ of Lie algebroids}$%
\mathbf{.}$ The couple $\left( \left[ ,\right] _{F},\left( \rho
,Id_{M}\right) \right) $ is called\emph{\ Lie algebroid structure.}

Used earlier by many authors, the study of Lie algebroids were considerably
improved by J. Pradines in \cite{30}, who noticed that the Lie algebroids
are infinitesimal versions of Lie groupoids in a functorial manner.

A generalization of a Lie algebroid is the \emph{Lie bialgebroid} defined in 
\cite{24} by K.C.H. Mackenzie and P. Xu.

A Lie algebroid $\left( \Gamma \left( F,\nu ,M\right) ,\left[ ,\right]
_{F},\left( \rho ,Id_{M}\right) \right) $ is called \emph{Lie bialgebroid}
if there exists a Lie algebroid structure $\left( \left[ ,\right] _{\overset{%
\ast }{F}},\left( \overset{\ast }{\rho },Id_{M}\right) \right) $ for its
dual $\left( \overset{\ast }{F},\overset{\ast }{\nu },M\right) $ such that 
\begin{equation*}
\begin{array}{c}
d_{\overset{\ast }{F}}\left( \left[ u,v\right] _{F}\right) =\left[ u,d_{%
\overset{\ast }{F}}v\right] _{F}-\left[ v,d_{\overset{\ast }{F}}u\right]
_{F},~\forall u,v\in \Gamma \left( F,\nu ,M\right)%
\end{array}%
\end{equation*}%
where $d_{\overset{\ast }{F}}$ is the exterior differentiation operator for
the exterior differential algebra of the Lie algebroid $\left( \left( 
\overset{\ast }{F},\overset{\ast }{\nu },M\right) ,\left[ ,\right] _{\overset%
{\ast }{F}},\left( \overset{\ast }{\rho },Id_{M}\right) \right) .$

The Courant algebroids defined in \cite{23} by Z.J. Liu, A. Weinstein and P.
Xu are new generalizations of Lie algebroids. This notion is the result of
an effort to unify the Courant bracket and the Manin bracket. (see: \cite{29}%
)

If $\left( F,\nu ,M\right) \in \left\vert \mathbf{B}^{\mathbf{v}}\right\vert 
$ so that there exists 
\begin{equation*}
\begin{array}{c}
\left( \rho ,Id_{M}\right) \in \mathbf{B}^{\mathbf{v}}\left( \left( F,\nu
,M\right) ,\left( TM,\tau _{M},M\right) \right) ,%
\end{array}%
\end{equation*}%
a skewsymmetric and $\mathbb{R}$-linear bracket 
\begin{equation*}
\begin{array}{ccc}
\Gamma \left( F,\nu ,M\right) \times \Gamma \left( F,\nu ,M\right) & ^{%
\underrightarrow{~\ \left[ ,\right] _{F}~\ }} & \Gamma \left( F,\nu ,M\right)
\\ 
\left( u,v\right) & \longmapsto & \left[ u,v\right] _{F}%
\end{array}%
,
\end{equation*}%
a nondegenerate symmetric and bilinear form 
\begin{equation*}
\begin{array}{ccc}
\Gamma \left( F,\nu ,M\right) \times \Gamma \left( F,\nu ,M\right) & ^{%
\underrightarrow{~\ \left\langle ,\right\rangle _{F}~\ }} & \mathcal{F}%
\left( M\right) \\ 
\left( u,v\right) & \longmapsto & \left\langle u,v\right\rangle _{F}%
\end{array}%
\end{equation*}%
and we have the section application 
\begin{equation*}
\begin{array}{ccc}
\mathcal{F}\left( M\right) & ^{\underrightarrow{~\ \mathcal{S}~\ }} & \Gamma
\left( F,\nu ,M\right) \\ 
f & \longmapsto & \mathcal{S}\left( f\right)%
\end{array}%
\end{equation*}%
defined by%
\begin{equation*}
\begin{array}{c}
\left\langle \mathcal{S}\left( f\right) ,u\right\rangle _{F}=\frac{1}{2}%
\cdot \Gamma \left( \rho ,Id_{M}\right) u\left( f\right) ,~\forall u\in
\Gamma \left( F,\nu ,M\right)%
\end{array}%
\end{equation*}%
such that the following properties are satisfied:

\noindent $\qquad CA_{1}$. the equality holds good\emph{\ }%
\begin{equation*}
\begin{array}{c}
\left[ u,f\cdot v\right] _{F}=f\left[ u,v\right] _{F}+\Gamma \left( \rho
,Id_{M}\right) u\left( f\right) \cdot v-\left\langle u,v\right\rangle
_{F}\cdot \mathcal{S}\left( f\right) ,%
\end{array}%
\end{equation*}%
\qquad \quad\ \ for all $u,v\in \Gamma \left( F,\nu ,M\right) $\ and $f\in 
\mathcal{F}\left( M\right) .$

\medskip $CA_{2}$. $\underset{cycl}{\sum }\left[ \left[ u,v\right] _{F},z%
\right] _{F}=\frac{1}{3}\cdot \mathcal{S}\left( \underset{cycl}{\sum }%
\left\langle \left[ u,v\right] _{F},z\right\rangle _{F}\right) ,$

$CA_{3}$. $\Gamma \left( \rho ,Id_{M}\right) \left[ u,v\right] _{F}=\left[
\Gamma \left( \rho ,Id_{M}\right) u,\Gamma \left( \rho ,Id_{M}\right) v%
\right] _{TM},$ for all $u,v\in \Gamma \left( F,\nu ,M\right) \emph{,}$

$CA_{4}$. $\left\langle \mathcal{S}\left( f\right) ,\mathcal{S}\left(
g\right) \right\rangle _{F}=0,$ for all $f,g\in \mathcal{F}\left( M\right) ,$

$CA_{5}$. $\Gamma \left( \rho ,Id_{M}\right) u\left( \left\langle
v,z\right\rangle _{F}\right) =\left\langle \left[ u,v\right] _{F}+S\left(
\left\langle u,v\right\rangle _{F}\right) ,z\right\rangle _{F}+\left\langle
u,\left[ v,z\right] _{F}+\mathcal{S}\left( \left\langle v,z\right\rangle
_{F}\right) \right\rangle _{F},$ for all $u,v,z\in \Gamma \left( F,\nu
,M\right) \emph{,}$

then the $5$-tuple\emph{\ }%
\begin{equation}
\begin{array}{c}
\left( \left( F,\nu ,M\right) ,\left[ ,\right] _{F},\left( \rho
,Id_{M}\right) ,\left\langle ,\right\rangle _{F},\mathcal{S}\right)%
\end{array}
\label{eq2}
\end{equation}%
\emph{\ is a Courant algebroid}\textbf{.} The 4-tuple $\left( \left[ ,\right]
_{F},\left( \rho ,Id_{M}\right) ,\left\langle ,\right\rangle _{F},\mathcal{S}%
\right) $ is called\emph{\ Courant algebroid structure.}

Z.J.Liu, A.Weinstein and P.Xu \cite{23}\ proved that any Lie bialgebroid is
a Courant algebroid and some general new problems are proposed. Trying to
give some possible proofs for these problems, a new class of \emph{%
generalized algebroids} are discovered by Paul Popescu in \cite{35}.

If $\left( F,\nu ,M\right) \in \left\vert \mathbf{B}^{\mathbf{v}}\right\vert 
$ so that there exists 
\begin{equation*}
\begin{array}{c}
\left( \rho ,Id_{M}\right) \in \mathbf{B}^{\mathbf{v}}\left( \left( F,\nu
,M\right) ,\left( TM,\tau _{M},M\right) \right) ,%
\end{array}%
\end{equation*}%
a skewsymmetric and $\mathbb{R}$-linear bracket 
\begin{equation*}
\begin{array}{ccc}
\Gamma \left( F,\nu ,M\right) \times \Gamma \left( F,\nu ,M\right) & ^{%
\underrightarrow{~\ \left[ ,\right] _{F}~\ }} & \Gamma \left( F,\nu ,M\right)
\\ 
\left( u,v\right) & \longmapsto & \left[ u,v\right] _{F}%
\end{array}%
\end{equation*}%
and a submodul $\left( \mathcal{M},+,\cdot \right) $ of $\left( \Gamma
\left( F,\nu ,M\right) ,+,\cdot \right) $ such that $\Gamma \left( \rho
,Id_{M}\right) u=0,$ for all $u\in \mathcal{M}$ and the following properties
are satisfied:

\noindent $\qquad GA_{1}$. $\left[ u,f\cdot v\right] _{F}-f\cdot \left[ u,v%
\right] _{F}-\Gamma \left( \rho ,Id_{M}\right) u\left( f\right) \cdot v\in 
\mathcal{M}$ and $\left[ f\cdot u,v\right] _{F}-f\cdot \left[ u,v\right]
_{F}+\Gamma \left( \rho ,Id_{M}\right) u\left( f\right) \cdot v\in \mathcal{M%
},$ for all\emph{\ }$u,v\in \Gamma \left( F,\nu ,M\right) $\emph{\ }and $%
f\in \mathcal{F}\left( M\right) .$

\medskip $GA_{2}$. $\underset{cycl}{\sum }\left[ \left[ u,v\right] _{F},z%
\right] _{F}\in \mathcal{M},$ for all $u,v,z\in \Gamma \left( F,\nu
,M\right) ,$

$GA_{3}$. $\Gamma \left( \rho ,Id_{M}\right) \left[ u,v\right] _{F}=\left[
\Gamma \left( \rho ,Id_{M}\right) u,\Gamma \left( \rho ,Id_{M}\right) v%
\right] _{TM},$ for all\emph{\ }$u,v\in \Gamma \left( F,\nu ,M\right) \emph{,%
}$

$GA_{4}$. $\left[ u,v\right] _{F}\in \mathcal{M},$ whenever $u$ or $v$ are
in $\mathcal{M},$

then the $4$-tuple%
\begin{equation}
\begin{array}{c}
\left( \left( F,\nu ,M\right) ,\left[ ,\right] _{F},\left( \rho
,Id_{M}\right) ,\mathcal{M}\right)%
\end{array}
\label{eq3}
\end{equation}%
\emph{\ is a generalized algebroid}$\mathbf{.}$ The triple $\left( \left[ ,%
\right] _{F},\left( \rho ,Id_{M}\right) ,\mathcal{M}\right) $ is called\emph{%
\ generalized algebroid structure.}

We know that the secret of the Ehresmann connection is given by the diagrams%
\begin{equation}
\begin{array}{c}
\begin{array}[b]{ccccc}
E &  & \left( TM,\left[ ,\right] _{TM}\right) & ^{\underrightarrow{~\ \ \
Id_{TM}~\ \ }} & \left( TM,\left[ ,\right] _{TM}\right) \\ 
~\downarrow \pi &  & ~\ \ \downarrow \tau _{M} &  & ~\ \ \downarrow \tau _{M}
\\ 
M & ^{\underrightarrow{~\ \ \ Id_{M}~\ \ }} & M & ^{\underrightarrow{~\ \ \
Id_{M}~\ \ }} & M%
\end{array}%
\end{array}
\label{eq4}
\end{equation}%
where $\left( E,\pi ,M\right) $ is a fiber bundle and $\left( \left( TM,\tau
_{M},M\right) ,\left[ ,\right] _{TM},\left( Id_{TM},Id_{M}\right) \right) $
is the standard Lie algebroid.

First, there appeared the idea of changing the standard Lie algebroid with
an arbitrary Lie algebroid as in the diagrams%
\begin{equation}
\begin{array}{c}
\begin{array}[b]{ccccc}
E &  & \left( F,\left[ ,\right] _{F}\right) & ^{\underrightarrow{~\ \ \ \rho
\ \ \ \ }} & \left( TM,\left[ ,\right] _{TM}\right) \\ 
~\downarrow \pi &  & ~\downarrow \nu &  & ~\ \ \downarrow \tau _{M} \\ 
M & ^{\underrightarrow{~\ \ \ Id_{M}~\ \ }} & M & ^{\underrightarrow{~\ \ \
Id_{M}~\ \ }} & M%
\end{array}%
\end{array}
\label{eq5}
\end{equation}

Secondly, there appeared the idea of changing in the previous diagrams the
identities morphisms with arbitrary $\mathbf{Man}$-isomorphisms $h$ and $%
\eta $ as in the diagrams%
\begin{equation}
\begin{array}{c}
\begin{array}[b]{ccccccc}
E &  & \left( F,\left[ ,\right] _{F,h}\right) & ^{\underrightarrow{~\ \ \
\rho \ \ \ \ }} & \left( TM,\left[ ,\right] _{TM}\right) & ^{%
\underrightarrow{~\ \ \ Th\ \ \ \ }} & \left( TN,\left[ ,\right] _{TN}\right)
\\ 
~\downarrow \pi &  & ~\downarrow \nu &  & ~\ \ \downarrow \tau _{M} &  & ~\
\ \downarrow \tau _{N} \\ 
M & ^{\underrightarrow{~\ \ \ h~\ \ }} & N & ^{\underrightarrow{~\ \ \ \eta
~\ \ }} & M & ^{\underrightarrow{~\ \ \ h~\ \ }} & N%
\end{array}%
\end{array}
\label{eq6}
\end{equation}%
where 
\begin{equation*}
\left( \rho ,\eta \right) \in \mathbf{B}^{\mathbf{v}}\left( \left( F,\nu
,M\right) ,\left( TM,\tau _{M},M\right) \right)
\end{equation*}%
and 
\begin{equation*}
\begin{array}{ccc}
\Gamma \left( F,\nu ,N\right) \times \Gamma \left( F,\nu ,N\right) & ^{%
\underrightarrow{~\ \ \left[ ,\right] _{F,h}~\ \ }} & \Gamma \left( F,\nu
,N\right) \\ 
\left( u,v\right) & \longmapsto & \ \left[ u,v\right] _{F,h}%
\end{array}%
\end{equation*}%
is an operation with the following properties:

$GLA_{1}$. \emph{the equality holds good }%
\begin{equation*}
\begin{array}{c}
\left[ u,f\cdot v\right] _{F,h}=f\left[ u,v\right] _{F,h}+\Gamma \left(
Th\circ \rho ,h\circ \eta \right) \left( u\right) f\cdot v,%
\end{array}%
\end{equation*}%
\qquad \quad\ \ \emph{for all }$u,v\in \Gamma \left( F,\nu ,N\right) $\emph{%
\ and} $f\in \mathcal{F}\left( N\right) .$

\medskip $GLA_{2}$. \emph{the }$4$\emph{-tuple} $\left( \Gamma \left( F,\nu
,N\right) ,+,\cdot ,\left[ ,\right] _{F,h}\right) $ \emph{is a Lie} $%
\mathcal{F}\left( N\right) $\emph{-algebra,}

$GLA_{3}$. \emph{the }$\mathbf{Mod}$\emph{-morphism }$\Gamma \left( Th\circ
\rho ,h\circ \eta \right) $\emph{\ is a }$\mathbf{LieAlg}$\emph{-morphism of 
}%
\begin{equation*}
\left( \Gamma \left( F,\nu ,N\right) ,+,\cdot ,\left[ ,\right] _{F,h}\right)
\end{equation*}%
\emph{\ source and }%
\begin{equation*}
\left( \Gamma \left( TN,\tau _{N},N\right) ,+,\cdot ,\left[ ,\right]
_{TN}\right)
\end{equation*}%
\emph{target.}

We say that \emph{the triple}%
\begin{equation}
\begin{array}{c}
\left( \left( F,\nu ,N\right) ,\left[ ,\right] _{F,h},\left( \rho ,\eta
\right) \right)%
\end{array}
\label{eq7}
\end{equation}%
\emph{is a generalized Lie algebroid. }The couple $\left( \left[ ,\right]
_{F,h},\left( \rho ,\eta \right) \right) $ is called \emph{generalized Lie
algebroid structure.}

\begin{quotation}
{\small So, we extend the notion of Lie algebroid from one base manifold to
a pair of diffeomorphic base manifolds and we obtain the notion of
generalized Lie algebroid via Ehresmann connections theory, independent of
the pevious generalizations of Lie algebroids appeared in literature. In the
following particular case, }$\left( \eta ,h\right) =\left(
Id_{M},Id_{M}\right) ,${\small \ we obtain the definition of Lie algebroid.
(see also }\cite{1}{\small )}
\end{quotation}

We can define the set of morphisms of 
\begin{equation*}
\left( \left( F,\nu ,N\right) ,\left[ ,\right] _{F,h},\left( \rho ,\eta
\right) \right)
\end{equation*}%
source and 
\begin{equation*}
\left( \left( F^{\prime },\nu ^{\prime },N^{\prime }\right) ,\left[ ,\right]
_{F^{\prime },h^{\prime }},\left( \rho ^{\prime },\eta ^{\prime }\right)
\right)
\end{equation*}%
target as being the set 
\begin{equation*}
\begin{array}{c}
\left\{ \left( \varphi ,\varphi _{0}\right) \in \mathbf{B}^{\mathbf{v}%
}\left( \left( F,\nu ,N\right) ,\left( F^{\prime },\nu ^{\prime },N^{\prime
}\right) \right) \right\}%
\end{array}%
\end{equation*}%
such that $\varphi _{0}\in Iso_{\mathbf{Man}}\left( N,N^{\prime }\right) $
and the $\mathbf{Mod}$-morphism $\Gamma \left( \varphi ,\varphi _{0}\right) $
is a $\mathbf{LieAlg}$-morphism of 
\begin{equation*}
\left( \Gamma \left( F,\nu ,N\right) ,+,\cdot ,\left[ ,\right] _{F,h}\right)
\end{equation*}%
source and 
\begin{equation*}
\left( \Gamma \left( F^{\prime },\nu ^{\prime },N^{\prime }\right) ,+,\cdot
, \left[ ,\right] _{F^{\prime },h^{\prime }}\right)
\end{equation*}%
target.

So, we can discuss about \emph{the category }$\mathbf{GLA}$\emph{\ of
generalized Lie algebroids.} We remark that $\mathbf{GLA}$ is a subcategory
of the category $\mathbf{B}^{\mathbf{v}}.$

\begin{example}
\label{e1}\textrm{Let }$M,N\in \left\vert \mathbf{Man}\right\vert ,$\textrm{%
\ }$h\in Iso_{\mathbf{Man}}\left( M,N\right) $\textrm{\ and }$\eta \in Iso_{%
\mathbf{Man}}\left( N,M\right) $\textrm{\ be. Using the tangent }$\mathbf{B}%
^{\mathbf{v}}$\textrm{-morphism }$\left( T\eta ,\eta \right) $\textrm{\ and
the operation }%
\begin{equation*}
\begin{array}{ccc}
\Gamma \left( TN,\tau _{N},N\right) \times \Gamma \left( TN,\tau
_{N},N\right) & ^{\underrightarrow{~\ \ \left[ ,\right] _{TN,h}~\ \ }} & 
\Gamma \left( TN,\tau _{N},N\right) \\ 
\left( u,v\right) & \longmapsto & \ \left[ u,v\right] _{TN,h}%
\end{array}%
\end{equation*}%
\textrm{where }%
\begin{equation*}
\left[ u,v\right] _{TN,h}=\Gamma \left( T\left( h\circ \eta \right)
^{-1},\left( h\circ \eta \right) ^{-1}\right) \left( \left[ \Gamma \left(
T\left( h\circ \eta \right) ,h\circ \eta \right) u,\Gamma \left( T\left(
h\circ \eta \right) ,h\circ \eta \right) v\right] _{TN}\right) ,
\end{equation*}%
\textrm{for any }$u,v\in \Gamma \left( TN,\tau _{N},N\right) $\textrm{, we
obtain that}%
\begin{equation*}
\begin{array}{c}
\left( \left( TN,\tau _{N},N\right) ,\left( T\eta ,\eta \right) ,\left[ ,%
\right] _{TN,h}\right)%
\end{array}%
\end{equation*}%
\textrm{is a generalized Lie algebroid.} (see \cite{1})
\end{example}

For any $\mathbf{Man}$-isomorphisms $\eta $ and $h,$ new and interesting
generalized Lie algebroid structures for the tangent vector bundle $\left(
TN,\tau _{N},N\right) $ are obtained$.$ For any base $\left\{ t_{\alpha
},~\alpha \in \overline{1,m}\right\} $ of the module of sections $\left(
\Gamma \left( TN,\tau _{N},N\right) ,+,\cdot \right) $ we obtain the
structure functions%
\begin{equation*}
\begin{array}{c}
L_{\alpha \beta }^{\gamma }=\left( \theta _{\alpha }^{i}\frac{\partial
\theta _{\beta }^{j}}{\partial x^{i}}-\theta _{\beta }^{i}\frac{\partial
\theta _{\alpha }^{j}}{\partial x^{i}}\right) \tilde{\theta}_{j}^{\gamma
},~\alpha ,\beta ,\gamma \in \overline{1,m}%
\end{array}%
\end{equation*}%
where 
\begin{equation*}
\theta _{\alpha }^{i},~i,\alpha \in \overline{1,m}
\end{equation*}%
are real local functions so that 
\begin{equation*}
\begin{array}{c}
\Gamma \left( T\left( h\circ \eta \right) ,h\circ \eta \right) \left(
t_{\alpha }\right) =\theta _{\alpha }^{i}\frac{\partial }{\partial x^{i}}%
\end{array}%
\end{equation*}%
and 
\begin{equation*}
\tilde{\theta}_{j}^{\gamma },~i,\gamma \in \overline{1,m}
\end{equation*}%
are real local functions so that 
\begin{equation*}
\begin{array}{c}
\Gamma \left( T\left( h\circ \eta \right) ^{-1},\left( h\circ \eta \right)
^{-1}\right) \left( \frac{\partial }{\partial x^{j}}\right) =\tilde{\theta}%
_{j}^{\gamma }t_{\gamma }.%
\end{array}%
\end{equation*}

In particular, using arbitrary isometries (symmetries, translations,
rotations,...) for the Euclidean $3$-dimensional space $\Sigma ,$ and
arbitrary basis for the module of sections we obtain a lot of generalized
Lie algebroid structures for the tangent vector bundle $\left( T\Sigma ,\tau
_{\Sigma },\Sigma \right) $.

The problem to develop a Lagrangian formalism directly on a Lie algebroid
similar to Klein's formalism for ordinary Lagrangian Mechanics \cite{19} was
proposed by Alan Weinstein in his work \cite{40}.

In that work, the author gave a theory of Lagrangian systems on Lie
algebroids and obtained the Euler-Lagrange equations using the dual of a Lie
algebroid and the Legendre transformation defined by a regular Lagrangian.

Paulette Liberman \cite{22}\ showed later that such a formalism is not
possible if one considers the tangent bundle of a Lie algebroid as space for
developing the theory. Eduardo Martinez \cite{25}\ gave a full description
using the notion of prolongation of a Lie algebroid.

Recently, there has been a lot of activity around mechanical systems on Lie
algebroids (due to its unifying view many different problems) and some
extensions to more general structures \cite{7,8,21,26,33,34,38,39}.

Recently, a unified approach of constraining implicit Lagrangian and
Hamiltonian systems on \emph{Dirac algebroid} was presented in \cite{17}. A
Dirac algebroid on a vector bundle $\left( E,\pi ,M\right) $ has been viewed
as a double vector bundle morphism 
\begin{equation*}
\varepsilon :T^{\ast }E~\ ^{\underrightarrow{~\ \ \ \ }}~\ T\overset{\ast }{E%
}
\end{equation*}%
covering the identity on $\left( \overset{\ast }{E},\overset{\ast }{\pi }%
,M\right) .$ (see \cite{15,16})

The double vector bundles presented in \cite{20,31,32} has been applied to
geometric formalism of Analytical Mechanics, including nonholonomic
constrains in \cite{13,14}.

There are obtained $\varepsilon $ as the composition of the canonical
isomorphism of double vector bundles 
\begin{equation*}
R_{\tau }:T^{\ast }E~\ ^{\underrightarrow{~\ \ \ \ }}~\ T^{\ast }\overset{%
\ast }{E}
\end{equation*}%
and 
\begin{equation*}
\widetilde{\Pi }_{\varepsilon }:T^{\ast }\overset{\ast }{E}~\ ^{%
\underrightarrow{~\ \ \ \ }}~\ T\overset{\ast }{E}.
\end{equation*}

An application of this aproach to Analytical Mechanics so that $\left( E,\pi
,M\right) $ plays the role of kinematic configurations, is based on some
ideas of Tulkzyjew presented in \cite{36,37}.

We know that the use of connections for the geometry of systems of second
order differential equations has been proposed by M. Crampin \cite{10} and
C. Griffone \cite{18}.

Using the generalized Lie algebroids, a new class of (linear) connections in
Ehresmann sense on the base of the \emph{Lie algebroid generalized tangent
bundle }is presented in Section $2$\emph{. }(see also \cite{1}) Using this $%
\left( \rho ,\eta \right) $-connection theory, the\emph{\ adapted basis} are
presented$.$

The lift $%
\begin{array}{ccc}
I & ^{\underrightarrow{~\ \dot{c}~\ }} & E%
\end{array}%
$of a curve $%
\begin{array}{ccc}
I & ^{\underrightarrow{~\ c~\ }} & M%
\end{array}%
$ is presented and studied in the Section $3.$ Section $4$ studies
remarkable endomorphisms of the Lie algebra of sections of the Lie algebroid
generalized tangent bundle and in Section $5$ a new class of linear
connections, called by use \emph{distinguished linear }$\left( \rho ,\eta
\right) $\emph{-connections, }is presented.

We know that the geometry of Lagrange space is the geometry of its canonical
semispray and the associated nonlinear connection. It has been developed by
R. Miron and M. Anastasiei \cite{28}$.$ Using techniques that are specific
to Lagrange geometry, R. Miron \cite{27} introduces and investigates some
geometric aspects of nonconservative mechanical systems by means of the
corresponding semispray and nonlinear connection.

When the external force field depends on both position and velocity, the
geometry of nonconservative mechanical systems was rigorously ivestigated by
Klein \cite{19} and Godbillon \cite{12}. In paper \cite{19} Klein introduces
a second rank skew symmetric tensor as external force tensor of a
nonconservative mechanical system.

Some aspects regarding first integrals for nonconservative mechanical
systems were investigated by Djukic and Vijanovic \cite{11} and Cantrijn 
\cite{9}. I. Buc\u{a}taru and R. Miron \cite{6} extend the geometric
investigations of nonconservative mechanical systems using the evolution
nonlinear connection and the almost syplectic structure of the
nonconservative mechanical system.

Finding a new solution for the Weinstein's Problem in the general framework
of generalized Lie algebroids, a new class of \emph{mechanical systems}
called by use, \emph{mechanical }$\left( \rho ,\eta \right) $\emph{-systems,
Lagrange mechanical }$\left( \rho ,\eta \right) $\emph{-systems} or \emph{%
Finsler mechanical }$\left( \rho ,\eta \right) $\emph{-systems} is presented
in Section $6$. In Section $7$ we study the \emph{\ canonical }$\left( \rho
,\eta \right) $\emph{-semispray associated to a mechanical }$\left( \rho
,\eta \right) $\emph{-system.} Finally, an important result about the \emph{%
canonical }$\left( \rho ,\eta \right) $\emph{-spray associated to a
mechanical }$\left( \rho ,\eta \right) $\emph{-system }is presented\emph{. }%
In particular, we obtain similar results with I. Buc\u{a}taru and R. Miron 
\cite{5,6}.

The Section $8$ of this paper is dedicated to study the geometry of \emph{%
Lagrange mechanical }$\left( \rho ,\eta \right) $\emph{-systems.} We
determine and we study the $\left( \rho ,\eta \right) $\emph{-semispray
associated to a regular Lagrangian }$L$\emph{\ and external force} $F_{e}$
which are applied on the total space of a generalized Lie algebroid.

The equations of Euler-Lagrange type are derived. In particular, using the
Lie algebroid generalized tangent bundle of a Lie algebroid, we obtain a new
solution for the \textbf{Weinstein's Problem, }different by the Mart\'{\i}%
nez's solution\textbf{\ }\cite{25}. (see also \cite{3,4})

Finally, we obtain that the\ integral curves of the canonical $\left( \rho
,\eta \right) $-semispray associated to Lagrange mechanical $\left( \rho
,\eta \right) $-system $\left( \left( E,\pi ,M\right) ,F_{e},L\right) $\ and
from locally invertible\emph{\ }$\mathbf{B}^{\mathbf{v}}$-morphism $\left(
g,h\right) $\ are the $\left( g,h\right) $-lifts solutions of the \emph{%
equations of Euler-Lagrange type }\eqref{eq107}.

In particular, If $F$\ is a Finsler fundamental function, then the\
geodesics on the manifold $M$\ are the curves such that the components of
their $\left( g,h\right) $-lifts are\ solutions for \emph{equations of
Euler-Lagrange type} \eqref{eq107}.

Notice that our theory of \emph{mechanical systems} is a progress because,
in particular, we obtain all previous mechanical systems presented in
literature.

As there exists (see: \cite{1}) a construction of the generalized tangent
bundle of a dual vector bundle and a connection theory in the dual case, we
ask:

\begin{quotation}
{\small - Can we develop an alternative approach for the dual case and can
we find a transformation of Legendre type with whom to show the equivalence
between these Mechanical Systems?}
\end{quotation}

The answer, in the next paper.

\section{$\left( \protect\rho ,\protect\eta \right) $-connections and
adapted basis\label{sfive}}

\ \ 

We consider the diagram:%
\begin{equation}
\begin{array}{rcl}
E &  & \left( F,\left[ ,\right] _{F,h},\left( \rho ,\eta \right) \right) \\ 
\pi \downarrow &  & ~\downarrow \nu \\ 
M & ^{\underrightarrow{~\ \ \ \ h~\ \ \ \ }} & ~\ N%
\end{array}
\label{eq8}
\end{equation}%
where $\left( E,\pi ,M\right) \in \left\vert \mathbf{B}^{\mathbf{v}%
}\right\vert $ and $\left( \left( F,\nu ,N\right) ,\left[ ,\right]
_{F,h},\left( \rho ,\eta \right) \right) \in \left\vert \mathbf{GLA}%
\right\vert $.

We obtain the $\mathbf{B}^{\mathbf{v}}$-morphism%
\begin{equation}
\begin{array}{ccc}
~\ \ \ \ \ \ \ \ \ \ \ \ \ \pi ^{\ast }\ \left( h^{\ast }F\right) & 
\hookrightarrow & F \\ 
\pi ^{\ast }\left( h^{\ast }\nu \right) \downarrow &  & ~\downarrow \nu \\ 
~\ \ \ \ \ \ \ \ \ \ \ \ \ \ E & ^{\underrightarrow{~\ \ h\circ \pi ~\ \ }}
& M%
\end{array}
\label{eq9}
\end{equation}

We take $\left( x^{i},y^{a}\right) $ as canonical local coordinates on $%
\left( E,\pi ,M\right) ,$ where $i\in \overline{1,m}$ and $a\in \overline{1,r%
}.$ Let 
\begin{equation*}
\left( x^{i},y^{a}\right) \longrightarrow \left( x^{i%
{\acute{}}%
}\left( x^{i}\right) ,y^{a%
{\acute{}}%
}\left( x^{i},y^{a}\right) \right)
\end{equation*}%
be a change of coordinates on $\left( E,\pi ,M\right) $. Then the
coordinates $y^{a}$ change to $y^{a%
{\acute{}}%
}$ according to the rule:%
\begin{equation}
\begin{array}{c}
y^{a%
{\acute{}}%
}=\displaystyle M_{a}^{a%
{\acute{}}%
}y^{a}.%
\end{array}
\label{eq10}
\end{equation}

\begin{theorem}
\label{t2}\emph{Let} $\Big({\overset{\pi ^{\ast }\ \left( h^{\ast }F\right) }%
{\rho }},Id_{E}\Big)$ \emph{be the }$\mathbf{B}^{\mathbf{v}}$\emph{-morphism
of }$\left( \pi ^{\ast }\ \left( h^{\ast }F\right) ,\pi ^{\ast }\left(
h^{\ast }\nu \right) ,E\right) $\ \emph{source and} $\left( TE,\tau
_{E},E\right) $\ \emph{target, where}%
\begin{equation}
\begin{array}{rcl}
\ \pi ^{\ast }\ \left( h^{\ast }F\right) & ^{\underrightarrow{\overset{\pi
^{\ast }\ \left( h^{\ast }F\right) }{\rho }}} & TE \\ 
\displaystyle Z^{\alpha }T_{\alpha }\left( u_{x}\right) & \longmapsto & %
\displaystyle\left( Z^{\alpha }\cdot \rho _{\alpha }^{i}\circ h\circ \pi
\right) \frac{\partial }{\partial x^{i}}\left( u_{x}\right)%
\end{array}
\label{eq11}
\end{equation}%
\emph{Using the operation} 
\begin{equation*}
\begin{array}{ccc}
\Gamma \left( \pi ^{\ast }\ \left( h^{\ast }F\right) ,\pi ^{\ast }\left(
h^{\ast }\nu \right) ,E\right) ^{2} & ^{\underrightarrow{~\ \ \left[ ,\right]
_{\pi ^{\ast }\ \left( h^{\ast }F\right) }~\ \ }} & \Gamma \left( \pi ^{\ast
}\ \left( h^{\ast }F\right) ,\pi ^{\ast }\left( h^{\ast }\nu \right)
,E\right)%
\end{array}%
\end{equation*}%
\emph{defined by}%
\begin{equation}
\begin{array}{ll}
\left[ T_{\alpha },T_{\beta }\right] _{\pi ^{\ast }\ \left( h^{\ast
}F\right) } & =L_{\alpha \beta }^{\gamma }\circ h\circ \pi \cdot T_{\gamma },%
\vspace*{1mm} \\ 
\left[ T_{\alpha },fT_{\beta }\right] _{\pi ^{\ast }\ \left( h^{\ast
}F\right) } & \displaystyle=fL_{\alpha \beta }^{\gamma }\circ h\circ \pi
T_{\gamma }+\rho _{\alpha }^{i}\circ h\circ \pi \frac{\partial f}{\partial
x^{i}}T_{\beta },\vspace*{1mm} \\ 
\left[ fT_{\alpha },T_{\beta }\right] _{\pi ^{\ast }\ \left( h^{\ast
}F\right) } & =-\left[ T_{\beta },fT_{\alpha }\right] _{\pi ^{\ast }\ \left(
h^{\ast }F\right) },%
\end{array}
\label{eq12}
\end{equation}%
\emph{for any} $f\in \mathcal{F}\left( E\right) ,$ \emph{it results that} 
\begin{equation*}
\begin{array}{c}
\left( \left( \pi ^{\ast }\ \left( h^{\ast }F\right) ,\pi ^{\ast }\left(
h^{\ast }\nu \right) ,E\right) ,\left[ ,\right] _{\pi ^{\ast }\ \left(
h^{\ast }F\right) },\left( \overset{\pi ^{\ast }\ \left( h^{\ast }F\right) }{%
\rho },Id_{E}\right) \right)%
\end{array}%
\end{equation*}%
\emph{is a Lie algebroid which is called the pull-back Lie algebroid of the
generalized Lie algebroid }$\left( \left( F,\nu ,N\right) ,\left[ ,\right]
_{F,h},\left( \rho ,\eta \right) \right) .$
\end{theorem}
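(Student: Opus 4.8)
The plan is to verify the three defining properties of a Lie algebroid for the triple in the statement — the Leibniz rule $LA_1$, the Lie-algebra property $LA_2$, and the morphism property $LA_3$ — reducing each check to the local basis $\{T_\alpha\}$ of $\Gamma\big(\pi^{\ast}(h^{\ast}F),\pi^{\ast}(h^{\ast}\nu),E\big)$ induced by a local basis $\{t_\alpha\}$ of $\Gamma(F,\nu,N)$, and then transporting the corresponding identities for the structure functions $L_{\alpha\beta}^{\gamma}$ and the anchor components $\rho_{\alpha}^{i}$ of $\big((F,\nu,N),[,]_{F,h},(\rho,\eta)\big)$ along the smooth map $h\circ\pi\colon E\to N$.

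First I would check that $\Big(\overset{\pi^{\ast}(h^{\ast}F)}{\rho},Id_{E}\Big)$ is a well-defined vector bundle morphism and that $[,]_{\pi^{\ast}(h^{\ast}F)}$ is a well-defined, skew-symmetric, $\mathbb{R}$-bilinear operation. Since every section is uniquely of the form $Z^{\alpha}T_{\alpha}$ with $Z^{\alpha}\in\mathcal{F}(E)$, the formulas \eqref{eq11}, \eqref{eq12} and $\mathbb{R}$-bilinearity determine both, yielding $[Z^{\alpha}T_{\alpha},W^{\beta}T_{\beta}]_{\pi^{\ast}(h^{\ast}F)}=Z^{\alpha}W^{\beta}(L_{\alpha\beta}^{\gamma}\circ h\circ\pi)T_{\gamma}+Z^{\alpha}(\rho_{\alpha}^{i}\circ h\circ\pi)\frac{\partial W^{\beta}}{\partial x^{i}}T_{\beta}-W^{\beta}(\rho_{\beta}^{i}\circ h\circ\pi)\frac{\partial Z^{\alpha}}{\partial x^{i}}T_{\alpha}$; independence of the chosen basis of $F$ and of the adapted coordinates on $E$ follows from \eqref{eq10} together with the transformation laws of $\rho_{\alpha}^{i}$ and $L_{\alpha\beta}^{\gamma}$ on $N$, and skew-symmetry follows from $L_{\alpha\beta}^{\gamma}=-L_{\beta\alpha}^{\gamma}$. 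Axiom $LA_1$ is then immediate from the explicit formula above and the definition of the anchor.

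The substantive steps are $LA_3$ and the Jacobi identity in $LA_2$, and here the order matters. Using $LA_1$, skew-symmetry and the $\mathcal{F}(E)$-linearity of $\overset{\pi^{\ast}(h^{\ast}F)}{\rho}$, the defect $(u,v)\mapsto\overset{\pi^{\ast}(h^{\ast}F)}{\rho}[u,v]_{\pi^{\ast}(h^{\ast}F)}-\big[\overset{\pi^{\ast}(h^{\ast}F)}{\rho}u,\overset{\pi^{\ast}(h^{\ast}F)}{\rho}v\big]_{TE}$ is $\mathcal{F}(E)$-bilinear, so $LA_3$ reduces to the identity $(L_{\alpha\beta}^{\gamma}\circ h\circ\pi)(\rho_{\gamma}^{i}\circ h\circ\pi)=(\rho_{\alpha}^{j}\circ h\circ\pi)\frac{\partial(\rho_{\beta}^{i}\circ h\circ\pi)}{\partial x^{j}}-(\rho_{\beta}^{j}\circ h\circ\pi)\frac{\partial(\rho_{\alpha}^{i}\circ h\circ\pi)}{\partial x^{j}}$ on $\{T_\alpha\}$; since all functions involved are pulled back from $N$ and the fibre coordinates $y^{a}$ do not occur, the derivatives $\partial/\partial x^{j}$ act through the base $M$ only, and, $h$ being a diffeomorphism, the chain rule identifies this with the pull-back along $h\circ\pi$ of the local structure equation of $F$ encoding $GLA_3$. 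Once $LA_3$ is available, $LA_1$ and skew-symmetry make the Jacobiator $(u,v,w)\mapsto\sum_{cycl}\big[[u,v]_{\pi^{\ast}(h^{\ast}F)},w\big]_{\pi^{\ast}(h^{\ast}F)}$ an $\mathcal{F}(E)$-trilinear map, so the Jacobi identity reduces to its evaluation on $\{T_\alpha\}$, which becomes $\sum_{cycl}\big[(\rho_{\alpha}^{i}\circ h\circ\pi)\frac{\partial(L_{\beta\gamma}^{\delta}\circ h\circ\pi)}{\partial x^{i}}+(L_{\beta\gamma}^{\varepsilon}\circ h\circ\pi)(L_{\alpha\varepsilon}^{\delta}\circ h\circ\pi)\big]=0$, i.e. the pull-back along $h\circ\pi$ of the local Jacobi identity for $[,]_{F,h}$ coming from $GLA_2$ and $GLA_3$.

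I expect the principal difficulty to be organizational rather than conceptual: keeping the twist by $h$ and $\eta$, and the passage between the two base manifolds $M$ and $N$, correct through the chain rule, so that the structure equations of $F$ valid on $N$ reassemble — after precomposition with $h\circ\pi$ and lifting through the vector bundle projection $\pi$ — into the identities needed on $E$. With the conventions of the pull-back bundle $\pi^{\ast}(h^{\ast}F)$ and of its induced basis and adapted coordinates fixed as in \eqref{eq9}, this is a mechanical verification, and collecting $LA_1$, $LA_2$, $LA_3$ shows that the $4$-tuple is a Lie algebroid, the pull-back Lie algebroid of $\big((F,\nu,N),[,]_{F,h},(\rho,\eta)\big)$.
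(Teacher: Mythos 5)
Your proposal is correct. Note that the paper itself states Theorem \ref{t2} without any proof (the construction is only referenced back to \cite{1}), so there is nothing to compare against line by line; what you propose is the natural verification one would expect: reduce each axiom to the basis $\left\{ T_{\alpha }\right\} $, use the tensoriality of the defects (the anchor-compatibility defect is $\mathcal{F}\left( E\right) $-bilinear once $LA_{1}$ holds, and the Jacobiator is $\mathcal{F}\left( E\right) $-trilinear once $LA_{1}$ and $LA_{3}$ hold, so your ordering of the checks is the right one), and observe that the resulting basis identities are exactly the local structure equations of $\left( \left( F,\nu ,N\right) ,\left[ ,\right] _{F,h},\left( \rho ,\eta \right) \right) $ pulled back along $h\circ \pi $, with the chain rule absorbing the twist by $h$ and $\eta $. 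The only cosmetic slip is your appeal to \eqref{eq10} for basis-independence of the bracket: the change of fibre coordinates $y^{a}$ on $E$ is irrelevant here, and what is actually needed is the transformation law of the basis $\left\{ t_{\alpha }\right\} $ of $\Gamma \left( F,\nu ,N\right) $ (the matrices $\Lambda _{\alpha }^{\alpha '}$) together with the induced transformations of $L_{\alpha \beta }^{\gamma }$ and $\rho _{\alpha }^{i}$; this does not affect the substance of the argument.
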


If $z=z^{\alpha }t_{\alpha }\in \Gamma \left( F,\nu ,N\right) ,$ then we
obtain the section%
\begin{equation*}
Z=\left( z^{\alpha }\circ h\circ \pi \right) T_{\alpha }\in \Gamma \left(
\pi ^{\ast }\left( h^{\ast }F\right) ,\pi ^{\ast }\left( h^{\ast }\nu
\right) ,E\right)
\end{equation*}%
so that $Z\left( u_{x}\right) =z\left( h\left( x\right) \right) ,$ for any $%
u_{x}\in \pi ^{-1}\left( U{\cap h}^{-1}V\right) .$

Let 
\begin{equation*}
\begin{array}[b]{c}
\left( \partial _{i},\dot{\partial}_{a}\right) \overset{put}{=}\left( \frac{%
\partial }{\partial x^{i}},\frac{\partial }{\partial y^{a}}\right)%
\end{array}%
\end{equation*}
be the base sections for the Lie $\mathcal{F}\left( E\right) $-algebra 
\begin{equation*}
\left( \Gamma \left( TE,\tau _{E},E\right) ,+,\cdot ,\left[ ,\right]
_{TE}\right) .
\end{equation*}

For any sections%
\begin{equation*}
\begin{array}{c}
Z^{\alpha }T_{\alpha }\in \Gamma \left( \pi ^{\ast }\left( h^{\ast }F\right)
,\pi ^{\ast }\left( h^{\ast }F\right) ,E\right)%
\end{array}%
\end{equation*}%
and%
\begin{equation*}
\begin{array}{c}
Y^{a}\dot{\partial}_{a}\in \Gamma \left( VTE,\tau _{E},E\right)%
\end{array}%
\end{equation*}%
we obtain the section 
\begin{equation*}
\begin{array}{c}
Z^{\alpha }\tilde{\partial}_{\alpha }+Y^{a}\overset{\cdot }{\tilde{\partial}}%
_{a}=:Z^{\alpha }\left( T_{\alpha }\oplus \left( \rho _{\alpha }^{i}\circ
h\circ \pi \right) \partial _{i}\right) +Y^{a}\left( 0_{\pi ^{\ast }\left(
h^{\ast }F\right) }\oplus \dot{\partial}_{a}\right) \vspace*{1mm} \\ 
=Z^{\alpha }T_{\alpha }\oplus \left( Z^{\alpha }\left( \rho _{\alpha
}^{i}\circ h\circ \pi \right) \partial _{i}+Y^{a}\dot{\partial}_{a}\right)
\in \Gamma \left( \pi ^{\ast }\left( h^{\ast }F\right) \oplus TE,\overset{%
\oplus }{\pi },E\right) .%
\end{array}%
\end{equation*}

Since we have 
\begin{equation*}
\begin{array}{c}
Z^{\alpha }\displaystyle\tilde{\partial}_{\alpha }+Y^{a}\overset{\cdot }{%
\tilde{\partial}}_{a}=0 \\ 
\Updownarrow \\ 
Z^{\alpha }T_{\alpha }=0~\wedge Z^{\alpha }\left( \rho _{\alpha }^{i}\circ
h\circ \pi \right) \partial _{i}+Y^{a}\dot{\partial}_{a}=0,%
\end{array}%
\end{equation*}%
it implies $Z^{\alpha }=0,~\alpha \in \overline{1,p}$ and $Y^{a}=0,~a\in 
\overline{1,r}.$

Therefore, the sections $\tilde{\partial}_{1},...,\tilde{\partial}_{p},%
\overset{\cdot }{\tilde{\partial}}_{1},...,\overset{\cdot }{\tilde{\partial}}%
_{r}$ are linearly independent.\smallskip

We consider the vector subbundle $\left( \left( \rho ,\eta \right) TE,\left(
\rho ,\eta \right) \tau _{E},E\right) $ of the vector bundle\break $\left(
\pi ^{\ast }\left( h^{\ast }F\right) \oplus TE,\overset{\oplus }{\pi }%
,E\right) ,$ for which the $\mathcal{F}\left( E\right) $-module of sections
is the $\mathcal{F}\left( E\right) $-submodule of $\left( \Gamma \left( \pi
^{\ast }\left( h^{\ast }F\right) \oplus TE,\overset{\oplus }{\pi },E\right)
,+,\cdot \right) ,$ generated by the set of sections $\left( \tilde{\partial}%
_{\alpha },\overset{\cdot }{\tilde{\partial}}_{a}\right) .$

The base sections $\left( \tilde{\partial}_{\alpha },\overset{\cdot }{\tilde{%
\partial}}_{a}\right) $ will be called the \emph{natural }$\left( \rho ,\eta
\right) $\emph{-base.}

The matrix of coordinate transformation on $\left( \left( \rho ,\eta \right)
TE,\left( \rho ,\eta \right) \tau _{E},E\right) $ at a change of fibred
charts is%
\begin{equation}
\left\Vert 
\begin{array}{cc}
\Lambda _{\alpha }^{\alpha 
{\acute{}}%
}\circ h\circ \pi & 0\vspace*{1mm} \\ 
\left( \rho _{a}^{i}\circ h\circ \pi \right) \displaystyle\frac{\partial
M_{b}^{a%
{\acute{}}%
}\circ \pi }{\partial x_{i}}y^{b} & M_{a}^{a%
{\acute{}}%
}\circ \pi%
\end{array}%
\right\Vert .  \label{eq13}
\end{equation}

Easily we obtain:

\begin{theorem}
\label{t3}\emph{Let} $\left( \tilde{\rho},Id_{E}\right) $\ \emph{be the} $%
\mathbf{B}^{\mathbf{v}}$\emph{-morphism of }$\left( \left( \rho ,\eta
\right) TE,\left( \rho ,\eta \right) \tau _{E},E\right) $\ \emph{source and }%
$\left( TE,\tau _{E},E\right) $\ \emph{target, where}%
\begin{equation}
\begin{array}{rcl}
\left( \rho ,\eta \right) TE\!\!\! & \!\!^{\underrightarrow{\tilde{\ \ \rho
\ \ }}}\!\!\! & \!\!TE\vspace*{2mm} \\ 
\left( Z^{\alpha }\tilde{\partial}_{\alpha }+Y^{a}\overset{\cdot }{\tilde{%
\partial}}_{a}\right) \!(u_{x})\!\!\!\! & \!\!\longmapsto \!\!\! & 
\!\!\left( \!Z^{\alpha }\!\left( \rho _{\alpha }^{i}{\circ }h{\circ }\pi
\!\right) \!\partial _{i}{+}Y^{a}\dot{\partial}_{a}\right) \!(u_{x})\!\!.%
\end{array}%
.  \label{eq14}
\end{equation}%
\emph{Using the operation} 
\begin{equation*}
\begin{array}{ccc}
\Gamma \left( \left( \rho ,\eta \right) TE,\left( \rho ,\eta \right) \tau
_{E},E\right) ^{2} & ^{\underrightarrow{~\ \ \left[ ,\right] _{\left( \rho
,\eta \right) TE}~\ \ }} & \Gamma \left( \left( \rho ,\eta \right) TE,\left(
\rho ,\eta \right) \tau _{E},E\right) 
\end{array}%
\end{equation*}%
\emph{defined by}%
\begin{equation}
\begin{array}{l}
\left[ Z_{1}^{\alpha }\tilde{\partial}_{\alpha }+Y_{1}^{a}\overset{\cdot }{%
\tilde{\partial}}_{a},Z_{2}^{\beta }\tilde{\partial}_{\beta }+Y_{2}^{b}%
\overset{\cdot }{\tilde{\partial}}_{b}\right] _{\left( \rho ,\eta \right) TE}%
\vspace*{1mm} \\ 
\displaystyle=\left[ Z_{1}^{\alpha }T_{\alpha },Z_{2}^{\beta }T_{\beta }%
\right] _{\pi ^{\ast }\left( h^{\ast }F\right) }\oplus \left[ Z_{1}^{\alpha
}\left( \rho _{\alpha }^{i}\circ h\circ \pi \right) \partial _{i}+Y_{1}^{a}%
\dot{\partial}_{a},\right. \vspace*{1mm} \\ 
\hfill \displaystyle\left. Z_{2}^{\beta }\left( \rho _{\beta }^{j}\circ
h\circ \pi \right) \partial _{j}+Y_{2}^{b}\dot{\partial}_{b}\right] _{TE},%
\end{array}%
.  \label{eq15}
\end{equation}%
\emph{for any} $Z_{1}^{\alpha }\tilde{\partial}_{\alpha }+Y_{1}^{a}\overset{%
\cdot }{\tilde{\partial}}_{a}$\emph{\ and }$Z_{2}^{\beta }\tilde{\partial}%
_{\beta }+Y_{2}^{b}\overset{\cdot }{\tilde{\partial}}_{b},$ \emph{we obtain
that the couple }%
\begin{equation*}
\left( \left[ ,\right] _{\left( \rho ,\eta \right) TE},\left( \tilde{\rho}%
,Id_{E}\right) \right) 
\end{equation*}%
\emph{\ is a Lie algebroid structure for the vector bundle }$\left( \left(
\rho ,\eta \right) TE,\left( \rho ,\eta \right) \tau _{E},E\right) .$
\end{theorem}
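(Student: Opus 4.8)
The plan is to verify that $\left( \left[ ,\right] _{\left( \rho ,\eta \right) TE},\left( \tilde{\rho},Id_{E}\right) \right)$ satisfies the three Lie algebroid axioms $LA_{1}$, $LA_{2}$, $LA_{3}$ from the Introduction, exploiting the fact that $\left( \rho ,\eta \right) TE$ sits as a vector subbundle of $\pi ^{\ast }\left( h^{\ast }F\right) \oplus TE$ and that the bracket \eqref{eq15} is literally the restriction of the direct-sum bracket $\left[ ,\right] _{\pi ^{\ast }\left( h^{\ast }F\right) }\oplus \left[ ,\right] _{TE}$. The first point to establish is that $\left[ ,\right] _{\left( \rho ,\eta \right) TE}$ is well defined, i.e. that the right-hand side of \eqref{eq15} actually lands in $\Gamma \left( \left( \rho ,\eta \right) TE,\left( \rho ,\eta \right) \tau _{E},E\right)$ and not merely in the ambient module. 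This is the crux: one computes the bracket of two natural base sections $\tilde{\partial}_{\alpha }=T_{\alpha }\oplus \left( \rho _{\alpha }^{i}\circ h\circ \pi \right) \partial _{i}$ and $\overset{\cdot }{\tilde{\partial}}_{a}=0\oplus \dot{\partial}_{a}$, using the pull-back bracket formulas \eqref{eq12} from Theorem \ref{t2} in the first slot and the canonical bracket on $TE$ in the second slot, and checks that the $TE$-component of $\left[ \tilde{\partial}_{\alpha },\tilde{\partial}_{\beta }\right] _{\left( \rho ,\eta \right) TE}$ is exactly $\left( L_{\alpha \beta }^{\gamma }\circ h\circ \pi \right) \left( \rho _{\gamma }^{i}\circ h\circ \pi \right) \partial _{i}$, which is the $TE$-part of $\left( L_{\alpha \beta }^{\gamma }\circ h\circ \pi \right) \tilde{\partial}_{\gamma }$; here the identity $\rho _{\alpha }^{i}\partial _{x^{i}}\rho _{\beta }^{j}-\rho _{\beta }^{i}\partial _{x^{i}}\rho _{\alpha }^{j}=L_{\alpha \beta }^{\gamma }\rho _{\gamma }^{j}$, coming from $GLA_{3}$ (the morphism-of-brackets property of the anchor of the generalized Lie algebroid), is what makes the cross terms cancel correctly. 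Similarly $\left[ \tilde{\partial}_{\alpha },\overset{\cdot }{\tilde{\partial}}_{b}\right] _{\left( \rho ,\eta \right) TE}=0$ and $\left[ \overset{\cdot }{\tilde{\partial}}_{a},\overset{\cdot }{\tilde{\partial}}_{b}\right] _{\left( \rho ,\eta \right) TE}=0$, so all brackets of base sections remain in the submodule.

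Next I would record the structure equations in the form
\begin{equation*}
\left[ \tilde{\partial}_{\alpha },\tilde{\partial}_{\beta }\right] _{\left( \rho ,\eta \right) TE}=\left( L_{\alpha \beta }^{\gamma }\circ h\circ \pi \right) \tilde{\partial}_{\gamma },\quad \left[ \tilde{\partial}_{\alpha },\overset{\cdot }{\tilde{\partial}}_{b}\right] _{\left( \rho ,\eta \right) TE}=0,\quad \left[ \overset{\cdot }{\tilde{\partial}}_{a},\overset{\cdot }{\tilde{\partial}}_{b}\right] _{\left( \rho ,\eta \right) TE}=0,
\end{equation*}
and then extend to arbitrary sections $Z_{1}^{\alpha }\tilde{\partial}_{\alpha }+Y_{1}^{a}\overset{\cdot }{\tilde{\partial}}_{a}$ by $\mathbb{R}$-bilinearity and the Leibniz rule. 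For $LA_{1}$: a direct expansion, using that $\tilde{\rho}$ acts on a section by $\tilde{\rho}\left( Z^{\alpha }\tilde{\partial}_{\alpha }+Y^{a}\overset{\cdot }{\tilde{\partial}}_{a}\right) =Z^{\alpha }\left( \rho _{\alpha }^{i}\circ h\circ \pi \right) \partial _{i}+Y^{a}\dot{\partial}_{a}$, shows $\left[ X,f\cdot W\right] _{\left( \rho ,\eta \right) TE}=f\left[ X,W\right] _{\left( \rho ,\eta \right) TE}+\Gamma \left( \tilde{\rho},Id_{E}\right) \left( X\right) f\cdot W$; this follows slot-by-slot from the Leibniz rule already known for $\left[ ,\right] _{\pi ^{\ast }\left( h^{\ast }F\right) }$ (Theorem \ref{t2}) and for $\left[ ,\right] _{TE}$. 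Skew-symmetry and $\mathbb{R}$-bilinearity of $\left[ ,\right] _{\left( \rho ,\eta \right) TE}$ are inherited componentwise from the two constituent brackets, so to finish $LA_{2}$ it remains to check the Jacobi identity; this too decomposes as the Jacobi identity in $\pi ^{\ast }\left( h^{\ast }F\right)$ (valid since that is a Lie algebroid) in the first component and the Jacobi identity in $TE$ in the second, and the cross-consistency needed for the $\oplus$-bracket to be associative in the Jacobi sense is guaranteed precisely because the $TE$-component of the first slot of $\tilde{\partial}_{\alpha }$ is the $\tilde{\rho}$-image—i.e. the anchor intertwines the two brackets, which is $LA_{3}$.

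Finally, $LA_{3}$ — that $\tilde{\rho}$ is a morphism onto $\left( \Gamma \left( TE,\tau _{E},E\right) ,\left[ ,\right] _{TE}\right)$ — is checked on base sections: $\left[ \tilde{\rho}\tilde{\partial}_{\alpha },\tilde{\rho}\tilde{\partial}_{\beta }\right] _{TE}=\left[ \left( \rho _{\alpha }^{i}\circ h\circ \pi \right) \partial _{i},\left( \rho _{\beta }^{j}\circ h\circ \pi \right) \partial _{j}\right] _{TE}$, which by the Leibniz rule on $TE$ and the chain rule (only the $x$-derivatives survive since $\rho _{\alpha }^{i}\circ h\circ \pi$ depends on $x$ alone) equals $\left( \left( \rho _{\alpha }^{i}\partial _{x^{i}}\rho _{\beta }^{j}-\rho _{\beta }^{i}\partial _{x^{i}}\rho _{\alpha }^{j}\right) \circ h\circ \pi \right) \partial _{j}=\left( L_{\alpha \beta }^{\gamma }\rho _{\gamma }^{j}\circ h\circ \pi \right) \partial _{j}=\tilde{\rho}\left( \left( L_{\alpha \beta }^{\gamma }\circ h\circ \pi \right) \tilde{\partial}_{\gamma }\right) =\tilde{\rho}\left[ \tilde{\partial}_{\alpha },\tilde{\partial}_{\beta }\right] _{\left( \rho ,\eta \right) TE}$, and the mixed and vertical cases are trivially $0=0$; extension to general sections is then forced by $LA_{1}$. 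The main obstacle is the well-definedness step in the first paragraph: one must be careful that \eqref{eq15}, which is written as a direct sum of brackets computed in the two ambient Lie algebroids, does produce a section whose two components are compatible in the sense demanded by membership in $\left( \rho ,\eta \right) TE$ — and that compatibility rests entirely on the structure-function identity furnished by $GLA_{3}$ for the generalized Lie algebroid $\left( \left( F,\nu ,N\right) ,\left[ ,\right] _{F,h},\left( \rho ,\eta \right) \right)$, transported via $h\circ \pi$ to $E$. Once that identity is in hand, everything else is bookkeeping inherited from the two factor brackets.
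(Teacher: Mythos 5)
The paper itself offers no proof of Theorem \ref{t3} (it is introduced with ``Easily we obtain''), so there is nothing to compare against line by line; your architecture --- compute the brackets of the natural base sections, show they stay in the submodule using the anchor identity $\rho _{\alpha }^{i}\partial _{x^{i}}\rho _{\beta }^{j}-\rho _{\beta }^{i}\partial _{x^{i}}\rho _{\alpha }^{j}=L_{\alpha \beta }^{\gamma }\rho _{\gamma }^{j}$ inherited from $GLA_{3}$ via Theorem \ref{t2}, then extend by bilinearity and Leibniz and verify $LA_{1}$--$LA_{3}$ --- is the standard and correct route for this prolongation-type construction.

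There is, however, one step in your write-up that fails as literally stated: the claim that $LA_{1}$ ``follows slot-by-slot from the Leibniz rule already known for $\left[ ,\right] _{\pi ^{\ast }\left( h^{\ast }F\right) }$ and for $\left[ ,\right] _{TE}$.'' Take $X=\overset{\cdot }{\tilde{\partial}}_{a}$, $W=\tilde{\partial}_{\beta }$ and $f\in \mathcal{F}\left( E\right) $ depending on the fibre coordinates. Computed slot-by-slot via \eqref{eq15}, the $\pi ^{\ast }\left( h^{\ast }F\right) $-component of $\left[ \overset{\cdot }{\tilde{\partial}}_{a},f\tilde{\partial}_{\beta }\right] $ is $\left[ 0,fT_{\beta }\right] _{\pi ^{\ast }\left( h^{\ast }F\right) }=0$, whereas $LA_{1}$ with anchor $\tilde{\rho}$ demands $\Gamma \left( \tilde{\rho},Id_{E}\right) \left( \overset{\cdot }{\tilde{\partial}}_{a}\right) \left( f\right) \cdot T_{\beta }=\frac{\partial f}{\partial y^{a}}T_{\beta }\neq 0$; the same example shows the literal direct-sum bracket is not even closed on $\Gamma \left( \left( \rho ,\eta \right) TE,\left( \rho ,\eta \right) \tau _{E},E\right) $, because the pull-back anchor of Theorem \ref{t2} sees only $x$-derivatives while $\tilde{\rho}$ also sees $y$-derivatives. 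The repair is the definition you in fact also state: the bracket is the unique skew, $\mathbb{R}$-bilinear operation determined by the structure equations $\left[ \tilde{\partial}_{\alpha },\tilde{\partial}_{\beta }\right] =\left( L_{\alpha \beta }^{\gamma }\circ h\circ \pi \right) \tilde{\partial}_{\gamma }$, $\left[ \tilde{\partial}_{\alpha },\overset{\cdot }{\tilde{\partial}}_{b}\right] =\left[ \overset{\cdot }{\tilde{\partial}}_{a},\overset{\cdot }{\tilde{\partial}}_{b}\right] =0$ together with the Leibniz rule for $\tilde{\rho}$; then $LA_{1}$ holds by construction, \eqref{eq15} remains valid in the $TE$-component always (and in the first component for basic coefficients), and the Jacobi identity should be obtained not ``componentwise'' but by checking it on base sections (where it reduces to $LA_{2}$ of the pull-back algebroid) and observing that the Jacobiator is $\mathcal{F}\left( E\right) $-trilinear precisely because $\tilde{\rho}$ is a bracket morphism on generators --- which is your $LA_{3}$ computation, and that part is correct. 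With this adjustment the proof is complete.
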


\begin{remark}
\label{r4}\textrm{In particular, if }$h=Id_{M},$\textrm{\ then the Lie
algebroid }%
\begin{equation*}
\begin{array}{c}
\left( \left( \left( Id_{TM},Id_{M}\right) TE,\left( Id_{TM},Id_{M}\right)
\tau _{E},E\right) ,\left[ ,\right] _{\left( Id_{TM},Id_{M}\right)
TE},\left( \widetilde{Id_{TM}},Id_{E}\right) \right)%
\end{array}%
\end{equation*}%
\textrm{is isomorphic with the usual Lie algebroid }%
\begin{equation*}
\begin{array}{c}
\left( \left( TE,\tau _{E},E\right) ,\left[ ,\right] _{TE},\left(
Id_{TE},Id_{E}\right) \right) .%
\end{array}%
\end{equation*}
\end{remark}

\textrm{This is a reason for which the Lie algebroid }%
\begin{equation*}
\begin{array}{c}
\left( \left( \left( \rho ,\eta \right) TE,\left( \rho ,\eta \right) \tau
_{E},E\right) ,\left[ ,\right] _{\left( \rho ,\eta \right) TE},\left( \tilde{%
\rho},Id_{E}\right) \right)%
\end{array}%
,
\end{equation*}%
\textrm{will be called the Lie algebroid generalized tangent bundle.}\emph{\ 
}(see also \cite{1})

We consider the $\mathbf{B}^{\mathbf{v}}$-morphism $\left( \left( \rho ,\eta
\right) \pi !,Id_{E}\right) $ given by the commutative diagram%
\begin{equation}
\begin{array}{rcl}
\left( \rho ,\eta \right) TE & ^{\underrightarrow{~\ \left( \rho ,\eta
\right) \pi !~\ }} & \pi ^{\ast }\left( h^{\ast }F\right) \\ 
\left( \rho ,\eta \right) \tau _{E}\downarrow ~ &  & ~\downarrow \pi ^{\ast
}\left( h^{\ast }\nu \right) \\ 
E~\  & ^{\underrightarrow{~Id_{E}~}} & ~\ E%
\end{array}
\label{eq16}
\end{equation}%
This is defined as:%
\begin{equation}
\begin{array}{c}
\left( \rho ,\eta \right) \pi !\left( \left( Z^{\alpha }\tilde{\partial}%
_{\alpha }+Y^{a}\overset{\cdot }{\tilde{\partial}}_{a}\right) \left(
u_{x}\right) \right) =\left( Z^{\alpha }T_{\alpha }\right) \left(
u_{x}\right) ,%
\end{array}
\label{eq17}
\end{equation}%
for any $Z^{\alpha }\tilde{\partial}_{\alpha }+Y^{a}\overset{\cdot }{\tilde{%
\partial}}_{a}\in \Gamma \left( \left( \rho ,\eta \right) TE,\left( \rho
,\eta \right) \tau _{E},E\right) .$\medskip

Using the $\mathbf{B}^{\mathbf{v}}$-morphism $\left( \left( \rho ,\eta
\right) \pi !,Id_{E}\right) $ we obtain the \emph{tangent }$\left( \rho
,\eta \right) $\emph{-application }$\left( \left( \rho ,\eta \right) T\pi
,h\circ \pi \right) $ of $\left( \left( \rho ,\eta \right) TE,\left( \rho
,\eta \right) \tau _{E},E\right) $ source and $\left( F,\nu ,N\right) $
target.

\begin{definition}
\label{d5}The kernel of the tangent $\left( \rho ,\eta \right) $%
-application\ is writen 
\begin{equation*}
\left( V\left( \rho ,\eta \right) TE,\left( \rho ,\eta \right) \tau
_{E},E\right)
\end{equation*}%
and it is called \emph{the vertical interior differential system}.\bigskip\
(see \cite{2})
\end{definition}

We remark that the set $\left\{ \overset{\cdot }{\tilde{\partial}}_{a},~a\in 
\overline{1,r}\right\} $ is a base of the $\mathcal{F}\left( E\right) $%
-module 
\begin{equation*}
\left( \Gamma \left( V\left( \rho ,\eta \right) TE,\left( \rho ,\eta \right)
\tau _{E},E\right) ,+,\cdot \right) .
\end{equation*}

\begin{proposition}
\label{p6}\emph{The short sequence of vector bundles}%
\begin{equation}
\begin{array}{ccccccccc}
0 & \hookrightarrow & V\left( \rho ,\eta \right) TE & \hookrightarrow & 
\left( \rho ,\eta \right) TE & ^{\underrightarrow{~\ \left( \rho ,\eta
\right) \pi !~\ }} & \pi ^{\ast }\left( h^{\ast }F\right) & ^{%
\underrightarrow{}} & 0 \\ 
\downarrow &  & \downarrow &  & \downarrow &  & \downarrow &  & \downarrow
\\ 
E & ^{\underrightarrow{~Id_{E}~}} & E & ^{\underrightarrow{~Id_{E}~}} & E & 
^{\underrightarrow{~Id_{E}~}} & E & ^{\underrightarrow{~Id_{E}~}} & E%
\end{array}
\label{eq18}
\end{equation}%
\emph{is exact.}
\end{proposition}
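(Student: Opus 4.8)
The plan is to verify exactness of the short sequence \eqref{eq18} at each of the three nontrivial positions, using the explicit local descriptions already established in the excerpt. Exactness at $V\left( \rho ,\eta \right) TE$ just amounts to the injectivity of the inclusion $V\left( \rho ,\eta \right) TE\hookrightarrow \left( \rho ,\eta \right) TE$, which is immediate since $V\left( \rho ,\eta \right) TE$ is by Definition \ref{d5} a vector subbundle (the kernel of the tangent $\left( \rho ,\eta \right) $-application), hence the inclusion is a fibrewise injective vector bundle morphism over $Id_E$. Exactness at $\pi ^{\ast }\left( h^{\ast }F\right) $ is the surjectivity of $\left( \rho ,\eta \right) \pi !$: from the defining formula \eqref{eq17}, the section $Z^{\alpha }\tilde{\partial}_{\alpha }$ (take all $Y^{a}=0$) maps to $Z^{\alpha }T_{\alpha }$, so every element $Z^{\alpha }T_{\alpha }\left( u_{x}\right) $ of $\pi ^{\ast }\left( h^{\ast }F\right) $ is hit; thus $\left( \rho ,\eta \right) \pi !$ is fibrewise surjective.

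The heart of the argument is exactness at the middle term $\left( \rho ,\eta \right) TE$, i.e. the identity $\operatorname{Im}\bigl(V\left( \rho ,\eta \right) TE\hookrightarrow \left( \rho ,\eta \right) TE\bigr)=\ker \left( \rho ,\eta \right) \pi !$. For the inclusion $\subseteq $: by the remark following Definition \ref{d5}, $\Gamma \left( V\left( \rho ,\eta \right) TE,\ldots\right)$ is generated over $\mathcal{F}\left( E\right)$ by $\bigl\{ \overset{\cdot }{\tilde{\partial}}_{a}\bigr\}$, and applying \eqref{eq17} to $Y^{a}\overset{\cdot }{\tilde{\partial}}_{a}$ (all $Z^{\alpha }=0$) gives $0$, so $V\left( \rho ,\eta \right) TE\subseteq \ker \left( \rho ,\eta \right) \pi !$. (Alternatively, one can note directly that $\left( \rho ,\eta \right) \pi !$ composed with the inclusion is, after identification, the tangent $\left( \rho ,\eta \right) $-application restricted to its own kernel.) For the reverse inclusion $\supseteq $: an arbitrary element of $\left( \rho ,\eta \right) TE$ over $u_{x}$ has the form $\bigl( Z^{\alpha }\tilde{\partial}_{\alpha }+Y^{a}\overset{\cdot }{\tilde{\partial}}_{a}\bigr)\left( u_{x}\right) $ with respect to the natural $\left( \rho ,\eta \right) $-base, and \eqref{eq17} sends it to $\bigl( Z^{\alpha }T_{\alpha }\bigr)\left( u_{x}\right) $; since $T_{1},\ldots ,T_{p}$ form a local base of $\pi ^{\ast }\left( h^{\ast }F\right) $, this vanishes iff all $Z^{\alpha }=0$, i.e. iff the element equals $\bigl( Y^{a}\overset{\cdot }{\tilde{\partial}}_{a}\bigr)\left( u_{x}\right) $, which lies in $V\left( \rho ,\eta \right) TE$. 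Hence $\ker \left( \rho ,\eta \right) \pi !\subseteq V\left( \rho ,\eta \right) TE$, completing exactness in the middle.

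Finally, exactness at the right-hand $0$ is the statement that $\left( \rho ,\eta \right) \pi !$ is surjective, already dealt with above, and exactness at the left-hand $0$ is vacuous. I do not anticipate a genuine obstacle here: the only point requiring a little care is making the fibrewise linear-algebra statements at a point $u_{x}$ rigorous and checking that they are compatible with the change-of-charts matrix \eqref{eq13} (so that the kernel and image are honest subbundles of constant rank, not merely pointwise kernels), but the block-triangular form of \eqref{eq13} — with the $M_{a}^{a'}\circ \pi $ block acting on the vertical $Y$-coordinates and the $\Lambda _{\alpha }^{\alpha '}\circ h\circ \pi $ block on the $Z$-coordinates — shows at once that the vertical distribution and the image are preserved, so all ranks are locally constant and the sequence is a genuine short exact sequence of vector bundles over $Id_E$.
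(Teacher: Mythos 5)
Your proof is correct and follows the evident intended argument: the paper itself states Proposition \ref{p6} without proof, and the verification you give (injectivity of the inclusion, surjectivity of $\left( \rho ,\eta \right) \pi !$ read off from \eqref{eq17}, and the middle exactness from the unique decomposition of an element as $Z^{\alpha }\tilde{\partial}_{\alpha }+Y^{a}\overset{\cdot }{\tilde{\partial}}_{a}$ together with the fact that $\left\{ T_{\alpha }\right\} $ is a local frame of $\pi ^{\ast }\left( h^{\ast }F\right) $) is exactly the routine fibrewise check the author leaves to the reader. Your closing remark about the block-triangular change-of-charts matrix \eqref{eq13} guaranteeing constant rank is a sensible extra precaution and consistent with the paper's earlier observation that $\left\{ \overset{\cdot }{\tilde{\partial}}_{a}\right\} $ is a base of the module of sections of the vertical subbundle.
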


\begin{definition}
\label{d7}\textit{A }$\mathbf{Man}$-morphism $\left( \rho ,\eta \right)
\Gamma $ of $\left( \rho ,\eta \right) TE$ source and $V\left( \rho ,\eta
\right) TE$ target defined by%
\begin{equation}
\begin{array}{c}
\left( \rho ,\eta \right) \Gamma \left( Z^{\gamma }\tilde{\partial}_{\gamma
}+Y^{a}\overset{\cdot }{\tilde{\partial}}_{a}\right) \left( u_{x}\right)
=\left( Y^{a}+\left( \rho ,\eta \right) \Gamma _{\gamma }^{a}Z^{\gamma
}\right) \overset{\cdot }{\tilde{\partial}}_{a}\left( u_{x}\right) ,%
\end{array}
\label{eq19}
\end{equation}%
so that the $\mathbf{B}^{\mathbf{v}}$-morphism $\left( \left( \rho ,\eta
\right) \Gamma ,Id_{E}\right) $ is a split to the left in the previous exact
sequence, will be called $\left( \rho ,\eta \right) $\emph{-connection for
the vector bundle }$\left( E,\pi ,M\right) $.
\end{definition}

The $\left( \rho ,Id_{M}\right) $-connection is called $\rho $\emph{%
-connection }and is denoted $\rho \Gamma $\emph{\ }and the $\left(
Id_{TM},Id_{M}\right) $-connection is called \emph{connection }and is
denoted $\Gamma $\emph{.}

\begin{definition}
\label{d8}If $\left( \rho ,\eta \right) \Gamma $ is a $\left( \rho ,\eta
\right) $-connection for the vector bundle $\left( E,\pi ,M\right) $, then
the kernel of the $\mathbf{B}^{\mathbf{v}}$-morphism $\left( \left( \rho
,\eta \right) \Gamma ,Id_{E}\right) $\ is written 
\begin{equation*}
\left( H\left( \rho ,\eta \right) TE,\left( \rho ,\eta \right) \tau
_{E},E\right) 
\end{equation*}
and is called the \emph{horizontal interior differential system}. (see \cite%
{2})
\end{definition}

\begin{definition}
\label{d9}If $\left( E,\pi ,M\right) \in \left\vert \mathbf{B}^{\mathbf{v}%
}\right\vert $ and $\left\{ s_{a},a\in \overline{1,r}\right\} $ is a base of
the $\mathcal{F}\left( M\right) $-module of sections $\left( \Gamma \left(
E,\pi ,M\right) ,+,\cdot \right) $, then we obtain the $\mathbf{B}^{\mathbf{v%
}}$-morphism $\left( \Pi ,\pi \right) $ defined by the commutative diagram%
\begin{equation}
\begin{array}{rcl}
V\left( \rho ,\eta \right) TE & ^{\underrightarrow{~\ \Pi ~\ }} & ~\ E \\ 
\left( \rho ,\eta \right) \tau _{E}\downarrow ~ &  & ~\downarrow \pi  \\ 
E~\  & ^{\underrightarrow{~~~\pi ~~}} & ~\ M%
\end{array}
\label{eq20}
\end{equation}%
so that%
\begin{equation}
\begin{array}{c}
\Pi \left( Y^{a}\overset{\cdot }{\tilde{\partial}}_{a}\left( u_{x}\right)
\right) =Y^{a}\left( u_{x}\right) s_{a}\left( x\right) .%
\end{array}
\label{eq21}
\end{equation}
\end{definition}

\begin{theorem}
\label{t10}(see \cite{1}) \emph{If }$\left( \rho ,\eta \right) \Gamma $\emph{%
\ is a }$\left( \rho ,\eta \right) $\emph{-connection for the vector bundle }%
$\left( E,\pi ,M\right) ,$\emph{\ then its components satisfy the law of
transformation}%
\begin{equation}
\begin{array}{c}
\left( \rho ,\eta \right) \Gamma _{\gamma 
{\acute{}}%
}^{a%
{\acute{}}%
}{=}M_{a}^{a%
{\acute{}}%
}{\circ }\pi \!\!\left[ \rho _{\gamma }^{k}{\circ }h{\circ }\pi \frac{%
\partial M_{b%
{\acute{}}%
}^{a}\circ \pi }{\partial x^{k}}y^{b%
{\acute{}}%
}{+}\left( \rho ,\eta \right) \!\Gamma _{\gamma }^{a}\right] \!\!\Lambda
_{\gamma 
{\acute{}}%
}^{\gamma }{\circ }h{\circ }\pi .%
\end{array}
\label{eq22}
\end{equation}%
\emph{\ }
\end{theorem}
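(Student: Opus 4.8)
The plan is to exploit the fact that, by Definition~\ref{d7}, $\left( \rho ,\eta \right) \Gamma $ is a \emph{globally defined} $\mathbf{Man}$-morphism --- the split to the left of the exact sequence in Proposition~\ref{p6} --- hence in particular an $\mathcal{F}\left( E\right) $-linear map $\Gamma \left( \left( \rho ,\eta \right) TE\right) \rightarrow \Gamma \left( V\left( \rho ,\eta \right) TE\right) $ whose two local expressions over an overlap of fibred charts on $\left( E,\pi ,M\right) $ must coincide; matching these expressions will produce \eqref{eq22}. First I would record, from \eqref{eq19}, the action of $\left( \rho ,\eta \right) \Gamma $ on the natural $\left( \rho ,\eta \right) $-base: evaluating \eqref{eq19} on $\tilde{\partial}_{\gamma }$ (i.e.\ $Y=0$) gives $\left( \rho ,\eta \right) \Gamma \left( \tilde{\partial}_{\gamma }\right) =\left( \rho ,\eta \right) \Gamma _{\gamma }^{a}\,\overset{\cdot }{\tilde{\partial}}_{a}$, while evaluating on $\overset{\cdot }{\tilde{\partial}}_{a}$ (i.e.\ $Z=0$) gives $\left( \rho ,\eta \right) \Gamma \left( \overset{\cdot }{\tilde{\partial}}_{a}\right) =\overset{\cdot }{\tilde{\partial}}_{a}$. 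The very same two formulas written in the primed fibred chart read $\left( \rho ,\eta \right) \Gamma \left( \tilde{\partial}_{\gamma ^{\prime }}\right) =\left( \rho ,\eta \right) \Gamma _{\gamma ^{\prime }}^{a^{\prime }}\,\overset{\cdot }{\tilde{\partial}}_{a^{\prime }}$ and $\left( \rho ,\eta \right) \Gamma \left( \overset{\cdot }{\tilde{\partial}}_{a^{\prime }}\right) =\overset{\cdot }{\tilde{\partial}}_{a^{\prime }}$.

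Next I would turn the coordinate transformation matrix \eqref{eq13} into the associated change-of-basis formulas for the natural $\left( \rho ,\eta \right) $-base. Its block-lower-triangular shape gives, on the one hand, $\overset{\cdot }{\tilde{\partial}}_{a^{\prime }}=\left( M_{a^{\prime }}^{a}\circ \pi \right) \overset{\cdot }{\tilde{\partial}}_{a}$, where $M_{a^{\prime }}^{a}$ denotes the inverse transition matrix, and on the other hand an expansion
\begin{equation*}
\tilde{\partial}_{\gamma ^{\prime }}=\left( \Lambda _{\gamma ^{\prime }}^{\gamma }\circ h\circ \pi \right) \tilde{\partial}_{\gamma }-\left( \Lambda _{\gamma ^{\prime }}^{\gamma }\circ h\circ \pi \right) \left( \rho _{\gamma }^{i}\circ h\circ \pi \right) \frac{\partial M_{b}^{a^{\prime }}\circ \pi }{\partial x^{i}}\,y^{b}\left( M_{a^{\prime }}^{a}\circ \pi \right) \overset{\cdot }{\tilde{\partial}}_{a},
\end{equation*}
whose inhomogeneous term is exactly the content of the lower-left block of \eqref{eq13}. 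Applying $\left( \rho ,\eta \right) \Gamma $ to this expansion, using $\mathcal{F}\left( E\right) $-linearity together with the images $\left( \rho ,\eta \right) \Gamma \left( \tilde{\partial}_{\gamma }\right) =\left( \rho ,\eta \right) \Gamma _{\gamma }^{a}\overset{\cdot }{\tilde{\partial}}_{a}$ and $\left( \rho ,\eta \right) \Gamma \left( \overset{\cdot }{\tilde{\partial}}_{a}\right) =\overset{\cdot }{\tilde{\partial}}_{a}$ found above, I get $\left( \rho ,\eta \right) \Gamma \left( \tilde{\partial}_{\gamma ^{\prime }}\right) $ as an explicit $\mathcal{F}\left( E\right) $-combination of the sections $\overset{\cdot }{\tilde{\partial}}_{a}$.

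Then I would equate that with the primed expression $\left( \rho ,\eta \right) \Gamma \left( \tilde{\partial}_{\gamma ^{\prime }}\right) =\left( \rho ,\eta \right) \Gamma _{\gamma ^{\prime }}^{a^{\prime }}\overset{\cdot }{\tilde{\partial}}_{a^{\prime }}=\left( \rho ,\eta \right) \Gamma _{\gamma ^{\prime }}^{a^{\prime }}\left( M_{a^{\prime }}^{a}\circ \pi \right) \overset{\cdot }{\tilde{\partial}}_{a}$, compare the coefficients of the linearly independent sections $\overset{\cdot }{\tilde{\partial}}_{a}$, and solve for $\left( \rho ,\eta \right) \Gamma _{\gamma ^{\prime }}^{a^{\prime }}$ by multiplying through with $M_{a}^{a^{\prime }}\circ \pi $. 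To put the $y$-term into exactly the shape displayed in \eqref{eq22} I would finally use the identity $\left( M_{a}^{a^{\prime }}\circ \pi \right) \frac{\partial M_{b^{\prime }}^{a}\circ \pi }{\partial x^{k}}y^{b^{\prime }}=-\frac{\partial M_{b}^{a^{\prime }}\circ \pi }{\partial x^{k}}y^{b}$, which follows by differentiating $M_{a}^{a^{\prime }}M_{b^{\prime }}^{a}=\delta _{b^{\prime }}^{a^{\prime }}$ and invoking \eqref{eq10}; after this substitution and a regrouping the right-hand side becomes precisely \eqref{eq22}.

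The step I expect to be the main obstacle is the second one --- obtaining the change-of-basis formula for the natural $\left( \rho ,\eta \right) $-base correctly, above all its inhomogeneous (affine) term, which is the genuine source of the $\rho _{\gamma }^{k}\,\partial M/\partial x^{k}\,y$ contribution in \eqref{eq22}. If one does not merely quote \eqref{eq13} but re-derives it, this forces a careful combination, with the right signs and index positions, of the fibre-coordinate rule $y^{a^{\prime }}=M_{a}^{a^{\prime }}y^{a}$ from \eqref{eq10}, the transition rule $T_{\alpha ^{\prime }}=\left( \Lambda _{\alpha ^{\prime }}^{\alpha }\circ h\circ \pi \right) T_{\alpha }$ for the base of $\pi ^{\ast }\left( h^{\ast }F\right) $, the coordinate rule $\partial _{i^{\prime }}=\frac{\partial x^{i}}{\partial x^{i^{\prime }}}\partial _{i}+\frac{\partial y^{a}}{\partial x^{i^{\prime }}}\dot{\partial}_{a}$ on $TE$, and the definitions $\tilde{\partial}_{\alpha }=T_{\alpha }\oplus \left( \rho _{\alpha }^{i}\circ h\circ \pi \right) \partial _{i}$ and $\overset{\cdot }{\tilde{\partial}}_{a}=0_{\pi ^{\ast }\left( h^{\ast }F\right) }\oplus \dot{\partial}_{a}$. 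Everything after that --- comparing coefficients, inverting $M_{a}^{a^{\prime }}$, and the identity coming from $M_{a}^{a^{\prime }}M_{b^{\prime }}^{a}=\delta _{b^{\prime }}^{a^{\prime }}$ --- is routine linear algebra over $\mathcal{F}\left( E\right) $.
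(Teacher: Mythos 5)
Your proposal is correct: reading off $\left( \rho ,\eta \right) \Gamma \left( \tilde{\partial}_{\gamma }\right) =\left( \rho ,\eta \right) \Gamma _{\gamma }^{a}\overset{\cdot }{\tilde{\partial}}_{a}$ and $\left( \rho ,\eta \right) \Gamma \left( \overset{\cdot }{\tilde{\partial}}_{a}\right) =\overset{\cdot }{\tilde{\partial}}_{a}$ from \eqref{eq19}, transporting the natural $\left( \rho ,\eta \right) $-base with the inverse of \eqref{eq13}, equating the two expressions of $\left( \rho ,\eta \right) \Gamma \left( \tilde{\partial}_{\gamma ^{\prime }}\right) $, and using the identity obtained by differentiating $M_{a}^{a^{\prime }}M_{b^{\prime }}^{a}=\delta _{b^{\prime }}^{a^{\prime }}$ does yield \eqref{eq22} exactly. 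The paper itself omits the proof (deferring to \cite{1}), but your argument is the standard one that derivation follows, so there is nothing substantive to distinguish.
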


\emph{In the particular case of Lie algebroids (see also \cite{3}), }$\left(
\eta ,h\right) =\left( Id_{M},Id_{M}\right) ,$\emph{\ the relations %
\eqref{eq22}\ become}%
\begin{equation}
\begin{array}{c}
\rho \Gamma _{\gamma 
{\acute{}}%
}^{a%
{\acute{}}%
}=M_{a}^{a%
{\acute{}}%
}\circ \pi \left[ \rho _{\gamma }^{k}\circ \pi \frac{\partial M_{b%
{\acute{}}%
}^{a}\circ \pi }{\partial x^{k}}y^{b%
{\acute{}}%
}+\rho \Gamma _{\gamma }^{a}\right] \Lambda _{\gamma 
{\acute{}}%
}^{\gamma }\circ \pi .%
\end{array}
\label{eq23}
\end{equation}%
\emph{In the classical case, }$\left( \rho ,\eta ,h\right) =\left(
Id_{TM},Id_{M},Id_{M}\right) ,$\emph{\ the relations \eqref{eq23}\ become}%
\begin{equation}
\begin{array}{c}
\Gamma _{k%
{\acute{}}%
}^{i%
{\acute{}}%
}=\frac{\partial x^{i%
{\acute{}}%
}}{\partial x^{i}}\circ \tau _{M}\left[ \frac{\partial }{\partial x^{k}}%
\left( \frac{\partial x^{i}}{\partial x^{j%
{\acute{}}%
}}\circ \tau _{M}\right) y^{j%
{\acute{}}%
}+\Gamma _{k}^{i}\right] \frac{\partial x^{k}}{\partial x^{k%
{\acute{}}%
}}\circ \tau _{M}.%
\end{array}
\label{eq24}
\end{equation}

\begin{remark}
\label{r11}\textrm{If we have a set of real local functions }$\left( \rho
,\eta \right) \Gamma _{\gamma }^{a}$\textrm{\ which satisfies the relations
of passing \emph{\eqref{eq22}},\ then we have a }$\left( \rho ,\eta \right) $%
\textrm{-connection }$\left( \rho ,\eta \right) \Gamma $\textrm{\ for the
vector bundle }$\left( E,\pi ,M\right) .$
\end{remark}

\begin{example}
\label{e12}\textrm{If }$\Gamma $\textrm{\ is an Ehresmann connection for the
vector bundle }$\left( E,\pi ,M\right) $\textrm{\ on components }$\Gamma
_{k}^{a},$\textrm{\ then the differentiable real local functions }%
\begin{equation*}
\left( \rho ,\eta \right) \Gamma _{\gamma }^{a}=\left( \rho _{\gamma
}^{k}\circ h\circ \pi \right) \Gamma _{k}^{a}
\end{equation*}%
\textrm{\ are the components of a }$\left( \rho ,\eta \right) $\textrm{%
-connection }$\left( \rho ,\eta \right) \Gamma $\textrm{\ for the vector
bundle }$\left( E,\pi ,M\right) .$\textrm{\ This }$\left( \rho ,\eta \right) 
$\textrm{-connection will be called the }$\left( \rho ,\eta \right) $\textrm{%
-connection associated to the connection }$\Gamma .$
\end{example}

We put the problem of finding a base for the $\mathcal{F}\left( E\right) $%
-module 
\begin{equation*}
\left( \Gamma \left( H\left( \rho ,\eta \right) TE,\left( \rho ,\eta \right)
\tau _{E},E\right) ,+,\cdot \right)
\end{equation*}%
of the type\textbf{\ } 
\begin{equation*}
\begin{array}[t]{l}
\tilde{\delta}_{\alpha }=Z_{\alpha }^{\beta }\tilde{\partial}_{\beta
}+Y_{\alpha }^{a}\overset{\cdot }{\tilde{\partial}}_{a},\alpha \in \overline{%
1,r}%
\end{array}%
\end{equation*}%
which satisfies the following conditions:%
\begin{equation}
\begin{array}{rcl}
\displaystyle\Gamma \left( \left( \rho ,\eta \right) \pi !,Id_{E}\right)
\left( \tilde{\delta}_{\alpha }\right) & = & T_{\alpha }\vspace*{2mm}, \\ 
\displaystyle\Gamma \left( \left( \rho ,\eta \right) \Gamma ,Id_{E}\right)
\left( \tilde{\delta}_{\alpha }\right) & = & 0.%
\end{array}
\label{eq25}
\end{equation}

Then we obtain the sections%
\begin{equation}
\begin{array}[t]{l}
\frac{\delta }{\delta \tilde{z}^{\alpha }}=\tilde{\partial}_{\alpha }-\left(
\rho ,\eta \right) \Gamma _{\alpha }^{a}\overset{\cdot }{\tilde{\partial}}%
_{a}=T_{\alpha }\oplus \left( \left( \rho _{\alpha }^{i}\circ h\circ \pi
\right) \partial _{i}-\left( \rho ,\eta \right) \Gamma _{\alpha }^{a}\dot{%
\partial}_{a}\right) .%
\end{array}
\label{eq26}
\end{equation}%
such that their law of change is a tensorial law under a change of vector
fiber charts.

The base $\left( \tilde{\delta}_{\alpha },\overset{\cdot }{\tilde{\partial}}%
_{a}\right) $ will be called the \emph{adapted }$\left( \rho ,\eta \right) $%
\emph{-base.}

\begin{remark}
\label{r13}\textrm{The following equality holds good}%
\begin{equation}
\begin{array}{l}
\Gamma \left( \tilde{\rho},Id_{E}\right) \left( \tilde{\delta}_{\alpha
}\right) =\left( \rho _{\alpha }^{i}\circ h\circ \pi \right) \partial
_{i}-\left( \rho ,\eta \right) \Gamma _{\alpha }^{a}\dot{\partial}_{a}.%
\end{array}
\label{eq27}
\end{equation}
\end{remark}

\textrm{Moreover, if }$\left( \rho ,\eta \right) \Gamma $\textrm{\ is the }$%
\left( \rho ,\eta \right) $\textrm{-connection associated to a connection }$%
\Gamma $\textrm{, then we obtain}%
\begin{equation}
\begin{array}{l}
\Gamma \left( \tilde{\rho},Id_{E}\right) \left( \tilde{\delta}_{\alpha
}\right) =\left( \rho _{\alpha }^{i}\circ h\circ \pi \right) \delta _{i},%
\end{array}
\label{eq28}
\end{equation}%
\textrm{where }$\left( \delta _{i},\dot{\partial}_{a}\right) $\textrm{\ is
the adapted base for the }$F\left( E\right) $\textrm{-module }$\left( \Gamma
\left( TE,\tau _{E},E\right) ,+,\cdot \right) .$

\begin{theorem}
\label{t14}\emph{The following equality holds good}%
\begin{equation}
\begin{array}{c}
\left[ \tilde{\delta}_{\alpha },\tilde{\delta}_{\beta }\right] _{\left( \rho
,\eta \right) TE}=L_{\alpha \beta }^{\gamma }\circ \left( h\circ \pi \right) 
\tilde{\delta}_{\gamma }+\left( \rho ,\eta ,h\right) \mathbb{R}_{\,\ \alpha
\beta }^{a}\overset{\cdot }{\tilde{\partial}}_{a},%
\end{array}
\label{eq29}
\end{equation}%
\emph{\ where}%
\begin{equation}
\begin{array}{l}
\left( \rho ,\eta ,h\right) \mathbb{R}_{\,\ \alpha \beta }^{a}=\Gamma \left( 
\tilde{\rho},Id_{E}\right) \left( \tilde{\delta}_{\beta }\right) \left(
\left( \rho ,\eta \right) \Gamma _{\alpha }^{a}\right) \vspace*{1mm} \\ 
\qquad -\Gamma \left( \tilde{\rho},Id_{E}\right) \left( \tilde{\delta}%
_{\alpha }\right) \left( \left( \rho ,\eta \right) \Gamma _{\beta
}^{a}\right) +\left( L_{\alpha \beta }^{\gamma }\circ h\circ \pi \right)
\left( \rho ,\eta \right) \Gamma _{\gamma }^{a},%
\end{array}
\label{eq30}
\end{equation}
\end{theorem}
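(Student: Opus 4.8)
The plan is to compute the bracket $\left[ \tilde{\delta}_{\alpha },\tilde{\delta}_{\beta }\right] _{\left( \rho ,\eta \right) TE}$ directly from the definition \eqref{eq15}, using the expression \eqref{eq26} for the adapted base sections $\tilde{\delta}_{\alpha }=T_{\alpha }\oplus \left( \left( \rho _{\alpha }^{i}\circ h\circ \pi \right) \partial _{i}-\left( \rho ,\eta \right) \Gamma _{\alpha }^{a}\dot{\partial}_{a}\right)$. Since the bracket on $\left( \rho ,\eta \right) TE$ splits as the direct sum of the pull-back bracket on the first factor and the $TE$-bracket on the second factor, I would handle the two components separately. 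For the first component, $\left[ T_{\alpha },T_{\beta }\right] _{\pi ^{\ast }\left( h^{\ast }F\right) }=L_{\alpha \beta }^{\gamma }\circ h\circ \pi \cdot T_{\gamma }$ by \eqref{eq12}, so the first slot contributes exactly $L_{\alpha \beta }^{\gamma }\circ \left( h\circ \pi \right) T_{\gamma }$, which is the first-factor part of $L_{\alpha \beta }^{\gamma }\circ \left( h\circ \pi \right) \tilde{\delta}_{\gamma }$.

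Next I would expand the second (tangent) component. Writing $\tilde{\rho}$ applied to $\tilde{\delta}_{\alpha}$ as $X_{\alpha }=\left( \rho _{\alpha }^{i}\circ h\circ \pi \right) \partial _{i}-\left( \rho ,\eta \right) \Gamma _{\alpha }^{a}\dot{\partial}_{a}$ (this is \eqref{eq27}), I need $\left[ X_{\alpha },X_{\beta }\right] _{TE}$. Expanding via the Leibniz rule for the $TE$-bracket, the $\partial_i$-$\partial_j$ terms produce $\left( \rho _{\alpha }^{i}\circ h\circ \pi \right) \partial_i\left( \rho _{\beta }^{j}\circ h\circ \pi \right)\partial_j - (\alpha \leftrightarrow \beta)$; using that $\Gamma(T(h\circ\eta),h\circ\eta)t_\alpha = \theta_\alpha^i \partial/\partial x^i$ and the $GLA_3$/$CA_3$-type compatibility encoded in the structure functions $L_{\alpha\beta}^\gamma$, this horizontal part collapses to $\left( L_{\alpha \beta }^{\gamma }\circ h\circ \pi \right)\left( \rho _{\gamma }^{i}\circ h\circ \pi \right)\partial_i$ — exactly what is needed so that, together with the vertical correction, the whole thing reassembles into $L_{\alpha \beta }^{\gamma }\circ \left( h\circ \pi \right)\tilde{\delta}_\gamma$ plus a purely vertical remainder. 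The remaining terms — those involving $\partial_i$ acting on $\left( \rho ,\eta \right) \Gamma _{\beta }^{a}$, $\dot\partial_a$ acting on $\left( \rho _{\beta }^{j}\circ h\circ \pi \right)$ (which vanishes since $\rho$ is pulled back along $h\circ\pi$ and does not depend on the fiber coordinates $y^a$), and $\dot\partial_a$ acting on $\left( \rho ,\eta \right) \Gamma _{\beta }^{b}$ — all land in the vertical subbundle spanned by the $\overset{\cdot }{\tilde{\partial}}_{a}$. Collecting them, and subtracting off the vertical part $-L_{\alpha\beta}^\gamma\circ(h\circ\pi)\,(\rho,\eta)\Gamma_\gamma^a\,\overset{\cdot }{\tilde{\partial}}_{a}$ that was borrowed to complete $\tilde\delta_\gamma$, yields precisely the coefficient $\left( \rho ,\eta ,h\right) \mathbb{R}_{\,\ \alpha \beta }^{a}$ written in \eqref{eq30} as $\Gamma(\tilde\rho,Id_E)(\tilde\delta_\beta)\big((\rho,\eta)\Gamma_\alpha^a\big)-\Gamma(\tilde\rho,Id_E)(\tilde\delta_\alpha)\big((\rho,\eta)\Gamma_\beta^a\big)+\left( L_{\alpha \beta }^{\gamma }\circ h\circ \pi \right)\left( \rho ,\eta \right) \Gamma _{\gamma }^{a}$.

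I would then verify the bookkeeping: check that the horizontal $\partial_i$-contributions from $\left[ X_\alpha, X_\beta\right]_{TE}$ match $L_{\alpha\beta}^\gamma\circ(h\circ\pi)\cdot\left( \rho_\gamma^i\circ h\circ\pi\right)\partial_i$, confirm that no $\dot\partial_a$-derivative of a $\rho$-component survives, and assemble the first-factor $T_\gamma$ together with the horizontal second-factor piece into $\tilde\delta_\gamma$ using \eqref{eq26}. The main obstacle is the identity $\left( \rho_\alpha^i\circ h\circ\pi\right)\partial_i\left( \rho_\beta^j\circ h\circ\pi\right) - \left( \rho_\beta^i\circ h\circ\pi\right)\partial_i\left( \rho_\alpha^j\circ h\circ\pi\right) = \left( L_{\alpha\beta}^\gamma\circ h\circ\pi\right)\left( \rho_\gamma^j\circ h\circ\pi\right)$: this is the pullback along $h\circ\pi$ of the anchor-compatibility relation $\rho\big([t_\alpha,t_\beta]_{F,h}\big) = [\rho(t_\alpha),\rho(t_\beta)]_{TM}$ expressed in local coordinates via $GLA_3$, and one must be careful that the chain rule through the maps $h$ and $\eta$ is applied consistently with the definition of $L_{\alpha\beta}^\gamma$ given after Example \ref{e1}. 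Everything else is a routine Leibniz-rule expansion, so once this compatibility identity is in hand the result follows by collecting terms according to the horizontal/vertical decomposition provided by the adapted $\left( \rho ,\eta \right) $-base.
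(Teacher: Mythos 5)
Your computation is correct and is exactly the direct expansion the paper intends for Theorem \ref{t14} (which it states without a printed proof): split the bracket via \eqref{eq15}, use \eqref{eq12} on the first factor and a Leibniz expansion plus the anchor compatibility on the second, then regroup along the adapted base, with the vertical remainder giving \eqref{eq30}. The only simplification available is that your ``main obstacle,'' the identity $\left( \rho _{\alpha }^{i}\circ h\circ \pi \right) \partial _{i}\left( \rho _{\beta }^{j}\circ h\circ \pi \right) -\left( \rho _{\beta }^{i}\circ h\circ \pi \right) \partial _{i}\left( \rho _{\alpha }^{j}\circ h\circ \pi \right) =\left( L_{\alpha \beta }^{\gamma }\circ h\circ \pi \right) \left( \rho _{\gamma }^{j}\circ h\circ \pi \right) ,$ is precisely condition $LA_{3}$ for the pull-back Lie algebroid asserted in Theorem \ref{t2}, so it can be cited rather than rederived from $GLA_{3}$.
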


\emph{Moreover, we have:}%
\begin{equation}
\begin{array}{c}
\left[ \tilde{\delta}_{\alpha },\overset{\cdot }{\tilde{\partial}}_{b}\right]
_{\left( \rho ,\eta \right) TE}=\Gamma \left( \tilde{\rho},Id_{E}\right)
\left( \overset{\cdot }{\tilde{\partial}}_{b}\right) \left( \left( \rho
,\eta \right) \Gamma _{\alpha }^{a}\right) \overset{\cdot }{\tilde{\partial}}%
_{a},%
\end{array}
\label{eq31}
\end{equation}%
\emph{and}%
\begin{equation}
\begin{array}{c}
\Gamma \left( \tilde{\rho},Id_{E}\right) \left[ \tilde{\delta}_{\alpha },%
\tilde{\delta}_{\beta }\right] _{\left( \rho ,\eta \right) TE}=\left[ \Gamma
\left( \tilde{\rho},Id_{E}\right) \left( \tilde{\delta}_{\alpha }\right)
,\Gamma \left( \tilde{\rho},Id_{E}\right) \left( \tilde{\delta}_{\beta
}\right) \right] _{TE}.%
\end{array}
\label{eq32}
\end{equation}

Let $\left( d\tilde{z}^{\alpha },d\tilde{y}^{b}\right) $ be the natural dual 
$\left( \rho ,\eta \right) $-base of natural $\left( \rho ,\eta \right) $%
-base $\left( \displaystyle\tilde{\partial}_{\alpha },\displaystyle\overset{%
\cdot }{\tilde{\partial}}_{a}\right) .$

This is determined by the equations 
\begin{equation*}
\begin{array}{c}
\left\{ 
\begin{array}{cc}
\displaystyle\left\langle d\tilde{z}^{\alpha },\tilde{\partial}_{\beta
}\right\rangle =\delta _{\beta }^{\alpha }, & \displaystyle\left\langle d%
\tilde{z}^{\alpha },\overset{\cdot }{\tilde{\partial}}_{a}\right\rangle =0,%
\vspace*{2mm} \\ 
\displaystyle\left\langle d\tilde{y}^{a},\tilde{\partial}_{\beta
}\right\rangle =0, & \displaystyle\left\langle d\tilde{y}^{a},\overset{\cdot 
}{\tilde{\partial}}_{b}\right\rangle =\delta _{b}^{a}.%
\end{array}%
\right.%
\end{array}%
\end{equation*}

We consider the problem of finding a base for the $\mathcal{F}\left(
E\right) $-module 
\begin{equation*}
\left( \Gamma \left( \left( V\left( \rho ,\eta \right) TE\right) ^{\ast
},\left( \left( \rho ,\eta \right) \tau _{E}\right) ^{\ast },E\right)
,+,\cdot \right)
\end{equation*}%
of the type 
\begin{equation*}
\begin{array}{c}
\delta \tilde{y}^{a}=\theta _{\alpha }^{a}d\tilde{z}^{\alpha }+\omega
_{b}^{a}d\tilde{y}^{b},~a\in \overline{1,n}%
\end{array}%
\end{equation*}%
which satisfies the following conditions:%
\begin{equation}
\begin{array}{c}
\left\langle \delta \tilde{y}^{a},\overset{\cdot }{\tilde{\partial}}%
_{a}\right\rangle =1\wedge \left\langle \delta \tilde{y}^{a},\tilde{\delta}%
_{\alpha }\right\rangle =0.%
\end{array}
\label{eq33}
\end{equation}

We obtain the sections%
\begin{equation}
\begin{array}{l}
\delta \tilde{y}^{a}=\left( \rho ,\eta \right) \Gamma _{\alpha }^{a}d\tilde{z%
}^{\alpha }+d\tilde{y}^{a},a\in \overline{1,n}.%
\end{array}
\label{eq34}
\end{equation}
such that their changing rule is tensorial under a change of vector fiber
charts. The base $\left( d\tilde{z}^{\alpha },\delta \tilde{y}^{a}\right) $
will be called the \emph{adapted dual }$\left( \rho ,\eta \right) $\emph{%
-base.}

\bigskip

\section{The $\left( g,h\right) $-lift of a differentiable curve}

\ \ 

We consider the following diagram:%
\begin{equation}
\begin{array}{rcl}
E &  & \left( F,\left[ ,\right] _{F,h},\left( \rho ,\eta \right) \right) \\ 
\pi \downarrow &  & ~\downarrow \nu \\ 
M & ^{\underrightarrow{~\ \ \ \ h~\ \ \ \ }} & ~\ N%
\end{array}
\label{eq35}
\end{equation}%
where $\left( E,\pi ,M\right) \in \left\vert \mathbf{B}^{\mathbf{v}%
}\right\vert $ and $\left( \left( F,\nu ,N\right) ,\left[ ,\right]
_{F,h},\left( \rho ,\eta \right) \right) $ is a generalized Lie algebroid.

We admit that $\left( \rho ,\eta \right) \Gamma $ is a $\left( \rho ,\eta
\right) $-connection for the vector bundle $\left( E,\pi ,M\right) $ and $%
\begin{array}[b]{ccc}
I & ^{\underrightarrow{~c\ }} & M%
\end{array}%
$ is a differentiable curve. We know that 
\begin{equation*}
\begin{array}{c}
\left( E_{|\func{Im}\left( \eta \circ h\circ c\right) },\pi _{|\func{Im}%
\left( \eta \circ h\circ c\right) },\func{Im}\left( \eta \circ h\circ
c\right) \right)%
\end{array}%
\end{equation*}%
is a vector subbundle of the vector bundle $\left( E,\pi ,M\right) .$

\begin{definition}
\label{d15}If%
\begin{equation}
\begin{array}{ccc}
I & ^{\underrightarrow{\ \ \dot{c}\ \ }} & E_{|\func{Im}\left( \eta \circ
h\circ c\right) } \\ 
t & \longmapsto & y^{a}\left( t\right) s_{a}\left( \eta \circ h\circ c\left(
t\right) \right)%
\end{array}
\label{eq36}
\end{equation}%
is a differentiable curve such that there exists $g\in \mathbf{Man}\left(
E,F\right) $ such that the following conditions are satisfied:
\end{definition}

\begin{itemize}
\item[1.] $\left( g,h\right) \in \mathbf{B}^{v}\left( \left( E,\pi ,M\right)
,\left( F,\nu ,N\right) \right) $ and

\item[2.] $\rho \circ g\circ \dot{c}\left( t\right) =\displaystyle\frac{%
d\left( \eta \circ h\circ c\right) ^{i}\left( t\right) }{dt}\frac{\partial }{%
\partial x^{i}}\left( \left( \eta \circ h\circ c\right) \left( t\right)
\right) ,$ for any $t\in I,$ \smallskip

then we will say that $\dot{c}$ \emph{\ is the }$\left( g,h\right) $\emph{%
-lift of the differentiable curve }$c.$
\end{itemize}

\begin{remark}
\label{r16}\textrm{The second condition is equivalent with the following
affirmation:}%
\begin{equation}
\begin{array}[b]{c}
\rho _{\alpha }^{i}\left( \eta \circ h\circ c\left( t\right) \right) \cdot
g_{a}^{\alpha }\left( h\circ c\left( t\right) \right) \cdot y^{a}\left(
t\right) =\frac{d\left( \eta \circ h\circ c\right) ^{i}\left( t\right) }{dt}%
,~i\in \overline{1,m}.%
\end{array}
\label{eq37}
\end{equation}
\end{remark}

\begin{definition}
\label{d17}If $%
\begin{array}{ccc}
I & ^{\underrightarrow{\dot{c}}} & E_{|\func{Im}\left( \eta \circ h\circ
c\right) }%
\end{array}%
$ is a differentiable $\left( g,h\right) $-lift of the differentiable curve $%
c,$ then the section%
\begin{equation}
\begin{array}{ccc}
\func{Im}\left( \eta \circ h\circ c\right) & ^{\underrightarrow{u\left( c,%
\dot{c}\right) }} & E_{|\func{Im}\left( \eta \circ h\circ c\right) }\vspace*{%
1mm} \\ 
\eta \circ h\circ c\left( t\right) & \longmapsto & \dot{c}\left( t\right)%
\end{array}
\label{eq38}
\end{equation}%
will be called the\emph{\ canonical section associated to the couple }$%
\left( c,\dot{c}\right) .$
\end{definition}

\begin{definition}
\label{d18}If $\left( g,h\right) \in \mathbf{B}^{\mathbf{v}}\left( \left(
E,\pi ,M\right) ,\left( F,\nu ,N\right) \right) $ has the components 
\begin{equation*}
\begin{array}{c}
g_{a}^{\alpha };a\in \overline{1,r},~\alpha \in \overline{1,p}%
\end{array}%
\end{equation*}%
such that for any vector local $\left( n+p\right) $-chart $\left(
V,t_{V}\right) $ of $\left( F,\nu ,N\right) $ there exists the real
functions 
\begin{equation*}
\begin{array}{ccc}
V & ^{\underrightarrow{~\ \ \ \tilde{g}_{\alpha }^{a}~\ \ }} & \mathbb{R}%
\end{array}%
;~a\in \overline{1,r},~\alpha \in \overline{1,p}
\end{equation*}%
such that%
\begin{equation}
\begin{array}{c}
\tilde{g}_{\alpha }^{b}\left( \varkappa \right) \cdot g_{a}^{\alpha }\left(
\varkappa \right) =\delta _{a}^{b},%
\end{array}
\label{eq39}
\end{equation}%
for any $\varkappa \in V,$ then we will say that \emph{the }$\mathbf{B}^{%
\mathbf{v}}$\emph{-morphism }$\left( g,h\right) $\emph{\ is locally
invertible.}
\end{definition}

\begin{remark}
\label{r19}\textrm{In particular, if }$\left( Id_{TM},Id_{M},Id_{M}\right)
=\left( \rho ,\eta ,h\right) $\textrm{\ and the }$\mathbf{B}^{\mathbf{v}}$%
\textrm{\ morphism }$\left( g,Id_{M}\right) $\textrm{\ is locally
invertible, then we have the differentiable }$\left( g,Id_{M}\right) $%
\textrm{-lift}%
\begin{equation}
\begin{array}{ccl}
I & ^{\underrightarrow{\ \ \dot{c}\ \ }} & TM \\ 
t & \longmapsto & \displaystyle\tilde{g}_{j}^{i}\left( c\left( t\right)
\right) \frac{dc^{j}\left( t\right) }{dt}\frac{\partial }{\partial x^{i}}%
\left( c\left( t\right) \right) .%
\end{array}
\label{eq40}
\end{equation}
\end{remark}

\textrm{Moreover, if }$g=Id_{TM}$\textrm{, then we obtain the usual lift of
tangent vectors}%
\begin{equation}
\begin{array}{ccl}
I & ^{\underrightarrow{\ \ \dot{c}\ \ }} & TM\vspace*{1mm} \\ 
t & \longmapsto & \displaystyle\frac{dc^{i}\left( t\right) }{dt}\frac{%
\partial }{\partial x^{i}}\left( c\left( t\right) \right) .%
\end{array}
\label{eq41}
\end{equation}

\begin{definition}
\label{d20}If $%
\begin{array}{ccl}
I & ^{\underrightarrow{\ \ \dot{c}\ \ }} & E_{|\func{Im}\left( \eta \circ
h\circ c\right) }%
\end{array}%
$ is a differentiable $\left( g,h\right) $-lift of differentiable curve $c,$
such that its components functions $\left( y^{a},~a\in \overline{1,n}\right) 
$ are solutions for the differentiable system of equations:%
\begin{equation}
\begin{array}[b]{c}
\frac{du^{a}}{dt}+\left( \rho ,\eta \right) \Gamma _{\alpha }^{a}\circ
u\left( c,\dot{c}\right) \circ \left( \eta \circ h\circ c\right) \cdot
g_{b}^{\alpha }\circ h\circ c\cdot u^{b}=0,%
\end{array}
\label{eq42}
\end{equation}%
then we will say that \emph{the }$\left( g,h\right) $\emph{-lift }$\dot{c}$%
\emph{\ is parallel with respect to the }$\left( \rho ,\eta \right) $\emph{%
-connection }$\left( \rho ,\eta \right) \Gamma .$

\begin{remark}
\label{r21}\textrm{In particular, if }$\left( \rho ,\eta ,h\right) =\left(
Id_{TM},Id_{M},Id_{M}\right) $\textrm{\ and the\ }$\mathbf{B}^{\mathbf{v}}$%
\textrm{-morphism }$\left( g,Id_{M}\right) $\textrm{\ is locally invertible,
then the differentiable }$\left( g,Id_{TM}\right) $\textrm{-lift}%
\begin{equation}
\begin{array}{ccl}
I & ^{\underrightarrow{\ \ \dot{c}\ \ }} & TM\vspace*{1mm} \\ 
t & \longmapsto & \displaystyle\left( \tilde{g}_{j}^{i}\circ c\cdot \frac{%
dc^{j}}{dt}\right) \frac{\partial }{\partial x^{i}}\left( c\left( t\right)
\right) ,%
\end{array}
\label{eq43}
\end{equation}%
\textrm{is parallel with respect to the connection }$\Gamma $\textrm{\ if
the component functions }%
\begin{equation*}
\begin{array}[b]{c}
\left( \tilde{g}_{j}^{i}\circ c\cdot \frac{dc^{j}}{dt},~i\in \overline{1,n}%
\right)%
\end{array}%
\end{equation*}%
\textrm{are solutions for the differentiable system of equations}%
\begin{equation}
\begin{array}[b]{c}
\frac{du^{i}}{dt}+\Gamma _{k}^{i}\circ u\left( c,\dot{c}\right) \circ c\cdot
g_{h}^{k}\circ c\cdot u^{h}=0,%
\end{array}
\label{eq44}
\end{equation}%
\textrm{namely}%
\begin{equation}
\begin{array}{l}
\displaystyle\frac{d}{dt}\left( \tilde{g}_{j}^{i}\left( c\left( t\right)
\right) \cdot \frac{dc^{j}\left( t\right) }{dt}\right) \vspace*{1mm} \\ 
\qquad \displaystyle+\Gamma _{k}^{i}\left( \left( \tilde{g}_{j}^{i}\left(
c\left( t\right) \right) \cdot \frac{dc^{j}\left( t\right) }{dt}\right)
\cdot \frac{\partial }{\partial x^{i}}\left( c\left( t\right) \right)
\right) \cdot \frac{dc^{k}\left( t\right) }{dt}=0.%
\end{array}
\label{eq45}
\end{equation}%
\textrm{Moreover, if }$g=Id_{TM}$\textrm{, then the usual lift of tangent
vectors \eqref{eq41} is parallel with respect to the connection }$\Gamma $%
\textrm{\ if the component functions }$\left( \frac{dc^{j}}{dt},~j\in 
\overline{1,n}\right) $\textrm{\ are solutions for the differentiable system
of equations}%
\begin{equation}
\begin{array}[b]{c}
\frac{du^{i}}{dt}+\Gamma _{k}^{i}\circ u\left( c,\dot{c}\right) \circ c\cdot
u^{k}=0,%
\end{array}
\label{eq46}
\end{equation}%
\textrm{namely}%
\begin{equation}
\begin{array}[b]{c}
\frac{d}{dt}\left( \frac{dc^{j}\left( t\right) }{dt}\right) +\Gamma
_{k}^{i}\left( \frac{dc^{j}\left( t\right) }{dt}\cdot \frac{\partial }{%
\partial x^{i}}\left( c\left( t\right) \right) \right) \cdot \frac{%
dc^{k}\left( t\right) }{dt}=0.%
\end{array}
\label{eq47}
\end{equation}
\end{remark}
\end{definition}

\textrm{\bigskip }

\section{Remarkable $\mathbf{Mod}$-endomorphisms}

\ \ \ 

We consider the following diagram:%
\begin{equation}
\begin{array}{rcl}
E &  & \left( F,\left[ ,\right] _{F,h},\left( \rho ,\eta \right) \right) \\ 
\pi \downarrow &  & ~\downarrow \nu \\ 
M & ^{\underrightarrow{~\ \ \ \ h~\ \ \ \ }} & ~\ N%
\end{array}
\label{eq48}
\end{equation}%
where $\left( E,\pi ,M\right) \in \left\vert \mathbf{B}^{\mathbf{v}%
}\right\vert $ and $\left( \left( F,\nu ,N\right) ,\left[ ,\right]
_{F,h},\left( \rho ,\eta \right) \right) $ is a generalized Lie algebroid.

\begin{definition}
\label{d22}For any $\mathbf{Mod}$-endomorphism $e$ of $\Gamma \left( \left(
\rho ,\eta \right) TE,\left( \rho ,\eta \right) \tau _{E},E\right) $ we
define the application of Nijenhuis \ type 
\begin{equation*}
\!\!\Gamma \!((\rho ,\!\eta )TE,\!(\rho ,\eta )\tau _{E},E)^{2~\ 
\underrightarrow{~\ \ N_{e}~\ \ }}~\ \Gamma \!((\rho ,\eta )TE,\!(\rho ,\eta
)\tau _{E},\!E)\vspace*{1mm}
\end{equation*}%
defined by 
\begin{equation*}
\begin{array}{c}
N_{e}\left( X,Y\right) =\left[ eX,eY\right] _{(\rho ,\!\eta )TE}+e^{2}\left[
X,Y\right] _{(\rho ,\!\eta )TE}-e\left[ eX,Y\right] _{(\rho ,\!\eta )TE}-e%
\left[ X,eY\right] _{(\rho ,\!\eta )\rho TE},%
\end{array}%
\end{equation*}%
for any $X,Y\in \Gamma \!((\rho ,\eta )TE,\!(\rho ,\eta )\tau _{E},\!E)%
\vspace*{1mm}.$
\end{definition}

\bigskip

\subsection{Projectors}

\ \ 

\begin{definition}
\label{d23}Any $\mathbf{Mod}$-endomorphism $e$ of $\Gamma \left( (\rho ,\eta
)TE,\break (\rho ,\eta \ )\tau _{E},E\right) $ with the property%
\begin{equation}
\begin{array}{l}
e^{2}=e%
\end{array}
\label{eq49}
\end{equation}%
will be called \emph{projector.}
\end{definition}

\begin{example}
\label{e24}\textrm{The }$\mathbf{Mod}$\textrm{-endomorphism }%
\begin{equation*}
\begin{array}{rcl}
\Gamma \left( \left( \rho ,\eta \right) TE,\left( \rho ,\eta \right) \tau
_{E},E\right) & ^{\underrightarrow{\ \ \mathcal{V}\ \ }} & \Gamma \left(
\left( \rho ,\eta \right) TE,\left( \rho ,\eta \right) \tau _{E},E\right) \\ 
Z^{\alpha }\tilde{\delta}_{\alpha }+Y^{a}\overset{\cdot }{\tilde{\partial}}%
_{a} & \longmapsto & Y^{a}\overset{\cdot }{\tilde{\partial}}_{a}%
\end{array}%
\end{equation*}%
\textrm{is a projector which will be called\ vertical projector.}
\end{example}

\begin{remark}
\label{r25}\textrm{We have }$V\left( \tilde{\delta}_{\alpha }\right) =0$%
\textrm{\ and }$V\left( \overset{\cdot }{\tilde{\partial}}_{a}\right) =%
\overset{\cdot }{\tilde{\partial}}_{a}.$\textrm{\ Therefore, it follows%
\vspace*{-2mm} }%
\begin{equation*}
\mathcal{V}\left( \tilde{\partial}_{\alpha }\right) =\left( \rho ,\eta
\right) \Gamma _{\alpha }^{a}\overset{\cdot }{\tilde{\partial}}_{a}.
\end{equation*}
\end{remark}

\textrm{In addition, we obtain the equality}%
\begin{equation}
\begin{array}[b]{c}
\Gamma \left( \left( \rho ,\eta \right) \Gamma ,Id_{E}\right) \left(
Z^{\alpha }\tilde{\partial}_{\alpha }+Y^{a}\overset{\cdot }{\tilde{\partial}}%
_{a}\right) =\mathcal{V}\left( Z^{\alpha }\tilde{\partial}_{\alpha }+Y^{a}%
\overset{\cdot }{\tilde{\partial}}_{a}\right) ,%
\end{array}
\label{eq50}
\end{equation}%
\textrm{for any} $Z^{\alpha }\tilde{\partial}_{\alpha }+Y^{a}\overset{\cdot }%
{\tilde{\partial}}_{a}\in \Gamma \left( \left( \rho ,\eta \right) TE,\left(
\rho ,\eta \right) \tau _{E},E\right) .$

\begin{theorem}
\label{t26}\emph{A }$(\rho ,\eta )$\emph{-connection for the vector bundle} $%
(E,\pi ,M)$ \emph{is characterized by the existence of a} $\mathbf{Mod}$%
\emph{-endomorphism} $\mathcal{V}$ \emph{of }$\Gamma \left( \left( \rho
,\eta \right) TE,\left( \rho ,\eta \right) \tau _{E},E\right) $ \emph{with
the properties:}%
\begin{equation}
\begin{array}{c}
\mathcal{V}\left( \Gamma \left( \left( \rho ,\eta \right) TE,\left( \rho
,\eta \right) \tau _{E},E\right) \right) \subset \Gamma \left( V\left( \rho
,\eta \right) TE,\left( \rho ,\eta \right) \tau _{E},E\right) \vspace*{1mm}
\\ 
\mathcal{V}\left( X\right) =X\Longleftrightarrow X\in \Gamma \left( V\left(
\rho ,\eta \right) TE,\left( \rho ,\eta \right) \tau _{E},E\right)%
\end{array}
\label{eq51}
\end{equation}
\end{theorem}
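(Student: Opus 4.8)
The statement is a characterization, so the plan is to prove both implications. For the forward direction, I start with a $(\rho,\eta)$-connection $(\rho,\eta)\Gamma$ and take $\mathcal{V}$ to be the vertical projector of Example~\ref{e24}, equivalently the composite $\Gamma((\rho,\eta)\Gamma,Id_E)$ viewed as a map into $\Gamma((\rho,\eta)TE,\ldots)$ via the inclusion $V(\rho,\eta)TE\hookrightarrow(\rho,\eta)TE$; this is legitimate by equality~\eqref{eq50}. The first property in~\eqref{eq51} is then immediate since $\mathcal{V}$ has image spanned by the $\overset{\cdot}{\tilde\partial}_a$, which by the remark after Definition~\ref{d5} form a base of $\Gamma(V(\rho,\eta)TE,\ldots)$. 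For the second property: if $X=Z^\alpha\tilde\partial_\alpha+Y^a\overset{\cdot}{\tilde\partial}_a$ then $\mathcal{V}(X)=(Y^a+(\rho,\eta)\Gamma^a_\alpha Z^\alpha)\overset{\cdot}{\tilde\partial}_a$ by~\eqref{eq19}, so $\mathcal{V}(X)=X$ forces $Z^\alpha=0$, i.e. $X\in\Gamma(V(\rho,\eta)TE,\ldots)$; conversely such $X$ has $Z^\alpha=0$ and is visibly fixed by $\mathcal{V}$.

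For the converse direction, I assume given a $\mathbf{Mod}$-endomorphism $\mathcal{V}$ satisfying~\eqref{eq51} and must produce a $(\rho,\eta)$-connection, i.e.\ a $\mathbf{Man}$-morphism $(\rho,\eta)\Gamma$ of the form~\eqref{eq19} splitting the exact sequence~\eqref{eq18} on the left. The key observation is that the two conditions in~\eqref{eq51} together say exactly that $\mathcal{V}$ is a projector onto $\Gamma(V(\rho,\eta)TE,\ldots)$: the first gives $\operatorname{Im}\mathcal{V}\subset\Gamma(V(\rho,\eta)TE,\ldots)$, and applying the second to $X=\mathcal{V}(Y)$ (which lies in $\Gamma(V(\rho,\eta)TE,\ldots)$ by the first property) yields $\mathcal{V}(\mathcal{V}(Y))=\mathcal{V}(Y)$, so $\mathcal{V}^2=\mathcal{V}$ and $\mathcal{V}$ restricts to the identity on $\Gamma(V(\rho,\eta)TE,\ldots)$. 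I then set $(\rho,\eta)\Gamma:=\mathcal{V}$, regarded as a map $(\rho,\eta)TE\to V(\rho,\eta)TE$; the splitting property in the exact sequence~\eqref{eq18} is precisely that $(\rho,\eta)\Gamma$ restricts to the identity on the subbundle, which we have just established.

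It remains to check that $(\rho,\eta)\Gamma=\mathcal{V}$ really has the local coordinate form~\eqref{eq19}, and this is where the only genuine work lies. Writing $\mathcal{V}(\tilde\partial_\alpha)=A_\alpha^a\overset{\cdot}{\tilde\partial}_a$ for some real local functions $A_\alpha^a$ (this uses $\operatorname{Im}\mathcal{V}\subset\Gamma(V(\rho,\eta)TE,\ldots)$ and the base $\{\overset{\cdot}{\tilde\partial}_a\}$ of the vertical module) and $\mathcal{V}(\overset{\cdot}{\tilde\partial}_a)=\overset{\cdot}{\tilde\partial}_a$ (from $\mathcal{V}|_{\mathrm{vert}}=\mathrm{id}$), $\mathcal{F}(E)$-linearity gives $\mathcal{V}(Z^\gamma\tilde\partial_\gamma+Y^a\overset{\cdot}{\tilde\partial}_a)=(Y^a+A^a_\gamma Z^\gamma)\overset{\cdot}{\tilde\partial}_a$, which is~\eqref{eq19} with $(\rho,\eta)\Gamma^a_\gamma:=A^a_\gamma$. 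The last point to verify is that these $A^a_\gamma$ obey the transformation law~\eqref{eq22}, so that by Remark~\ref{r11} they do define a bona fide $(\rho,\eta)$-connection; but this is forced because $\mathcal{V}$ is a globally well-defined $\mathbf{Mod}$-endomorphism, so $A^a_\gamma\overset{\cdot}{\tilde\partial}_a=\mathcal{V}(\tilde\partial_\gamma)$ transforms consistently under the change-of-chart matrix~\eqref{eq13} for the natural $(\rho,\eta)$-base, and one reads off~\eqref{eq22} by matching coefficients. The main obstacle, then, is not conceptual but the bookkeeping of this last coordinate computation; everything else is a direct unwinding of~\eqref{eq51}, \eqref{eq50}, and~\eqref{eq19}.
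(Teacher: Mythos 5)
Your proof is correct. Note that the paper itself states Theorem \ref{t26} without any proof — the content is only implicit in Example \ref{e24}, Remark \ref{r25} and equality \eqref{eq50} — so your argument supplies details the paper omits rather than paralleling an existing one. Both directions are sound: the forward direction is exactly the identification of $\Gamma\left(\left(\rho,\eta\right)\Gamma,Id_{E}\right)$ with the vertical projector recorded in \eqref{eq50}, together with the linear independence of $\tilde{\partial}_{\alpha},\overset{\cdot }{\tilde{\partial}}_{a}$ to get the equivalence $\mathcal{V}(X)=X\Leftrightarrow Z^{\alpha}=0$; and in the converse direction your observation that applying the second property of \eqref{eq51} to $X=\mathcal{V}(Y)$ yields $\mathcal{V}^{2}=\mathcal{V}$ and $\mathcal{V}|_{\mathrm{vert}}=\mathrm{id}$ is the key point, after which $\mathcal{F}(E)$-linearity and the fact that $\left\{\overset{\cdot }{\tilde{\partial}}_{a}\right\}$ is a base of the vertical module (stated after Definition \ref{d5}) give the local form \eqref{eq19}, with the transformation law \eqref{eq22} forced by global well-definedness of $\mathcal{V}$ as in Remark \ref{r11}. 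This is the natural route and I see no gap.
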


\begin{example}
\label{e27}\textrm{The }$\mathbf{Mod}$\textrm{-endomorphism }%
\begin{equation*}
\begin{array}{rcl}
\Gamma \left( \left( \rho ,\eta \right) TE,\left( \rho ,\eta \right) \tau
_{E},E\right) & ^{\underrightarrow{\ \ \mathcal{H}\ \ }} & \Gamma \left(
\left( \rho ,\eta \right) TE,\left( \rho ,\eta \right) \tau _{E},E\right) \\ 
Z^{\alpha }\tilde{\delta}_{\alpha }+Y^{a}\overset{\cdot }{\tilde{\partial}}%
_{a} & \longmapsto & Z^{\alpha }\tilde{\delta}_{\alpha }%
\end{array}%
\end{equation*}%
\textrm{is a projector which will be called horizontal projector.}
\end{example}

\begin{remark}
\label{r28}\textrm{We have }$H\left( \tilde{\delta}_{\alpha }\right) =\tilde{%
\delta}_{\alpha }$\textrm{\ and }$H\big(\overset{\cdot }{\tilde{\partial}}%
_{a}\big)=0.$\textrm{\ Therefore, we obtain }$H\left( \tilde{\partial}%
_{\alpha }\right) =\tilde{\delta}_{\alpha }.$
\end{remark}

\begin{theorem}
\label{t29}\emph{A }$\left( \rho ,\eta \right) $\emph{-connection for the
vector bundle} $\left( E,\pi ,M\right) $ \emph{\ is characterized by the
existence of a }$\mathbf{Mod}$\emph{-endomorphism }$\mathcal{H}$\emph{\ of }$%
\Gamma \left( \left( \rho ,\eta \right) TE,\left( \rho ,\eta \right) \tau
_{E},E\right) $ \emph{with the properties:}%
\begin{equation}
\begin{array}{c}
\mathcal{H}\left( \Gamma \left( \left( \rho ,\eta \right) TE,\left( \rho
,\eta \right) \tau _{E},E\right) \right) \subset \Gamma \left( H\left( \rho
,\eta \right) TE,\left( \rho ,\eta \right) \tau _{E},E\right) \vspace*{1mm}
\\ 
\mathcal{H}\left( X\right) =X\Longleftrightarrow X\in \Gamma \left( H\left(
\rho ,\eta \right) TE,\left( \rho ,\eta \right) \tau _{E},E\right) .%
\end{array}
\label{eq52}
\end{equation}
\end{theorem}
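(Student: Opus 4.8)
The plan is to prove the two directions of the characterization separately, using the horizontal projector $\mathcal{H}$ of Example~\ref{e27} as the witness for one direction and the kernel/fixed-point data of an abstract $\mathcal{H}$ to reconstruct a $(\rho ,\eta )$-connection for the other.

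First I would establish the easy (``necessity'') direction: given a $(\rho ,\eta )$-connection $(\rho ,\eta )\Gamma $, consider the $\mathbf{Mod}$-endomorphism $\mathcal{H}$ defined in Example~\ref{e27} by $\mathcal{H}(Z^{\alpha }\tilde{\delta}_{\alpha }+Y^{a}\overset{\cdot }{\tilde{\partial}}_{a})=Z^{\alpha }\tilde{\delta}_{\alpha }$. Since by \eqref{eq49} it is a projector, $\mathcal{H}=\mathbf{1}-\mathcal{V}$ with $\mathcal{V}$ the vertical projector of Example~\ref{e24}, and Remark~\ref{r28} together with the definition of the horizontal interior differential system $H(\rho ,\eta )TE=\ker((\rho ,\eta )\Gamma ,Id_{E})$ (Definition~\ref{d8}) show that the image of $\mathcal{H}$ lies in $\Gamma (H(\rho ,\eta )TE,\ldots )$ and that $\mathcal{H}(X)=X$ exactly when $X\in \Gamma (H(\rho ,\eta )TE,\ldots )$. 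This gives the first half of \eqref{eq52}; I would also check that $\mathcal{H}$ is $\mathcal{F}(E)$-linear, which is immediate from its expression in the adapted $(\rho ,\eta )$-base.

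Second, for the ``sufficiency'' direction, suppose we are given an $\mathbf{Mod}$-endomorphism $\mathcal{H}$ satisfying \eqref{eq52}. Set $\mathcal{V}:=\mathbf{1}-\mathcal{H}$. The second property in \eqref{eq52} forces $\mathcal{H}$ to restrict to the identity on $\Gamma (V(\rho ,\eta )TE,\ldots )$... actually here I must be careful: I would instead argue directly that $\mathcal{V}$ maps everything into $\Gamma (V(\rho ,\eta )TE,\ldots )$ and restricts to the identity there, i.e. that $\mathcal{V}$ satisfies \eqref{eq51}, and then invoke Theorem~\ref{t26} to conclude that $\mathcal{V}$ is (equivalently, determines) a $(\rho ,\eta )$-connection. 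To see $\mathcal{V}$ satisfies \eqref{eq51}: for arbitrary $X$, write $X=\mathcal{H}(X)+\mathcal{V}(X)$; since $X-\mathcal{V}(X)=\mathcal{H}(X)$, applying $\mathcal{H}$ and using $\mathcal{H}(\mathcal{H}(X))=\mathcal{H}(X)$ (because $\mathcal{H}(X)$ is a fixed point, being in $\Gamma (H(\rho ,\eta )TE,\ldots )$ by the first line of \eqref{eq52}) shows $\mathcal{H}(\mathcal{V}(X))=0$. Then one needs $\ker \mathcal{H}=\Gamma (V(\rho ,\eta )TE,\ldots )$; the inclusion $\supseteq $ is not automatic, so I would extract it by expressing $\mathcal{H}(\overset{\cdot }{\tilde{\partial}}_{a})$ in the natural $(\rho ,\eta )$-base, using the first property of \eqref{eq52} to see it has the form $(\rho ,\eta )\Gamma _{a}^{\beta }\tilde{\delta}_{\beta }$-type components lying in the horizontal system, and then the fixed-point condition applied to such elements to force those components to vanish. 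With $\ker \mathcal{H}=\Gamma (V(\rho ,\eta )TE,\ldots )$ in hand, $\mathcal{V}(X)\in \Gamma (V(\rho ,\eta )TE,\ldots )$ and, since elements of $\Gamma (V(\rho ,\eta )TE,\ldots )$ are killed by $\mathcal{H}$, $\mathcal{V}$ restricts to the identity there; thus \eqref{eq51} holds and Theorem~\ref{t26} finishes the proof.

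The main obstacle I expect is the bookkeeping in the second direction: translating the abstract properties \eqref{eq52} into the concrete statement that $\ker \mathcal{H}$ is exactly the vertical system, since \eqref{eq52} as written only directly controls the image of $\mathcal{H}$ and its fixed points, not its kernel. The cleanest route is to work in the adapted base $(\tilde{\delta}_{\alpha },\overset{\cdot }{\tilde{\partial}}_{a})$ and the natural base $(\tilde{\partial}_{\alpha },\overset{\cdot }{\tilde{\partial}}_{a})$, write $\mathcal{H}(\overset{\cdot }{\tilde{\partial}}_{a})$ and $\mathcal{H}(\tilde{\partial}_{\alpha })$ in terms of unknown local functions, impose \eqref{eq52}, and read off that $\mathcal{H}(\tilde{\partial}_{\alpha })=\tilde{\partial}_{\alpha }-(\rho ,\eta )\Gamma _{\alpha }^{a}\overset{\cdot }{\tilde{\partial}}_{a}$ for suitable $(\rho ,\eta )\Gamma _{\alpha }^{a}$, whence $\mathcal{V}=\mathbf{1}-\mathcal{H}$ is precisely the endomorphism $\Gamma ((\rho ,\eta )\Gamma ,Id_{E})$ of \eqref{eq50}; checking that these $(\rho ,\eta )\Gamma _{\alpha }^{a}$ obey the transformation law \eqref{eq22} (equivalently, that $\mathcal{V}$ is globally well defined, which it is by hypothesis) then gives a genuine $(\rho ,\eta )$-connection via Remark~\ref{r11}. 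Everything else is routine linear algebra over $\mathcal{F}(E)$.
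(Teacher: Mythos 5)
The paper states Theorem \ref{t29} without proof, so there is nothing to compare your argument against line by line; I can only assess it on its own terms. Your necessity direction is fine: given $\left( \rho ,\eta \right) \Gamma $, the horizontal projector of Example \ref{e27} is $\mathcal{F}\left( E\right) $-linear, has image spanned by the $\tilde{\delta}_{\alpha }$ (which by \eqref{eq25}--\eqref{eq26} form a base of $\Gamma \left( H\left( \rho ,\eta \right) TE,\left( \rho ,\eta \right) \tau _{E},E\right) $), and fixes exactly the sections with vanishing vertical components.

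The sufficiency direction, however, has a genuine gap at precisely the point you flagged. You claim that the fixed-point condition in \eqref{eq52} forces the components of $\mathcal{H}\big(\overset{\cdot }{\tilde{\partial}}_{a}\big)$ to vanish, so that $\ker \mathcal{H}=\Gamma \left( V\left( \rho ,\eta \right) TE,\left( \rho ,\eta \right) \tau _{E},E\right) $ and $\mathcal{V}=\mathbf{1}-\mathcal{H}$ satisfies \eqref{eq51}. This does not follow. Take any tensorial family $W_{a}^{\beta }$ and define
\begin{equation*}
\mathcal{H}\left( Z^{\alpha }\tilde{\delta}_{\alpha }+Y^{a}\overset{\cdot }{\tilde{\partial}}_{a}\right) =\left( Z^{\beta }+Y^{a}W_{a}^{\beta }\right) \tilde{\delta}_{\beta }.
\end{equation*}
Its image lies in $\Gamma \left( H\left( \rho ,\eta \right) TE,\left( \rho ,\eta \right) \tau _{E},E\right) $, and $\mathcal{H}\left( X\right) =X$ forces $Y^{a}=0$ (compare coefficients of the linearly independent $\overset{\cdot }{\tilde{\partial}}_{a}$), so both lines of \eqref{eq52} hold; yet $\mathcal{H}\big(\overset{\cdot }{\tilde{\partial}}_{a}\big)=W_{a}^{\beta }\tilde{\delta}_{\beta }\neq 0$ in general, so $\mathcal{V}=\mathbf{1}-\mathcal{H}$ does not map into the vertical submodule and Theorem \ref{t26} cannot be invoked. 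The conditions \eqref{eq52} control the image and the fixed points of $\mathcal{H}$ but leave its kernel undetermined; the missing hypothesis is exactly $Ker\left( \mathcal{H}\right) =\Gamma \left( V\left( \rho ,\eta \right) TE,\left( \rho ,\eta \right) \tau _{E},E\right) $, which is what Corollary \ref{c30} adds. (One can of course observe that $H\left( \rho ,\eta \right) TE$ is itself defined in Definition \ref{d8} only in the presence of a connection, so the sufficiency direction of Theorem \ref{t29} as literally stated is circular; but that observation is not the argument you gave, and your stated route does not close.)
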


\begin{corollary}
\label{c30}\emph{A }$\left( \rho ,\eta \right) $\emph{-connection for the
vector bundle} $\left( E,\pi ,M\right) $ \emph{is characterized by the
existence of a }$\mathbf{Mod}$\emph{-endomorphism }$\mathcal{H}$\emph{\ of }$%
\Gamma \left( \left( \rho ,\eta \right) TE,\left( \rho ,\eta \right) \tau
_{E},E\right) $\emph{\ with the properties:}%
\begin{equation}
\begin{array}{c}
\mathcal{H}^{2}=\mathcal{H}\vspace*{1mm}, \\ 
Ker\left( \mathcal{H}\right) =\left( \Gamma \left( V\left( \rho ,\eta
\right) TE,\left( \rho ,\eta \right) \tau _{E},E\right) ,+,\cdot \right) .%
\end{array}
\label{eq53}
\end{equation}
\end{corollary}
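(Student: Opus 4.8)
The plan is to reduce the statement to Theorem~\ref{t26} (the characterization of a $\left( \rho ,\eta \right) $-connection by its vertical projector), by passing between $\mathcal{H}$ and the complementary $\mathbf{Mod}$-endomorphism $\mathcal{V}:=Id-\mathcal{H}$, using only the elementary fact that an idempotent $\mathbf{Mod}$-endomorphism $e$ of $\Gamma \left( \left( \rho ,\eta \right) TE,\left( \rho ,\eta \right) \tau _{E},E\right) $ satisfies $Im\left( e\right) =\left\{ X:eX=X\right\} $ and $\Gamma \left( \left( \rho ,\eta \right) TE,\left( \rho ,\eta \right) \tau _{E},E\right) =Im\left( e\right) \oplus Ker\left( e\right) $. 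For the direct implication, assume $\left( \rho ,\eta \right) \Gamma $ is a $\left( \rho ,\eta \right) $-connection and take $\mathcal{H}$ to be the horizontal projector of Example~\ref{e27}. Expanding an arbitrary section in the adapted $\left( \rho ,\eta \right) $-base as $Z^{\alpha }\tilde{\delta}_{\alpha }+Y^{a}\overset{\cdot }{\tilde{\partial}}_{a}$, we have $\mathcal{H}\left( Z^{\alpha }\tilde{\delta}_{\alpha }+Y^{a}\overset{\cdot }{\tilde{\partial}}_{a}\right) =Z^{\alpha }\tilde{\delta}_{\alpha }$, which gives $\mathcal{H}^{2}=\mathcal{H}$ immediately, while $\mathcal{H}\left( X\right) =0$ holds exactly when the $\tilde{\delta}_{\alpha }$-components vanish, i.e. when $X=Y^{a}\overset{\cdot }{\tilde{\partial}}_{a}\in \Gamma \left( V\left( \rho ,\eta \right) TE,\left( \rho ,\eta \right) \tau _{E},E\right) $; conversely every such $X$ lies in $Ker\left( \mathcal{H}\right) $ by Remark~\ref{r28}. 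Hence $Ker\left( \mathcal{H}\right) =\Gamma \left( V\left( \rho ,\eta \right) TE,\left( \rho ,\eta \right) \tau _{E},E\right) $.

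For the converse, suppose $\mathcal{H}$ is a $\mathbf{Mod}$-endomorphism of $\Gamma \left( \left( \rho ,\eta \right) TE,\left( \rho ,\eta \right) \tau _{E},E\right) $ with $\mathcal{H}^{2}=\mathcal{H}$ and $Ker\left( \mathcal{H}\right) =\Gamma \left( V\left( \rho ,\eta \right) TE,\left( \rho ,\eta \right) \tau _{E},E\right) $, and put $\mathcal{V}:=Id-\mathcal{H}$. Then $\mathcal{V}^{2}=Id-2\mathcal{H}+\mathcal{H}^{2}=Id-\mathcal{H}=\mathcal{V}$, so $\mathcal{V}$ is again a projector; for every $X$ the section $\mathcal{V}\left( X\right) $ is a $\mathcal{V}$-fixed point, hence lies in $Im\left( \mathcal{V}\right) $, and the equivalences $\mathcal{V}\left( X\right) =X\Longleftrightarrow \mathcal{H}\left( X\right) =0\Longleftrightarrow X\in \Gamma \left( V\left( \rho ,\eta \right) TE,\left( \rho ,\eta \right) \tau _{E},E\right) $ identify $Im\left( \mathcal{V}\right) $ with $\Gamma \left( V\left( \rho ,\eta \right) TE,\left( \rho ,\eta \right) \tau _{E},E\right) $. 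Therefore $\mathcal{V}\left( \Gamma \left( \left( \rho ,\eta \right) TE,\left( \rho ,\eta \right) \tau _{E},E\right) \right) \subseteq \Gamma \left( V\left( \rho ,\eta \right) TE,\left( \rho ,\eta \right) \tau _{E},E\right) $ and $\mathcal{V}\left( X\right) =X\Longleftrightarrow X\in \Gamma \left( V\left( \rho ,\eta \right) TE,\left( \rho ,\eta \right) \tau _{E},E\right) $, which are precisely the properties \eqref{eq51} of Theorem~\ref{t26}. Consequently $\mathcal{V}$ is the vertical projector of a $\left( \rho ,\eta \right) $-connection $\left( \rho ,\eta \right) \Gamma $ for $\left( E,\pi ,M\right) $, and by Theorem~\ref{t29} the endomorphism $\mathcal{H}=Id-\mathcal{V}$ is its horizontal projector, so the two conditions of the corollary indeed characterize $\left( \rho ,\eta \right) $-connections.

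The point that needs care — and the main obstacle — is the step in the converse where the purely algebraic hypotheses on $\mathcal{H}$ must be shown to pin down $Im\left( \mathcal{V}\right) $ as \emph{exactly} the vertical module $\Gamma \left( V\left( \rho ,\eta \right) TE,\left( \rho ,\eta \right) \tau _{E},E\right) $, rather than merely as some $\mathcal{F}\left( E\right) $-submodule complementary to $Im\left( \mathcal{H}\right) $. This is resolved by applying the idempotent identity $Im\left( e\right) =\left\{ X:eX=X\right\} $ with $e=\mathcal{V}$ and combining it with the prescribed kernel $Ker\left( \mathcal{H}\right) =\Gamma \left( V\left( \rho ,\eta \right) TE,\left( \rho ,\eta \right) \tau _{E},E\right) $; once this is in place, the $\mathcal{F}\left( E\right) $-linearity of $\mathcal{H}$ (hence of $\mathcal{V}$) guarantees that the associated fibrewise splitting has constant rank and is compatible with changes of fibred charts, so that it genuinely defines a $\left( \rho ,\eta \right) $-connection in the sense of Definition~\ref{d7} and not just a pointwise decomposition.
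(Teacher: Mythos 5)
Your proof is correct and follows the route the paper leaves implicit: the corollary is just Theorems \ref{t26} and \ref{t29} restated for the complementary idempotent $\mathcal{V}=Id-\mathcal{H}$, using the standard facts that an idempotent's image is its fixed-point set and that $Im\left( Id-\mathcal{H}\right) =Ker\left( \mathcal{H}\right) $. The verification of both directions is accurate, so nothing further is needed.
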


\begin{remark}
\label{r31}\textrm{For any }$X\in \Gamma \left( \left( \rho ,\eta \right)
TE,\left( \rho ,\eta \right) \tau _{E},E\right) $\textrm{\ we obtain the
unique decomposition} 
\begin{equation*}
X=\mathcal{H}X+\mathcal{V}X.
\end{equation*}
\end{remark}

\begin{proposition}
\label{p32}\emph{After some calculations we obtain}%
\begin{equation}
\begin{array}{c}
N_{\mathcal{V}}\left( X,Y\right) =\mathcal{V}\left[ \mathcal{H}X,\mathcal{H}Y%
\right] _{\left( \rho ,\eta \right) TE}=N_{\mathcal{H}}\left( X,Y\right) ,%
\end{array}
\label{eq54}
\end{equation}%
\emph{\ }\textit{for any }$X,Y\in \Gamma \left( \left( \rho ,\eta \right)
TE,\left( \rho ,\eta \right) \tau _{E},E\right) .$
\end{proposition}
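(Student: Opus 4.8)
The plan is to compute the Nijenhuis-type tensor $N_{\mathcal V}$ directly from its definition, using the projector identity $\mathcal V^2=\mathcal V$ together with the decomposition $X=\mathcal H X+\mathcal V X$ and its dual $\mathcal V=\mathrm{id}-\mathcal H$. First I would write
\[
N_{\mathcal V}(X,Y)=[\mathcal V X,\mathcal V Y]_{(\rho,\eta)TE}+\mathcal V^2[X,Y]_{(\rho,\eta)TE}-\mathcal V[\mathcal V X,Y]_{(\rho,\eta)TE}-\mathcal V[X,\mathcal V Y]_{(\rho,\eta)TE},
\]
and since $\mathcal V$ is a projector, $\mathcal V^2=\mathcal V$, so the middle term is $\mathcal V[X,Y]_{(\rho,\eta)TE}$. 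It suffices by $\mathbb R$-bilinearity to verify the claimed equality on the adapted $(\rho,\eta)$-base $\bigl(\tilde\delta_\alpha,\overset{\cdot}{\tilde\partial}_a\bigr)$, on which $\mathcal V\tilde\delta_\alpha=0$ and $\mathcal V\overset{\cdot}{\tilde\partial}_a=\overset{\cdot}{\tilde\partial}_a$, and then invoke the bracket relations \eqref{eq29} and \eqref{eq31} of Theorem~\ref{t14} to evaluate the right-hand side.

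The computation splits into the three cases $(X,Y)=(\tilde\delta_\alpha,\tilde\delta_\beta)$, $(\tilde\delta_\alpha,\overset{\cdot}{\tilde\partial}_b)$, $(\overset{\cdot}{\tilde\partial}_a,\overset{\cdot}{\tilde\partial}_b)$. In the pure-horizontal case, $\mathcal V X=\mathcal V Y=0$ kills the first and last two terms of $N_{\mathcal V}$, leaving only $\mathcal V[\tilde\delta_\alpha,\tilde\delta_\beta]_{(\rho,\eta)TE}$; by \eqref{eq29} this equals $(\rho,\eta,h)\mathbb R^{a}_{\ \alpha\beta}\overset{\cdot}{\tilde\partial}_a$, which is precisely $\mathcal V\bigl[\mathcal H\tilde\delta_\alpha,\mathcal H\tilde\delta_\beta\bigr]_{(\rho,\eta)TE}$ since $\mathcal H\tilde\delta_\alpha=\tilde\delta_\alpha$. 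In the mixed case one checks, using \eqref{eq31} together with $\mathcal V\overset{\cdot}{\tilde\partial}_b=\overset{\cdot}{\tilde\partial}_b$ and $[\overset{\cdot}{\tilde\partial}_a,\overset{\cdot}{\tilde\partial}_b]_{(\rho,\eta)TE}=0$, that all four terms of $N_{\mathcal V}$ cancel, and likewise $\mathcal H\overset{\cdot}{\tilde\partial}_b=0$ makes the middle expression vanish. The purely vertical case is immediate since $\mathcal V$ restricts to the identity there and the vertical bracket vanishes. This establishes $N_{\mathcal V}(X,Y)=\mathcal V\bigl[\mathcal H X,\mathcal H Y\bigr]_{(\rho,\eta)TE}$.

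For the second equality $N_{\mathcal H}(X,Y)=\mathcal V\bigl[\mathcal H X,\mathcal H Y\bigr]_{(\rho,\eta)TE}$ I would proceed symmetrically, expanding
\[
N_{\mathcal H}(X,Y)=[\mathcal H X,\mathcal H Y]_{(\rho,\eta)TE}+\mathcal H[X,Y]_{(\rho,\eta)TE}-\mathcal H[\mathcal H X,Y]_{(\rho,\eta)TE}-\mathcal H[X,\mathcal H Y]_{(\rho,\eta)TE},
\]
and using $X=\mathcal H X+\mathcal V X$ to rewrite the last three terms. The key algebraic observation is that $[\mathcal H X,\mathcal H Y]_{(\rho,\eta)TE}+\mathcal H[\mathcal H X+\mathcal V X,\mathcal H Y+\mathcal V Y]_{(\rho,\eta)TE}-\mathcal H[\mathcal H X,\mathcal H Y+\mathcal V Y]_{(\rho,\eta)TE}-\mathcal H[\mathcal H X+\mathcal V X,\mathcal H Y]_{(\rho,\eta)TE}$ collapses: the $\mathcal H[\,\cdot\,]$ terms telescope, leaving $[\mathcal H X,\mathcal H Y]_{(\rho,\eta)TE}-\mathcal H[\mathcal H X,\mathcal H Y]_{(\rho,\eta)TE}=(\mathrm{id}-\mathcal H)[\mathcal H X,\mathcal H Y]_{(\rho,\eta)TE}=\mathcal V[\mathcal H X,\mathcal H Y]_{(\rho,\eta)TE}$, using $\mathcal V=\mathrm{id}-\mathcal H$ from Remark~\ref{r31}. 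The same telescoping argument, run for $\mathcal V$, also reproves the first equality cleanly without cases.

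The main obstacle I anticipate is purely bookkeeping: one must be careful that the $\mathbf{Mod}$-endomorphisms $\mathcal H,\mathcal V$ are only $\mathcal F(E)$-linear, not bracket-compatible, so the telescoping works at the level of $\mathbb R$-bilinear expressions but every application of $\mathcal H$ or $\mathcal V$ to a bracket must be tracked explicitly; the identities $\mathcal V^2=\mathcal V$, $\mathcal H^2=\mathcal H$, $\mathcal H\mathcal V=\mathcal V\mathcal H=0$ and $\mathcal H+\mathcal V=\mathrm{id}$ (all consequences of Corollary~\ref{c30} and Remark~\ref{r31}) are what make the four-term Nijenhuis expression collapse to a single term, and the only real content is verifying that these four operator identities are used correctly. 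No hard analysis is involved.
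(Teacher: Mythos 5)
Your argument is essentially correct, and it is worth noting that the paper itself offers no proof of Proposition \ref{p32} beyond the phrase ``after some calculations,'' so your telescoping computation is exactly the kind of detail the text suppresses. Two small points should be tightened. First, the reduction to the adapted base $\bigl(\tilde{\delta}_{\alpha},\overset{\cdot }{\tilde{\partial}}_{a}\bigr)$ requires more than $\mathbb{R}$-bilinearity: general sections are $\mathcal{F}(E)$-linear combinations of the base sections, so you must first observe that $N_{e}$ is $\mathcal{F}(E)$-bilinear (tensorial), which follows from a routine computation using the Leibniz property of the anchored bracket; your second, coordinate-free telescoping argument happily sidesteps this issue altogether. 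Second, the telescoping as you display it is not quite complete: expanding $\mathcal{H}[X,Y]-\mathcal{H}[\mathcal{H}X,Y]-\mathcal{H}[X,\mathcal{H}Y]$ with $X=\mathcal{H}X+\mathcal{V}X$ leaves
\begin{equation*}
-\,\mathcal{H}\left[ \mathcal{H}X,\mathcal{H}Y\right] _{\left( \rho ,\eta \right) TE}+\mathcal{H}\left[ \mathcal{V}X,\mathcal{V}Y\right] _{\left( \rho ,\eta \right) TE},
\end{equation*}
and the analogous expansion of $N_{\mathcal{V}}$ leaves the residue $\left[ \mathcal{V}X,\mathcal{V}Y\right] _{\left( \rho ,\eta \right) TE}-\mathcal{V}\left[ \mathcal{V}X,\mathcal{V}Y\right] _{\left( \rho ,\eta \right) TE}$. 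Both residues vanish only because the vertical interior differential system is involutive, i.e.\ the bracket of two vertical sections is again vertical (immediate from \eqref{eq15}, since the $\pi ^{\ast }(h^{\ast }F)$-component of such a bracket is zero), so that $\mathcal{H}$ annihilates it and $\mathcal{V}$ fixes it. You use this fact implicitly in your basis computation, but it must be stated for arbitrary sections for the telescoping to close; once it is, both identities $N_{\mathcal{V}}(X,Y)=\mathcal{V}\left[ \mathcal{H}X,\mathcal{H}Y\right] _{\left( \rho ,\eta \right) TE}=N_{\mathcal{H}}(X,Y)$ follow as you describe.
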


\begin{corollary}
\label{c33}\emph{The horizontal interior differential system }%
\begin{equation*}
\left( H\left( \rho ,\eta \right) TE,\left( \rho ,\eta \right) \tau
_{E},E\right)
\end{equation*}%
\emph{is involutive if and only if }$N_{\mathcal{V}}=0$\emph{\ or }$N_{%
\mathcal{H}}=0.$
\end{corollary}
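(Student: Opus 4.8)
The plan is to reduce the statement entirely to Proposition~\ref{p32} together with the module-theoretic meaning of involutivity. First I would record, from Theorem~\ref{t29} and Remark~\ref{r31} (or equivalently Corollary~\ref{c30}), that a section $X$ of $\left( \left( \rho ,\eta \right) TE,\left( \rho ,\eta \right) \tau _{E},E\right) $ belongs to $\Gamma \left( H\left( \rho ,\eta \right) TE,\left( \rho ,\eta \right) \tau _{E},E\right) $ if and only if $\mathcal{H}X=X$, which by the decomposition $X=\mathcal{H}X+\mathcal{V}X$ of Remark~\ref{r31} is the same as $\mathcal{V}X=0$. Since $\left( H\left( \rho ,\eta \right) TE,\left( \rho ,\eta \right) \tau _{E},E\right) $ is a vector subbundle and the bracket $\left[ ,\right] _{\left( \rho ,\eta \right) TE}$ is local, its involutivity means precisely that $\mathcal{V}\left[ X,Y\right] _{\left( \rho ,\eta \right) TE}=0$ for all $X,Y\in \Gamma \left( H\left( \rho ,\eta \right) TE,\left( \rho ,\eta \right) \tau _{E},E\right) $.

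For the implication ``involutive $\Rightarrow N_{\mathcal{V}}=0$'' I would take arbitrary $X,Y\in \Gamma \left( \left( \rho ,\eta \right) TE,\left( \rho ,\eta \right) \tau _{E},E\right) $ and invoke Proposition~\ref{p32} to write $N_{\mathcal{V}}\left( X,Y\right) =\mathcal{V}\left[ \mathcal{H}X,\mathcal{H}Y\right] _{\left( \rho ,\eta \right) TE}$. As $\mathcal{H}X$ and $\mathcal{H}Y$ are horizontal, involutivity forces $\left[ \mathcal{H}X,\mathcal{H}Y\right] _{\left( \rho ,\eta \right) TE}$ to be horizontal, hence $\mathcal{V}\left[ \mathcal{H}X,\mathcal{H}Y\right] _{\left( \rho ,\eta \right) TE}=0$, so $N_{\mathcal{V}}=0$; the equality $N_{\mathcal{V}}=N_{\mathcal{H}}$ of Proposition~\ref{p32} then also gives $N_{\mathcal{H}}=0$.

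Conversely, assuming $N_{\mathcal{V}}=0$ (and therefore $N_{\mathcal{H}}=0$ as well, the two being interchangeable by Proposition~\ref{p32}), I would pick $X,Y\in \Gamma \left( H\left( \rho ,\eta \right) TE,\left( \rho ,\eta \right) \tau _{E},E\right) $, so that $\mathcal{H}X=X$ and $\mathcal{H}Y=Y$. Proposition~\ref{p32} now yields $\mathcal{V}\left[ X,Y\right] _{\left( \rho ,\eta \right) TE}=\mathcal{V}\left[ \mathcal{H}X,\mathcal{H}Y\right] _{\left( \rho ,\eta \right) TE}=N_{\mathcal{V}}\left( X,Y\right) =0$, which by the first paragraph is exactly the involutivity of $\left( H\left( \rho ,\eta \right) TE,\left( \rho ,\eta \right) \tau _{E},E\right) $.

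I do not expect a serious obstacle: the whole substance is already contained in the identity $N_{\mathcal{V}}\left( X,Y\right) =\mathcal{V}\left[ \mathcal{H}X,\mathcal{H}Y\right] _{\left( \rho ,\eta \right) TE}=N_{\mathcal{H}}\left( X,Y\right) $ supplied by Proposition~\ref{p32}. The only point deserving a line of care is the passage between involutivity understood fibrewise and the statement about the $\mathcal{F}\left( E\right) $-module of sections, i.e.\ observing that $\mathcal{V}\left[ X,Y\right] _{\left( \rho ,\eta \right) TE}$ is $\mathcal{F}\left( E\right) $-bilinear when $X,Y$ are horizontal (equivalently, that $N_{\mathcal{V}}$ and $N_{\mathcal{H}}$ are tensorial), so that it is enough to test the bracket on a local generating family of horizontal sections.
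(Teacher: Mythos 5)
Your proposal is correct and is exactly the argument the paper intends: the corollary is stated as an immediate consequence of Proposition~\ref{p32}, and you fill in the routine equivalence between involutivity of the horizontal subbundle and the vanishing of $\mathcal{V}\left[ \mathcal{H}X,\mathcal{H}Y\right] _{\left( \rho ,\eta \right) TE}$, using $N_{\mathcal{V}}=N_{\mathcal{H}}$ to dispose of the ``or''. Nothing further is needed.
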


\subsection{The almost product structure}

\ \ 

\begin{definition}
\label{d34}Any $\mathbf{Mod}$-endomorphism $e$ of $\Gamma \left( (\rho ,\eta
)TE,\break (\rho ,\eta \ )\tau _{E},E\right) $ with the property%
\begin{equation}
\begin{array}{c}
e^{2}=Id%
\end{array}
\label{eq55}
\end{equation}%
will be called the \emph{almost product structure}.
\end{definition}

\begin{example}
\label{e35}\textrm{The }$\mathbf{Mod}$\textrm{-endomorphism }%
\begin{equation*}
\begin{array}{rcl}
\Gamma \left( \left( \rho ,\eta \right) TE,\left( \rho ,\eta \right) \tau
_{E},E\right) & ^{\underrightarrow{\ \ \mathcal{P}\ \ }} & \Gamma \left(
\left( \rho ,\eta \right) TE,\left( \rho ,\eta \right) \tau _{E},E\right) \\ 
Z^{\alpha }\tilde{\delta}_{\alpha }+Y^{a}\overset{\cdot }{\tilde{\partial}}%
_{a} & \longmapsto & Z^{\alpha }\tilde{\delta}_{\alpha }-Y^{a}\overset{\cdot 
}{\tilde{\partial}}_{a}%
\end{array}%
\end{equation*}%
\textrm{is an almost product structure.}
\end{example}

\begin{remark}
\label{r36}\textrm{The previous almost product structure has the properties:}%
\begin{equation}
\begin{array}{l}
\mathcal{P}=\left( 2\mathcal{H}-Id\right) ; \\ 
\mathcal{P}=\left( Id-2\mathcal{V}\right) ; \\ 
\mathcal{P}=\left( \mathcal{H}-\mathcal{V}\right) .%
\end{array}
\label{eq56}
\end{equation}
\end{remark}

\begin{remark}
\label{r37}\textrm{We obtain that }$P\left( \tilde{\delta}_{\alpha }\right) =%
\tilde{\delta}_{\alpha }$\textrm{\ and }$P\left( \overset{\cdot }{\tilde{%
\partial}}_{a}\right) =-\overset{\cdot }{\tilde{\partial}}_{a}.$\textrm{\
Therefore, it follows \vspace*{-2mm} }%
\begin{equation*}
\mathcal{P}\left( \tilde{\partial}_{\alpha }\right) =\tilde{\delta}_{\alpha
}-\rho \Gamma _{\alpha }^{a}\overset{\cdot }{\tilde{\partial}}_{a}.
\end{equation*}
\end{remark}

\begin{theorem}
\label{t38}\emph{A }$\left( \rho ,\eta \right) $\emph{-connection for the
vector bundle }$\left( E,\pi ,M\right) $\emph{\ is characterized by the
existence of a }$\mathbf{Mod}$\emph{-endomorphism }$\mathcal{P}$\emph{\ of }$%
\Gamma \left( \left( \rho ,\eta \right) TE,\left( \rho ,\eta \right) \tau
_{E},E\right) $ \emph{with the following property:}%
\begin{equation}
\begin{array}{c}
\mathcal{P}\left( X\right) =-X\Longleftrightarrow X\in \Gamma \left( V\left(
\rho ,\eta \right) TE,\left( \rho ,\eta \right) \tau _{E},E\right) .%
\end{array}
\label{eq57}
\end{equation}
\end{theorem}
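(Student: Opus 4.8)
The plan is to derive the equivalence from the already-proved characterizations of a $(\rho ,\eta )$-connection in terms of its vertical and horizontal projectors (Theorems \ref{t26} and \ref{t29}, Corollary \ref{c30}), using the classical bijection between an almost product structure and the pair of complementary projectors $\tfrac12(Id\pm\mathcal{P})$. Throughout, $\mathcal{P}$ is an almost product structure in the sense of Definition \ref{d34}, i.e. $\mathcal{P}^{2}=Id$; this hypothesis is inherited from the ambient subsection and, as noted at the end, is genuinely needed.

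For the direct implication, assume $\left( \rho ,\eta \right) \Gamma $ is a $\left( \rho ,\eta \right) $-connection for $\left( E,\pi ,M\right) $ and take $\mathcal{P}$ to be the almost product structure of Example \ref{e35}. Expanding an arbitrary section in the adapted $\left( \rho ,\eta \right) $-base as $X=Z^{\alpha }\tilde{\delta}_{\alpha }+Y^{a}\overset{\cdot }{\tilde{\partial}}_{a}$, the defining formula of $\mathcal{P}$ gives $\mathcal{P}\left( X\right) +X=2Z^{\alpha }\tilde{\delta}_{\alpha }$; since $\left( \tilde{\delta}_{\alpha },\overset{\cdot }{\tilde{\partial}}_{a}\right) $ is a base, $\mathcal{P}\left( X\right) =-X$ holds if and only if $Z^{\alpha }=0$ for all $\alpha $, i.e. if and only if $X=Y^{a}\overset{\cdot }{\tilde{\partial}}_{a}$, which by the remark following Definition \ref{d5} is exactly the condition $X\in \Gamma \left( V\left( \rho ,\eta \right) TE,\left( \rho ,\eta \right) \tau _{E},E\right) $. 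Hence the stated property holds.

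For the converse, let $\mathcal{P}$ be a $\mathbf{Mod}$-endomorphism of $\Gamma \left( \left( \rho ,\eta \right) TE,\left( \rho ,\eta \right) \tau _{E},E\right) $ with $\mathcal{P}^{2}=Id$ and $\mathcal{P}\left( X\right) =-X\Longleftrightarrow X\in \Gamma \left( V\left( \rho ,\eta \right) TE,\left( \rho ,\eta \right) \tau _{E},E\right) $. Put $\mathcal{H}:=\tfrac12\left( Id+\mathcal{P}\right) $ and $\mathcal{V}:=\tfrac12\left( Id-\mathcal{P}\right) $. Using $\mathcal{P}^{2}=Id$ one checks $\mathcal{H}^{2}=\mathcal{H}$, $\mathcal{V}^{2}=\mathcal{V}$, $\mathcal{H}+\mathcal{V}=Id$ and $\mathcal{H}\mathcal{V}=\mathcal{V}\mathcal{H}=0$, so both are projectors. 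Moreover $\mathcal{H}\left( X\right) =0$ is equivalent to $\mathcal{P}\left( X\right) =-X$, hence to $X\in \Gamma \left( V\left( \rho ,\eta \right) TE,\left( \rho ,\eta \right) \tau _{E},E\right) $; thus $Ker\left( \mathcal{H}\right) =\Gamma \left( V\left( \rho ,\eta \right) TE,\left( \rho ,\eta \right) \tau _{E},E\right) $, and $\mathcal{H}$ satisfies the two conditions of Corollary \ref{c30}, which produces a $\left( \rho ,\eta \right) $-connection for $\left( E,\pi ,M\right) $. (Equivalently one may verify \eqref{eq51} for $\mathcal{V}$: $\mathcal{V}\left( X\right) $ is a fixed point of the projector $\mathcal{V}$, hence lies in $\Gamma \left( V\left( \rho ,\eta \right) TE,\left( \rho ,\eta \right) \tau _{E},E\right) $, and $\mathcal{V}\left( X\right) =X\Longleftrightarrow \mathcal{P}\left( X\right) =-X\Longleftrightarrow X\in \Gamma \left( V\left( \rho ,\eta \right) TE,\left( \rho ,\eta \right) \tau _{E},E\right) $, so Theorem \ref{t26} applies.)

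The computations are routine; the only delicate point — and the reason the identity $\mathcal{P}^{2}=Id$ cannot be dropped — is the passage from the $\left( -1\right) $-eigenspace description of $\mathcal{P}$ to the projector properties of $\mathcal{H}$ and $\mathcal{V}$: without $\mathcal{P}^{2}=Id$, neither $\mathcal{H}^{2}=\mathcal{H}$ nor the implication $\mathcal{H}\left( X\right) =0\Rightarrow \mathcal{P}\left( X\right) =-X$ is available, and the characterization fails. Once this identity is assumed, the statement reduces entirely to the earlier results.
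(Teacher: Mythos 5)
Your proof is correct and follows exactly the route the paper sets up but never writes out: Theorem \ref{t38} is stated without proof, and the intended argument is visibly the one you give, namely Example \ref{e35} for the direct implication together with the identities $\mathcal{P}=2\mathcal{H}-Id=Id-2\mathcal{V}$ of Remark \ref{r36} and the characterization of Corollary \ref{c30} for the converse. Your observation that the hypothesis $\mathcal{P}^{2}=Id$ (implicit from the subsection heading but absent from the statement of \eqref{eq57}) is genuinely needed for the converse is accurate and worth recording.
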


\begin{proposition}
\label{p39}\emph{After some calculations, we obtain }%
\begin{equation*}
N_{\mathcal{P}}\left( X,Y\right) =4\mathcal{V}\left[ \mathcal{H}X,\mathcal{H}%
Y\right] ,
\end{equation*}%
\emph{for any }$X,Y\in \Gamma \left( \left( \rho ,\eta \right) TE,\left(
\rho ,\eta \right) \tau _{E},E\right) .$
\end{proposition}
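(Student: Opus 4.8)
The plan is to reduce $N_{\mathcal{P}}$ to a purely algebraic expression in the projectors $\mathcal{H},\mathcal{V}$ and then to read off the vertical part of a horizontal bracket. First I would use $\mathcal{P}^{2}=Id$ from Definition~\ref{d34}, so that the $e^{2}\left[ X,Y\right] $ term in Definition~\ref{d22} becomes simply $\left[ X,Y\right] _{\left( \rho ,\eta \right) TE}$, and hence
\begin{equation*}
N_{\mathcal{P}}\left( X,Y\right) =\left[ \mathcal{P}X,\mathcal{P}Y\right] _{\left( \rho ,\eta \right) TE}+\left[ X,Y\right] _{\left( \rho ,\eta \right) TE}-\mathcal{P}\left[ \mathcal{P}X,Y\right] _{\left( \rho ,\eta \right) TE}-\mathcal{P}\left[ X,\mathcal{P}Y\right] _{\left( \rho ,\eta \right) TE}.
\end{equation*}

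Next I would substitute the unique decompositions $X=\mathcal{H}X+\mathcal{V}X$ and $Y=\mathcal{H}Y+\mathcal{V}Y$ from Remark~\ref{r31} together with $\mathcal{P}=\mathcal{H}-\mathcal{V}$ from \eqref{eq56}, and expand every bracket by $\mathbb{R}$-bilinearity. In each of the two groupings the mixed brackets $\left[ \mathcal{H}X,\mathcal{V}Y\right] _{\left( \rho ,\eta \right) TE}$ and $\left[ \mathcal{V}X,\mathcal{H}Y\right] _{\left( \rho ,\eta \right) TE}$ occur with opposite signs and drop out, leaving
\begin{equation*}
\begin{array}{l}
N_{\mathcal{P}}\left( X,Y\right) =2\left[ \mathcal{H}X,\mathcal{H}Y\right] _{\left( \rho ,\eta \right) TE}-2\mathcal{P}\left[ \mathcal{H}X,\mathcal{H}Y\right] _{\left( \rho ,\eta \right) TE} \\
\qquad +2\left[ \mathcal{V}X,\mathcal{V}Y\right] _{\left( \rho ,\eta \right) TE}+2\mathcal{P}\left[ \mathcal{V}X,\mathcal{V}Y\right] _{\left( \rho ,\eta \right) TE}.
\end{array}
\end{equation*}

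To dispose of the last two terms I would note that the submodule $\Gamma \left( V\left( \rho ,\eta \right) TE,\left( \rho ,\eta \right) \tau _{E},E\right) $, generated by $\left\{ \overset{\cdot }{\tilde{\partial}}_{a}\right\} $, is closed under $\left[ ,\right] _{\left( \rho ,\eta \right) TE}$: putting $Z_{1}^{\alpha }=Z_{2}^{\beta }=0$ in \eqref{eq15} gives $\left[ Y_{1}^{a}\overset{\cdot }{\tilde{\partial}}_{a},Y_{2}^{b}\overset{\cdot }{\tilde{\partial}}_{b}\right] _{\left( \rho ,\eta \right) TE}=0_{\pi ^{\ast }\left( h^{\ast }F\right) }\oplus \left[ Y_{1}^{a}\dot{\partial}_{a},Y_{2}^{b}\dot{\partial}_{b}\right] _{TE}$, and the $TE$-bracket of two fields tangent to the fibres of $\pi $ is again tangent to the fibres. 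Hence $\left[ \mathcal{V}X,\mathcal{V}Y\right] _{\left( \rho ,\eta \right) TE}$ is vertical, and by Theorem~\ref{t38} we get $\mathcal{P}\left[ \mathcal{V}X,\mathcal{V}Y\right] _{\left( \rho ,\eta \right) TE}=-\left[ \mathcal{V}X,\mathcal{V}Y\right] _{\left( \rho ,\eta \right) TE}$, so the two vertical contributions cancel. Finally, rewriting $\mathcal{P}=\mathcal{H}-\mathcal{V}$ in the surviving term and using $Id-\mathcal{H}=\mathcal{V}$ I obtain
\begin{equation*}
\begin{array}{l}
N_{\mathcal{P}}\left( X,Y\right) =2\left[ \mathcal{H}X,\mathcal{H}Y\right] _{\left( \rho ,\eta \right) TE}-2\mathcal{H}\left[ \mathcal{H}X,\mathcal{H}Y\right] _{\left( \rho ,\eta \right) TE}+2\mathcal{V}\left[ \mathcal{H}X,\mathcal{H}Y\right] _{\left( \rho ,\eta \right) TE} \\
\qquad =4\mathcal{V}\left[ \mathcal{H}X,\mathcal{H}Y\right] _{\left( \rho ,\eta \right) TE},
\end{array}
\end{equation*}
which is the claimed identity; comparing with Proposition~\ref{p32} one also reads off $N_{\mathcal{P}}=4N_{\mathcal{V}}=4N_{\mathcal{H}}$. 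The only step that is not mere sign-bookkeeping is the verticality of the bracket of two vertical sections, which I would verify directly from \eqref{eq15}.
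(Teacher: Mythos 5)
Your computation is correct: expanding $N_{\mathcal{P}}$ via $\mathcal{P}^{2}=Id$, $\mathcal{P}=\mathcal{H}-\mathcal{V}$ and the decomposition $X=\mathcal{H}X+\mathcal{V}X$ does make the mixed brackets cancel, the vertical-vertical contributions annihilate by $\mathcal{P}(Z)=-Z$ on vertical sections (and the verticality of $\left[ \mathcal{V}X,\mathcal{V}Y\right] _{\left( \rho ,\eta \right) TE}$, which you rightly check from \eqref{eq15}), and the surviving term collapses to $4\mathcal{V}\left[ \mathcal{H}X,\mathcal{H}Y\right] _{\left( \rho ,\eta \right) TE}$ via $Id-\mathcal{H}=\mathcal{V}$. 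The paper states the result with only ``after some calculations,'' and your argument is exactly the direct computation being alluded to, consistent with Proposition~\ref{p32}.
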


\begin{corollary}
\label{c40}\emph{The horizontal interior differential system }$\left(
H\left( \rho ,\eta \right) TE,\left( \rho ,\eta \right) \tau _{E},E\right) $ 
\emph{is involutive if and only if }$N_{\mathcal{P}}=0.$
\end{corollary}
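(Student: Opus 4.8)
The plan is to derive the stated equivalence from Proposition~\ref{p39} together with Corollary~\ref{c33}. Recall that Proposition~\ref{p39} gives, after direct computation,
\[
N_{\mathcal{P}}\left( X,Y\right) =4\mathcal{V}\left[ \mathcal{H}X,\mathcal{H}Y\right] _{\left( \rho ,\eta \right) TE},
\]
while Proposition~\ref{p32} identifies $\mathcal{V}\left[ \mathcal{H}X,\mathcal{H}Y\right] _{\left( \rho ,\eta \right) TE}=N_{\mathcal{V}}\left( X,Y\right) =N_{\mathcal{H}}\left( X,Y\right)$. Hence $N_{\mathcal{P}}=4N_{\mathcal{V}}=4N_{\mathcal{H}}$ as applications of Nijenhuis type on $\Gamma \left( \left( \rho ,\eta \right) TE,\left( \rho ,\eta \right) \tau _{E},E\right)^{2}$. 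Since the scalar $4$ is invertible in $\mathcal{F}\left( E\right)$, the vanishing of $N_{\mathcal{P}}$ is equivalent to the vanishing of $N_{\mathcal{V}}$ (equivalently of $N_{\mathcal{H}}$).

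First I would state this chain of equalities explicitly, citing Propositions~\ref{p32} and~\ref{p39}, so that $N_{\mathcal{P}}=0 \Longleftrightarrow N_{\mathcal{V}}=0 \Longleftrightarrow N_{\mathcal{H}}=0$. Then I would invoke Corollary~\ref{c33}, which asserts precisely that the horizontal interior differential system $\left( H\left( \rho ,\eta \right) TE,\left( \rho ,\eta \right) \tau _{E},E\right)$ is involutive if and only if $N_{\mathcal{V}}=0$ or $N_{\mathcal{H}}=0$. Composing the two equivalences yields involutivity $\Longleftrightarrow N_{\mathcal{P}}=0$, which is the assertion of the corollary.

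If one wanted a self-contained argument rather than a two-line deduction, the substantive content is already packaged in Corollary~\ref{c33}: involutivity of $\left( H\left( \rho ,\eta \right) TE,\left( \rho ,\eta \right) \tau _{E},E\right)$ means that $\left[ \mathcal{H}X,\mathcal{H}Y\right] _{\left( \rho ,\eta \right) TE}$ is again a horizontal section for all $X,Y$, i.e. $\mathcal{V}\left[ \mathcal{H}X,\mathcal{H}Y\right] _{\left( \rho ,\eta \right) TE}=0$ identically; one then reads off from the boxed formula of Proposition~\ref{p39} that this happens exactly when $N_{\mathcal{P}}$ vanishes. The only point requiring care is that $N_{\mathcal{P}}$, $N_{\mathcal{V}}$, $N_{\mathcal{H}}$ are genuinely $\mathcal{F}\left( E\right)$-bilinear tensorial objects (so that "vanishing" can be checked on the adapted base $\left( \tilde{\delta}_{\alpha },\overset{\cdot }{\tilde{\partial}}_{a}\right)$), which follows from the Nijenhuis-type construction in Definition~\ref{d22} applied to the projector-type endomorphisms.

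The main obstacle is essentially bookkeeping rather than mathematics: one must make sure the factor $4$ in Proposition~\ref{p39} is not an artifact of a particular basis and that the equality $N_{\mathcal{P}}=4N_{\mathcal{V}}$ holds as an identity of bilinear maps, not merely on special arguments. Once that is granted, the corollary is immediate from Corollary~\ref{c33}. I therefore expect the proof to occupy only a few lines, consisting of the displayed identity $N_{\mathcal{P}}=4N_{\mathcal{V}}=4N_{\mathcal{H}}$ followed by an appeal to Corollary~\ref{c33}.
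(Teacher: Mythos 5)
Your proposal is correct and follows exactly the route the paper intends: Proposition \ref{p39} gives $N_{\mathcal{P}}(X,Y)=4\mathcal{V}\left[\mathcal{H}X,\mathcal{H}Y\right]_{\left(\rho,\eta\right)TE}$, which by Proposition \ref{p32} equals $4N_{\mathcal{V}}(X,Y)=4N_{\mathcal{H}}(X,Y)$, so vanishing of $N_{\mathcal{P}}$ is equivalent to the condition of Corollary \ref{c33}. Nothing further is needed.
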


\subsection{\noindent The almost tangent structure}

\ \ 

\begin{definition}
\label{d41}Any $\mathbf{Mod}$-endomorphism $e$ of $\left( \Gamma \!((\rho
,\eta )TE,\break (\rho ,\eta )\tau _{E},E\right) $ with the property%
\begin{equation}
\begin{array}{c}
e^{2}=0%
\end{array}
\label{eq58}
\end{equation}%
will be called the \emph{almost tangent structure.}
\end{definition}

\begin{example}
\label{e42}\textrm{If }$\left( E,\pi ,M\right) =\left( F,\nu ,N\right) $%
\textrm{, }$g\in \mathbf{Man}\left( E,E\right) $\textrm{\ such that }$\left(
g,h\right) $\textrm{\ is a locally invertible }$\mathbf{B}^{\mathbf{v}}$%
\textrm{-morphism, then the }$\mathbf{Mod}$\textrm{-endomorphism }%
\begin{equation*}
\begin{array}{rcl}
\Gamma \left( \left( \rho ,\eta \right) TE,\left( \rho ,\eta \right) \tau
_{E},E\right) & ^{\underrightarrow{\mathcal{J}_{\left( g,h\right) }}} & 
\Gamma \left( \left( \rho ,\eta \right) TE,\left( \rho ,\eta \right) \tau
_{E},E\right) \\ 
Z^{a}\tilde{\delta}_{a}+Y^{b}\overset{\cdot }{\tilde{\partial}}_{b} & 
\longmapsto & \left( \tilde{g}_{a}^{b}\circ h\circ \pi \right) Z^{a}\overset{%
\cdot }{\tilde{\partial}}_{b}%
\end{array}%
\end{equation*}%
\textrm{is an almost tangent structure which will be called the almost
tangent structure associated to the }$\mathbf{B}^{\mathbf{v}}$\textrm{%
-morphism }$\left( g,h\right) $\textrm{. (see: \ref{d18} )}
\end{example}

\begin{example}
\label{e43}\emph{\ }\textrm{We obtain that }%
\begin{equation*}
\mbox{$\mathcal{J}_{\left( g,h\right)
}\left( \tilde{\delta}_{a}\right) =\mathcal{J}_{\left( g,h\right) }\left(
\tilde{\partial}_{a}\right) =\left( \tilde{g}_{a}^{b}\circ h\circ \pi
\right) \overset{\cdot }{\tilde{\partial}}_{b}$ and $\mathcal{J}_{\left(
g,h\right) }\left( \overset{\cdot }{\tilde{\partial}}_{b}\right) =0.$}
\end{equation*}%
\textrm{and we have the following properties:}%
\begin{equation}
\begin{array}{rcl}
\mathcal{J}_{\left( g,h\right) }\circ \mathcal{P} & = & \mathcal{J}_{\left(
g,h\right) };\vspace*{1mm} \\ 
\mathcal{P}\circ \mathcal{J}_{\left( g,h\right) } & = & -\mathcal{J}_{\left(
g,h\right) };\vspace*{1mm} \\ 
\mathcal{J}_{\left( g,h\right) }\circ \mathcal{H} & = & \mathcal{J}_{\left(
g,h\right) };\vspace*{1mm} \\ 
\mathcal{H}\circ \mathcal{J}_{\left( g,h\right) } & = & 0;\vspace*{1mm} \\ 
\mathcal{J}_{\left( g,h\right) }\circ \mathcal{V} & = & 0;\vspace*{1mm} \\ 
\mathcal{V}\circ \mathcal{J}_{\left( g,h\right) } & = & \mathcal{J}_{\left(
g,h\right) };\vspace*{1mm} \\ 
N_{\mathcal{J}_{\left( g,h\right) }} & = & 0.%
\end{array}
\label{eq59}
\end{equation}
\end{example}

\section{Distinguished linear $\left( \protect\rho ,\protect\eta \right) $%
-connections}

\ \ \ 

We consider the following diagram:%
\begin{equation}
\begin{array}{rcl}
E &  & \left( F,\left[ ,\right] _{F,h},\left( \rho ,\eta \right) \right) \\ 
\pi \downarrow &  & ~\downarrow \nu \\ 
M & ^{\underrightarrow{~\ \ \ \ h~\ \ \ \ }} & ~\ N%
\end{array}
\label{eq60}
\end{equation}
where $\left( E,\pi ,M\right) \in \left\vert \mathbf{B}^{\mathbf{v}%
}\right\vert $ and $\left( \left( F,\nu ,N\right) ,\left[ ,\right]
_{F,h},\left( \rho ,\eta \right) \right) $ is a generalized Lie algebroid.

Let 
\begin{equation*}
\left( \mathcal{T}~_{q,s}^{p,r}\left( \left( \rho ,\eta \right) TE,\left(
\rho ,\eta \right) \tau _{E},E\right) ,+,\cdot \right)
\end{equation*}%
be the $\mathcal{F}\left( E\right) $-module of tensor fields by $\left(
_{q,s}^{p,r}\right) $-type from the generalized tangent bundle 
\begin{equation*}
\left( H\left( \rho ,\eta \right) TE,\left( \rho ,\eta \right) \tau
_{E},E\right) \oplus \left( V\left( \rho ,\eta \right) TE,\left( \rho ,\eta
\right) \tau _{E},E\right) .
\end{equation*}

An arbitrary tensor field $T$ is written as 
\begin{equation*}
\begin{array}{c}
T=T_{\beta _{1}...\beta _{q}b_{1}...b_{s}}^{\alpha _{1}...\alpha
_{p}a_{1}...a_{r}}\tilde{\delta}_{\alpha _{1}}\otimes ...\otimes \tilde{%
\delta}_{\alpha _{p}}\otimes d\tilde{z}^{\beta _{1}}\otimes ...\otimes d%
\tilde{z}^{\beta _{q}}\otimes \\ 
\overset{\cdot }{\tilde{\partial}}_{a_{1}}\otimes ...\otimes \overset{\cdot }%
{\tilde{\partial}}_{a_{r}}\otimes \delta \tilde{y}^{b_{1}}\otimes ...\otimes
\delta \tilde{y}^{b_{s}}.%
\end{array}%
\end{equation*}

Let 
\begin{equation*}
\left( \mathcal{T}~\left( \left( \rho ,\eta \right) TE,\left( \rho ,\eta
\right) \tau _{E},E\right) ,+,\cdot ,\otimes \right)
\end{equation*}%
be the tensor fields algebra of generalized tangent bundle $\left( \left(
\rho ,\eta \right) TE,\left( \rho ,\eta \right) \tau _{E},E\right) $.

\begin{definition}
\label{d44}Let $\left( E,\pi ,M\right) $ be a vector bundle endowed with a $%
\left( \rho ,\eta \right) $-connection $\left( \rho ,\eta \right) \Gamma $
and let 
\begin{equation*}
\begin{array}{l}
\left( X,T\right) ^{\ \underrightarrow{\left( \rho ,\eta \right) D}\,}%
\vspace*{1mm}\left( \rho ,\eta \right) D_{X}T%
\end{array}%
\end{equation*}%
be a covariant $\left( \rho ,\eta \right) $-derivative for the tensor
algebra 
\begin{equation*}
\left( \mathcal{T}~\left( \left( \rho ,\eta \right) TE,\left( \rho ,\eta
\right) \tau _{E},E\right) ,+,\cdot ,\otimes \right)
\end{equation*}%
of the generalized tangent bundle 
\begin{equation*}
\left( \left( \rho ,\eta \right) TE,\left( \rho ,\eta \right) \tau
_{E},E\right)
\end{equation*}%
which preserves the horizontal and vertical interior differential systems by
parallelism. (see \cite{2})
\end{definition}

The real local functions 
\begin{equation*}
\left( \left( \rho ,\eta \right) H_{\beta \gamma }^{\alpha },\left( \rho
,\eta \right) H_{b\gamma }^{a},\left( \rho ,\eta \right) V_{\beta c}^{\alpha
},\left( \rho ,\eta \right) V_{bc}^{a}\right)
\end{equation*}%
defined by the following equalities:%
\begin{equation}
\begin{array}{ll}
\left( \rho ,\eta \right) D_{\tilde{\delta}_{\gamma }}\tilde{\delta}_{\beta
}=\left( \rho ,\eta \right) H_{\beta \gamma }^{\alpha }\tilde{\delta}%
_{\alpha }, & \left( \rho ,\eta \right) D_{\tilde{\delta}_{\gamma }}\overset{%
\cdot }{\tilde{\partial}}_{b}=\left( \rho ,\eta \right) H_{b\gamma }^{a}%
\overset{\cdot }{\tilde{\partial}}_{a} \\ 
\left( \rho ,\eta \right) D_{\overset{\cdot }{\tilde{\partial}}_{c}}\tilde{%
\delta}_{\beta }=\left( \rho ,\eta \right) V_{\beta c}^{\alpha }\tilde{\delta%
}_{\alpha }, & \left( \rho ,\eta \right) D_{\overset{\cdot }{\tilde{\partial}%
}_{c}}\overset{\cdot }{\tilde{\partial}}_{b}=\left( \rho ,\eta \right)
V_{bc}^{a}\overset{\cdot }{\tilde{\partial}}_{a}%
\end{array}
\label{eq61}
\end{equation}%
are the components of a linear $\left( \rho ,\eta \right) $-connection $%
\left( \left( \rho ,\eta \right) H,\left( \rho ,\eta \right) V\right) $ for
the generalized tangent bundle $\left( \left( \rho ,\eta \right) TE,\left(
\rho ,\eta \right) \tau _{E},E\right) $ which will be called the \emph{%
distinguished linear }$\left( \rho ,\eta \right) $\emph{-connection.}

If $h=Id_{M},$ then the distinguished linear $\left( Id_{TM},Id_{M}\right) $%
-connection is the classical \emph{distinguished linear connection.}

The components of a distinguished linear connection $\left( H,V\right) $
will be denoted 
\begin{equation*}
\left( H_{jk}^{i},H_{bk}^{a},V_{jc}^{i},V_{bc}^{a}\right) .
\end{equation*}

\begin{theorem}
\label{t45}\emph{If }$((\rho ,\eta )H,(\rho ,\eta )V)$ \emph{is a
distinguished linear} $(\rho ,\eta )$-\emph{connection for the generalized
tangent bundle }$\left( \left( \rho ,\eta \right) TE,\left( \rho ,\eta
\right) \tau _{E},E\right) $\emph{, then its components satisfy the change
relations: }%
\begin{equation}
\begin{array}{ll}
\left( \rho ,\eta \right) H_{\beta 
{\acute{}}%
\gamma 
{\acute{}}%
}^{\alpha 
{\acute{}}%
}\!\! & =\Lambda _{\alpha }^{\alpha 
{\acute{}}%
}\circ h\circ \pi \cdot \left[ \Gamma \left( \tilde{\rho},Id_{E}\right)
\left( \tilde{\delta}_{\gamma }\right) \left( \Lambda _{\beta 
{\acute{}}%
}^{\alpha }\circ h\circ \pi \right) +\right. \vspace*{1mm} \\ 
& +\left. \left( \rho ,\eta \right) H_{\beta \gamma }^{\alpha }\cdot \Lambda
_{\beta 
{\acute{}}%
}^{\beta }\circ h\circ \pi \right] \cdot \Lambda _{\gamma 
{\acute{}}%
}^{\gamma }\circ h\circ \pi ,\vspace*{2mm} \\ 
\left( \rho ,\eta \right) H_{b%
{\acute{}}%
\gamma 
{\acute{}}%
}^{a%
{\acute{}}%
}\!\! & =M_{a}^{a%
{\acute{}}%
}\circ \pi \cdot \left[ \Gamma \left( \tilde{\rho},Id_{E}\right) \left( 
\tilde{\delta}_{\gamma }\right) \left( M_{b%
{\acute{}}%
}^{a}\circ \pi \right) +\right. \vspace*{1mm} \\ 
& \left. +\left( \rho ,\eta \right) H_{b\gamma }^{a}\cdot M_{b%
{\acute{}}%
}^{b}\circ \pi \right] \cdot \Lambda _{\gamma 
{\acute{}}%
}^{\gamma }\circ h\circ \pi ,\vspace*{2mm} \\ 
\left( \rho ,\eta \right) V_{\beta 
{\acute{}}%
c%
{\acute{}}%
}^{\alpha 
{\acute{}}%
}\!\! & =\Lambda _{\alpha 
{\acute{}}%
}^{\alpha }\circ h\circ \pi \cdot \left( \rho ,\eta \right) V_{\beta
c}^{\alpha }\cdot \Lambda _{\beta 
{\acute{}}%
}^{\beta }\circ h\circ \pi \cdot M_{c%
{\acute{}}%
}^{c}\circ \pi ,\vspace*{2mm} \\ 
\left( \rho ,\eta \right) V_{b%
{\acute{}}%
c%
{\acute{}}%
}^{a%
{\acute{}}%
}\!\! & =M_{a}^{a%
{\acute{}}%
}\circ \pi \cdot \left( \rho ,\eta \right) V_{bc}^{a}\cdot M_{b%
{\acute{}}%
}^{b}\circ \pi \cdot M_{c%
{\acute{}}%
}^{c}\circ \pi .%
\end{array}
\label{eq62}
\end{equation}
\end{theorem}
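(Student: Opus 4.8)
The plan is to read off the four transformation formulas in \eqref{eq62} directly from the defining equalities \eqref{eq61}, using two ingredients: the change-of-chart behaviour of the adapted $(\rho,\eta)$-base, and the three axioms of a covariant $(\rho,\eta)$-derivative. As a preliminary I would record that, since the adapted base $\left(\tilde{\delta}_{\alpha},\overset{\cdot}{\tilde{\partial}}_{a}\right)$ transforms tensorially under a change of vector fibre charts — this was observed right after \eqref{eq26}, and is a consequence of the law of transformation \eqref{eq22} of Theorem \ref{t10} — we have
\begin{equation*}
\tilde{\delta}_{\beta


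{\acute{}}

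}=\Lambda _{\beta


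{\acute{}}

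}^{\beta }\circ h\circ \pi \cdot \tilde{\delta}_{\beta },\qquad \overset{\cdot }{\tilde{\partial}}_{b


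{\acute{}}

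}=M_{b


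{\acute{}}

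}^{b}\circ \pi \cdot \overset{\cdot }{\tilde{\partial}}_{b},
\end{equation*}
together with the analogous identities for $\tilde{\delta}_{\gamma


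{\acute{}}

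}$ and $\overset{\cdot }{\tilde{\partial}}_{c


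{\acute{}}

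}$. The derivation properties I would invoke are: $\mathcal{F}\left( E\right) $-linearity in the derivation slot, $\mathbb{R}$-linearity in the second argument, and the Leibniz rule $\left( \rho ,\eta \right) D_{X}\left( fT\right) =\Gamma \left( \tilde{\rho},Id_{E}\right) \left( X\right) \left( f\right) \cdot T+f\cdot \left( \rho ,\eta \right) D_{X}T$.

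For the first relation I would compute $\left( \rho ,\eta \right) D_{\tilde{\delta}_{\gamma


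{\acute{}}

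}}\tilde{\delta}_{\beta


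{\acute{}}

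}$ in the primed chart: substitute the transformation laws, pull the scalar $\Lambda _{\gamma


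{\acute{}}

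}^{\gamma }\circ h\circ \pi $ out of the derivation slot by $\mathcal{F}\left( E\right) $-linearity, and apply the Leibniz rule to $\left( \rho ,\eta \right) D_{\tilde{\delta}_{\gamma }}\!\left( \left( \Lambda _{\beta


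{\acute{}}

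}^{\beta }\circ h\circ \pi \right) \tilde{\delta}_{\beta }\right) $. This produces one term carrying the derivative $\Gamma \left( \tilde{\rho},Id_{E}\right) \left( \tilde{\delta}_{\gamma }\right) \left( \Lambda _{\beta


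{\acute{}}

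}^{\beta }\circ h\circ \pi \right) $ and one term $\left( \Lambda _{\beta


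{\acute{}}

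}^{\beta }\circ h\circ \pi \right) \left( \rho ,\eta \right) H_{\beta \gamma }^{\alpha }\tilde{\delta}_{\alpha }$; after renaming the summation index in the first term so that both are expressed in the $\tilde{\delta}_{\alpha }$ basis, and rewriting the left-hand side as $\left( \rho ,\eta \right) H_{\beta


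{\acute{}}

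\gamma


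{\acute{}}

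}^{\alpha


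{\acute{}}

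}\left( \Lambda _{\alpha


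{\acute{}}

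}^{\alpha }\circ h\circ \pi \right) \tilde{\delta}_{\alpha }$, I compare coefficients (legitimate by the linear independence of the adapted base) and contract with $\Lambda _{\alpha }^{\alpha


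{\acute{}}

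}\circ h\circ \pi $ to obtain exactly the first line of \eqref{eq62}. The second line follows in precisely the same way from $\left( \rho ,\eta \right) D_{\tilde{\delta}_{\gamma


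{\acute{}}

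}}\overset{\cdot }{\tilde{\partial}}_{b


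{\acute{}}

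}$, now using $\overset{\cdot }{\tilde{\partial}}_{b


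{\acute{}}

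}=M_{b


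{\acute{}}

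}^{b}\circ \pi \cdot \overset{\cdot }{\tilde{\partial}}_{b}$ and comparing coefficients of $\overset{\cdot }{\tilde{\partial}}_{a}$.

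For the last two relations the point is that the inhomogeneous (derivative) terms disappear: in $\left( \rho ,\eta \right) D_{\overset{\cdot }{\tilde{\partial}}_{c


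{\acute{}}

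}}\tilde{\delta}_{\beta


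{\acute{}}

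}$ and $\left( \rho ,\eta \right) D_{\overset{\cdot }{\tilde{\partial}}_{c


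{\acute{}}

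}}\overset{\cdot }{\tilde{\partial}}_{b


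{\acute{}}

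}$ the Leibniz term involves $\Gamma \left( \tilde{\rho},Id_{E}\right) \left( \overset{\cdot }{\tilde{\partial}}_{c}\right) $, which by \eqref{eq14} equals $\dot{\partial}_{c}$ and hence annihilates the transition functions $\Lambda _{\beta


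{\acute{}}

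}^{\beta }\circ h\circ \pi $ and $M_{b


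{\acute{}}

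}^{b}\circ \pi $, since these depend on the base coordinates $x^{i}$ only. Running the same "pull out the scalar, apply Leibniz, compare coefficients, contract" routine — and using in addition $\overset{\cdot }{\tilde{\partial}}_{c


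{\acute{}}

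}=M_{c


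{\acute{}}

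}^{c}\circ \pi \cdot \overset{\cdot }{\tilde{\partial}}_{c}$ — then yields the purely tensorial third and fourth lines of \eqref{eq62}. I do not expect a genuine obstacle here: the only geometric content, namely the tensorial behaviour of the adapted base, is already supplied by Theorem \ref{t10}, so the remaining difficulty is purely the index bookkeeping — keeping straight which index sits on which copy of $\Lambda $ and $M$, and verifying in each case that the function being differentiated is indeed pulled back from the base (so that the $H$-type laws acquire their inhomogeneous term while the $V$-type laws do not).
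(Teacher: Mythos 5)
Your proposal is correct: the paper states Theorem \ref{t45} without proof, and your argument — tensorial transformation of the adapted base, $\mathcal{F}(E)$-linearity in the derivation slot, the Leibniz rule, comparison of coefficients, and the observation that $\Gamma(\tilde{\rho},Id_{E})\bigl(\overset{\cdot }{\tilde{\partial}}_{c}\bigr)=\dot{\partial}_{c}$ kills the base-dependent transition functions in the $V$-type laws — is exactly the computation the statement encodes. The index placement in your derivation reproduces \eqref{eq62} line by line, so there is nothing to add.
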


\begin{corollary}
\label{c46}\emph{In the particular case of Lie algebroids (see \cite{3}), }$%
\left( \eta ,h\right) =\left( Id_{M},Id_{M}\right) ,$\emph{\ we obtain}%
\begin{equation}
\begin{array}{ll}
\rho H_{\beta 
{\acute{}}%
\gamma 
{\acute{}}%
}^{\alpha 
{\acute{}}%
}\!\! & =\Lambda _{\alpha }^{\alpha 
{\acute{}}%
}\circ \pi \cdot \left[ \Gamma \left( \tilde{\rho},Id_{E}\right) \left( 
\tilde{\delta}_{\gamma }\right) \left( \Lambda _{\beta 
{\acute{}}%
}^{\alpha }\circ \pi \right) +\rho H_{\beta \gamma }^{\alpha }\cdot \Lambda
_{\beta 
{\acute{}}%
}^{\beta }\circ \pi \right] \cdot \Lambda _{\gamma 
{\acute{}}%
}^{\gamma }\circ \pi \\ 
\rho H_{b%
{\acute{}}%
\gamma 
{\acute{}}%
}^{a%
{\acute{}}%
}\!\! & =M_{a}^{a%
{\acute{}}%
}\circ \pi \cdot \left[ \Gamma \left( \tilde{\rho},Id_{E}\right) \left( 
\tilde{\delta}_{\gamma }\right) \left( M_{b%
{\acute{}}%
}^{a}\circ \pi \right) +\rho H_{b\gamma }^{a}\cdot M_{b%
{\acute{}}%
}^{b}\circ \pi \right] \cdot \Lambda _{\gamma 
{\acute{}}%
}^{\gamma }\circ \pi , \\ 
\rho V_{\beta 
{\acute{}}%
c%
{\acute{}}%
}^{\alpha 
{\acute{}}%
}\!\! & =\Lambda _{\alpha 
{\acute{}}%
}^{\alpha }\circ \pi \cdot \rho V_{\beta c}^{\alpha }\cdot \Lambda _{\beta 
{\acute{}}%
}^{\beta }\circ \pi \cdot M_{c%
{\acute{}}%
}^{c}\circ \pi ,\vspace*{2mm} \\ 
\rho V_{b%
{\acute{}}%
c%
{\acute{}}%
}^{a%
{\acute{}}%
}\!\! & =M_{a}^{a%
{\acute{}}%
}\circ \pi \cdot \rho V_{bc}^{a}\cdot M_{b%
{\acute{}}%
}^{b}\circ \pi \cdot M_{c%
{\acute{}}%
}^{c}\circ \pi .%
\end{array}
\label{eq63}
\end{equation}%
\emph{\ }
\end{corollary}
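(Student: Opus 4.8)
The plan is to obtain Corollary \ref{c46} as a direct specialization of Theorem \ref{t45}. The change relations \eqref{eq62} were established for an arbitrary generalized Lie algebroid $((F,\nu,N),[,]_{F,h},(\rho,\eta))$, so it suffices to impose $(\eta,h)=(Id_M,Id_M)$ and to record how the factors of \eqref{eq62} degenerate. This is precisely the analogue, at the level of distinguished linear connections, of the passage from the transformation law \eqref{eq22} to its Lie algebroid form \eqref{eq23}.

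First I would use the fact that $h=Id_M$ gives $h\circ\pi=\pi$. Consequently each $\Lambda$-type factor of \eqref{eq62} that is composed with $h\circ\pi$ becomes the same factor composed with $\pi$, whereas the $M$-type factors are left unchanged, being already composed with $\pi$ alone. The first-order operator $\Gamma(\tilde{\rho},Id_E)(\tilde{\delta}_{\gamma})$ is carried through verbatim: in the present case $\tilde{\rho}$ is simply the anchor of the Lie algebroid generalized tangent bundle attached to the Lie algebroid $((F,\nu,N),[,]_F,(\rho,Id_M))$, and the prefixes $(\rho,\eta)H$, $(\rho,\eta)V$ are abbreviated to $\rho H$, $\rho V$ following the notational convention fixed after Definition \ref{d7}. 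Substituting these simplifications line by line into \eqref{eq62} produces exactly the four identities \eqref{eq63}.

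For completeness I would also note that the adapted base $(\tilde{\delta}_{\gamma},\overset{\cdot }{\tilde{\partial}}_{a})$, the covariant $(\rho,\eta)$-derivative, and the defining relations \eqref{eq61} all reduce to their Lie algebroid counterparts under the same substitution, so that the functions $(\rho H_{\beta\gamma}^{\alpha},\rho H_{b\gamma}^{a},\rho V_{\beta c}^{\alpha},\rho V_{bc}^{a})$ appearing in \eqref{eq63} are genuinely the components of the induced distinguished linear $\rho$-connection. Since the argument is a pure substitution there is no substantive obstacle; the only point demanding attention is the bookkeeping — not re-simplifying factors that were already of the form $(\,\cdot\,)\circ\pi$, and transporting the anchor term $\Gamma(\tilde{\rho},Id_E)(\tilde{\delta}_{\gamma})$ unexpanded rather than replacing it by its local expression from \eqref{eq27} (or \eqref{eq28} in the case of a $(\rho,\eta)$-connection associated to an ordinary connection).
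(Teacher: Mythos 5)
Your proposal is correct and coincides with the paper's (implicit) argument: the corollary is obtained by direct substitution of $(\eta,h)=(Id_M,Id_M)$ into the change relations \eqref{eq62} of Theorem \ref{t45}, using $h\circ\pi=\pi$ for the $\Lambda$-type factors, leaving the $M$-type factors untouched, and renaming $(\rho,\eta)H$, $(\rho,\eta)V$ as $\rho H$, $\rho V$. Nothing further is required.
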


\emph{In the classical case, }$\left( \rho ,\eta ,h\right) =\left(
Id_{TE},Id_{M},Id_{M}\right) ,$\emph{\ we obtain that the components of a
distinguished linear connection }$\left( H,V\right) $\emph{\ verify the
change relations:}%
\begin{equation}
\begin{array}{cl}
H_{j%
{\acute{}}%
k%
{\acute{}}%
}^{i%
{\acute{}}%
} & =\frac{\partial x^{i%
{\acute{}}%
}}{\partial x^{i}}\circ \pi \cdot \left[ \frac{\delta }{\delta x^{k}}\left( 
\frac{\partial x^{i}}{\partial x^{j%
{\acute{}}%
}}\circ \pi \right) +H_{jk}^{i}\cdot \frac{\partial x^{j}}{\partial x^{j%
{\acute{}}%
}}\circ \pi \right] \cdot \frac{\partial x^{k}}{\partial x^{k%
{\acute{}}%
}}\circ \pi ,\vspace*{2mm} \\ 
H_{b%
{\acute{}}%
k%
{\acute{}}%
}^{a%
{\acute{}}%
} & =M_{a}^{a%
{\acute{}}%
}\circ \pi \cdot \left[ \frac{\delta }{\delta x^{k}}\left( M_{b%
{\acute{}}%
}^{a}\circ \pi \right) +H_{bk}^{a}\cdot M_{b%
{\acute{}}%
}^{b}\circ \pi \right] \cdot \frac{\partial x^{k}}{\partial x^{k%
{\acute{}}%
}}\circ \pi ,\vspace*{2mm} \\ 
V_{j%
{\acute{}}%
c%
{\acute{}}%
}^{i%
{\acute{}}%
} & =\frac{\partial x^{i%
{\acute{}}%
}}{\partial x^{i}}\circ \pi \cdot V_{jc}^{i}\frac{\partial x^{j}}{\partial
x^{j%
{\acute{}}%
}}\circ \pi \cdot M_{c%
{\acute{}}%
}^{c}\circ \pi ,\vspace*{3mm} \\ 
V_{b%
{\acute{}}%
c%
{\acute{}}%
}^{a%
{\acute{}}%
} & =M_{a}^{a%
{\acute{}}%
}\circ \pi \cdot V_{bc}^{a}\cdot M_{b%
{\acute{}}%
}^{b}\circ \pi \cdot M_{c%
{\acute{}}%
}^{c}\circ \pi .%
\end{array}
\label{eq64}
\end{equation}

\begin{example}
\label{e47}If $\left( E,\pi ,M\right) =\left( F,\nu ,N\right) $ is a vector
bundle endowed with the $\left( \rho ,\eta \right) $-connection $\left( \rho
,\eta \right) \Gamma $, then the local real functions%
\begin{equation}
\begin{array}[b]{c}
\left( \frac{\partial \left( \rho ,\eta \right) \Gamma _{\gamma }^{a}}{%
\partial y^{b}},\frac{\partial \left( \rho ,\eta \right) \Gamma _{\gamma
}^{a}}{\partial y^{b}},0,0\right)%
\end{array}
\label{eq65}
\end{equation}%
are the components of a distinguished linear $\left( \rho ,\eta \right) $%
\textit{-}connection for the generalized tangent bundle $\left( \left( \rho
,\eta \right) TE,\left( \rho ,\eta \right) \tau _{E},E\right) ,$ which will
by called the \emph{Berwald linear }$\left( \rho ,\eta \right) $\emph{%
-connection.}
\end{example}

The Berwald linear $(Id_{TM},Id_{M})$-connection are the usual \emph{%
Ber\-wald linear connection.}

\begin{theorem}
\label{t48}\emph{If the generalized tangent bundle} $\!(\!(\rho ,\!\eta
)T\!E,\!(\rho ,\!\eta )\tau _{E},\!E\!)$ \emph{is endowed with a
distinguished linear} $\!(\rho ,\!\eta )$\emph{-connection} $((\rho ,\eta
)H,(\rho ,\eta )V),$ \emph{then for any} 
\begin{equation*}
\begin{array}[b]{c}
X=Z^{\alpha }\tilde{\delta}_{\alpha }+Y^{a}\overset{\cdot }{\tilde{\partial}}%
_{a}\in \Gamma (\!(\rho ,\eta )TE,\!(\rho ,\!\eta )\tau _{E},\!E)%
\end{array}%
\end{equation*}%
\emph{and for any} 
\begin{equation*}
T\in \mathcal{T}_{qs}^{pr}\!(\!(\rho ,\eta )TE,\!(\rho ,\eta )\tau _{E},\!E),
\end{equation*}%
\emph{we obtain the formula:}%
\begin{equation}
\begin{array}{l}
\left( \rho ,\eta \right) D_{X}\left( T_{\beta _{1}...\beta
_{q}b_{1}...b_{s}}^{\alpha _{1}...\alpha _{p}a_{1}...a_{r}}\tilde{\delta}%
_{\alpha _{1}}\otimes ...\otimes \tilde{\delta}_{\alpha _{p}}\otimes d\tilde{%
z}^{\beta _{1}}\otimes ...\otimes \right. \vspace*{1mm} \\ 
\hspace*{9mm}\left. \otimes d\tilde{z}^{\beta _{q}}\otimes \overset{\cdot }{%
\tilde{\partial}}_{a_{1}}\otimes ...\otimes \overset{\cdot }{\tilde{\partial}%
}_{a_{r}}\otimes \delta \tilde{y}^{b_{1}}\otimes ...\otimes \delta \tilde{y}%
^{b_{s}}\right) =\vspace*{1mm} \\ 
\hspace*{9mm}=Z^{\gamma }T_{\beta _{1}...\beta _{q}b_{1}...b_{s}\mid \gamma
}^{\alpha _{1}...\alpha _{p}a_{1}...a_{r}}\tilde{\delta}_{\alpha
_{1}}\otimes ...\otimes \tilde{\delta}_{\alpha _{p}}\otimes d\tilde{z}%
^{\beta _{1}}\otimes ...\otimes d\tilde{z}^{\beta _{q}}\otimes \overset{%
\cdot }{\tilde{\partial}}_{a_{1}}\otimes ...\otimes \vspace*{1mm} \\ 
\hspace*{9mm}\otimes \overset{\cdot }{\tilde{\partial}}_{a_{r}}\otimes
\delta \tilde{y}^{b_{1}}\otimes ...\otimes \delta \tilde{y}%
^{b_{s}}+Y^{c}T_{\beta _{1}...\beta _{q}b_{1}...b_{s}}^{\alpha _{1}...\alpha
_{p}a_{1}...a_{r}}\mid _{c}\tilde{\delta}_{\alpha _{1}}\otimes ...\otimes 
\vspace*{1mm} \\ 
\hspace*{9mm}\otimes \tilde{\delta}_{\alpha _{p}}\otimes d\tilde{z}^{\beta
_{1}}\otimes ...\otimes d\tilde{z}^{\beta _{q}}\otimes \overset{\cdot }{%
\tilde{\partial}}_{a_{1}}\otimes ...\otimes \overset{\cdot }{\tilde{\partial}%
}_{a_{r}}\otimes \delta \tilde{y}^{b_{1}}\otimes ...\otimes \delta \tilde{y}%
^{b_{s}},%
\end{array}
\label{eq66}
\end{equation}%
\emph{where}%
\begin{equation}
\begin{array}{l}
T_{\beta _{1}...\beta _{q}b_{1}...b_{s}\mid \gamma }^{\alpha _{1}...\alpha
_{p}a_{1}...a_{r}}=\vspace*{2mm}\Gamma \left( \tilde{\rho},Id_{E}\right)
\left( \tilde{\delta}_{\gamma }\right) T_{\beta _{1}...\beta
_{q}b_{1}...b_{s}}^{\alpha _{1}...\alpha _{p}a_{1}...a_{r}} \\ 
\hspace*{8mm}+\left( \rho ,\eta \right) H_{\alpha \gamma }^{\alpha
_{1}}T_{\beta _{1}...\beta _{q}b_{1}...b_{s}}^{\alpha \alpha _{2}...\alpha
_{p}a_{1}...a_{r}}+...+\vspace*{2mm}\left( \rho ,\eta \right) H_{\alpha
\gamma }^{\alpha _{p}}T_{\beta _{1}...\beta _{q}b_{1}...b_{s}}^{\alpha
_{1}...\alpha _{p-1}\alpha a_{1}...a_{r}} \\ 
\hspace*{8mm}-\left( \rho ,\eta \right) H_{\beta _{1}\gamma }^{\beta
}T_{\beta \beta _{2}...\beta _{q}b_{1}...b_{s}}^{\alpha _{1}...\alpha
_{p}a_{1}...a_{r}}-...-\vspace*{2mm}\left( \rho ,\eta \right) H_{\beta
_{q}\gamma }^{\beta }T_{\beta _{1}...\beta _{q-1}\beta
b_{1}...b_{s}}^{\alpha _{1}...\alpha _{p}a_{1}...a_{r}} \\ 
\hspace*{8mm}+\left( \rho ,\eta \right) H_{a\gamma }^{a_{1}}T_{\beta
_{1}...\beta _{q}b_{1}...b_{s}}^{\alpha _{1}...\alpha
_{p}aa_{2}...a_{r}}+...+\vspace*{2mm}\left( \rho ,\eta \right) H_{a\gamma
}^{a_{r}}T_{\beta _{1}...\beta _{q}b_{1}...b_{s}}^{\alpha _{1}...\alpha
_{p}a_{1}...a_{r-1}a} \\ 
\hspace*{8mm}-\left( \rho ,\eta \right) H_{b_{1}\gamma }^{b}T_{\beta
_{1}...\beta _{q}bb_{2}...b_{s}}^{\alpha _{1}...\alpha _{p}a_{1}...a_{r}}-%
\vspace*{2mm}...-\left( \rho ,\eta \right) H_{b_{s}\gamma }^{b}T_{\beta
_{1}...\beta _{q}b_{1}...b_{s-1}b}^{\alpha _{1}...\alpha _{p}a_{1}...a_{r}}%
\end{array}
\label{eq67}
\end{equation}%
\emph{\ and}%
\begin{equation}
\begin{array}{l}
T_{\beta _{1}...\beta _{q}b_{1}...b_{s}}^{\alpha _{1}...\alpha
_{p}a_{1}...a_{r}}\mid _{c}=\Gamma \left( \tilde{\rho},Id_{E}\right) \left( 
\overset{\cdot }{\tilde{\partial}}_{c}\right) T_{\beta _{1}...\beta
_{q}b_{1}...b_{s}}^{\alpha _{1}...\alpha _{p}a_{1}...a_{r}} \\ 
\hspace*{8mm}+\left( \rho ,\eta \right) V_{\alpha c}^{\alpha _{1}}T_{\beta
_{1}...\beta _{q}b_{1}...b_{s}}^{\alpha \alpha _{2}...\alpha
_{p}a_{1}...a_{r}}+...+\left( \rho ,\eta \right) V_{\alpha c}^{\alpha
_{p}}T_{\beta _{1}...\beta _{q}b_{1}...b_{s}}^{\alpha _{1}...\alpha
_{p-1}\alpha a_{1}...a_{r}}\vspace*{2mm} \\ 
\hspace*{8mm}-\left( \rho ,\eta \right) V_{\beta _{1}c}^{\beta }T_{\beta
\beta _{2}...\beta _{q}b_{1}...b_{s}}^{\alpha _{1}...\alpha
_{p}a_{1}...a_{r}}-...-\left( \rho ,\eta \right) V_{\beta _{q}c}^{\beta
}T_{\beta _{1}...\beta _{q-1}\beta b_{1}...b_{s}}^{\alpha _{1}...\alpha
_{p}a_{1}...a_{r}}\vspace*{2mm} \\ 
\hspace*{8mm}+\left( \rho ,\eta \right) V_{ac}^{a_{1}}T_{\beta _{1}...\beta
_{q}b_{1}...b_{s}}^{\alpha _{1}...\alpha _{p}aa_{2}...a_{r}}+...+\left( \rho
,\eta \right) V_{ac}^{a_{r}}T_{\beta _{1}...\beta _{q}b_{1}...b_{s}}^{\alpha
_{1}...\alpha _{p}a_{1}...a_{r-1}a} \\ 
\hspace*{8mm}-\left( \rho ,\eta \right) V_{b_{1}c}^{b}T_{\beta _{1}...\beta
_{q}bb_{2}...b_{s}}^{\alpha _{1}...\alpha _{p}a_{1}...a_{r}}-...-\left( \rho
,\eta \right) V_{b_{s}c}^{b}T_{\beta _{1}...\beta
_{q}b_{1}...b_{s-1}b}^{\alpha _{1}...\alpha _{p}a_{1}...a_{r}}.%
\end{array}
\label{eq68}
\end{equation}
\end{theorem}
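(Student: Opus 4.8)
The statement is essentially a formula for how a distinguished linear $(\rho,\eta)$-connection acts on an arbitrary tensor field, decomposed with respect to the adapted $(\rho,\eta)$-base $(\tilde\delta_\alpha,\overset{\cdot}{\tilde\partial}_a)$ and its dual $(d\tilde z^\alpha,\delta\tilde y^a)$. I would prove it by the standard bookkeeping argument: invoke the defining axioms of a covariant $(\rho,\eta)$-derivative (linearity in $X$, additivity and Leibniz over tensor products, commutation with contractions, and reduction to the anchored directional derivative $\Gamma(\tilde\rho,Id_E)(X)$ on functions), together with the component definitions \eqref{eq61} of $((\rho,\eta)H,(\rho,\eta)V)$, and then simply expand.

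\textbf{Step 1.} First I would record the action of $(\rho,\eta)D$ on the dual adapted base. From \eqref{eq61} and the duality relations $\langle d\tilde z^\alpha,\tilde\delta_\beta\rangle=\delta^\alpha_\beta$, $\langle\delta\tilde y^a,\overset{\cdot}{\tilde\partial}_b\rangle=\delta^a_b$ (all other pairings zero), compatibility with contraction gives
\begin{equation*}
\begin{array}{ll}
(\rho,\eta)D_{\tilde\delta_\gamma}d\tilde z^\beta=-(\rho,\eta)H^\beta_{\alpha\gamma}d\tilde z^\alpha, &
(\rho,\eta)D_{\tilde\delta_\gamma}\delta\tilde y^b=-(\rho,\eta)H^b_{a\gamma}\delta\tilde y^a,\\
(\rho,\eta)D_{\overset{\cdot}{\tilde\partial}_c}d\tilde z^\beta=-(\rho,\eta)V^\beta_{\alpha c}d\tilde z^\alpha, &
(\rho,\eta)D_{\overset{\cdot}{\tilde\partial}_c}\delta\tilde y^b=-(\rho,\eta)V^b_{ac}\delta\tilde y^a.
\end{array}
\end{equation*}

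\textbf{Step 2.} Next, writing $X=Z^\alpha\tilde\delta_\alpha+Y^a\overset{\cdot}{\tilde\partial}_a$, use $\mathcal F(E)$-linearity of $D$ in the direction argument to split $(\rho,\eta)D_X=Z^\gamma(\rho,\eta)D_{\tilde\delta_\gamma}+Y^c(\rho,\eta)D_{\overset{\cdot}{\tilde\partial}_c}$. Then apply the Leibniz rule repeatedly to the tensor product $T^{\alpha_1\dots a_r}_{\beta_1\dots b_s}\,\tilde\delta_{\alpha_1}\otimes\cdots\otimes\delta\tilde y^{b_s}$: the derivative hits the scalar coefficient $T^{\alpha_1\dots a_r}_{\beta_1\dots b_s}$ via $\Gamma(\tilde\rho,Id_E)(\tilde\delta_\gamma)$ resp.\ $\Gamma(\tilde\rho,Id_E)(\overset{\cdot}{\tilde\partial}_c)$, and then hits each of the $p+q+r+s$ tensor factors in turn, each contributing a term from Step 1 or from \eqref{eq61}. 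Relabelling summation indices collects the upper-$\alpha$ factors into the $+(\rho,\eta)H^{\alpha_i}_{\alpha\gamma}$ terms, the lower-$\beta$ factors into $-(\rho,\eta)H^\beta_{\beta_j\gamma}$, and similarly for the $a$- and $b$-blocks; the $Y^c$-part produces exactly the $V$-analogues. This yields \eqref{eq66} with $\mid\gamma$ given by \eqref{eq67} and $\mid_c$ given by \eqref{eq68}.

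\textbf{Main obstacle.} There is no conceptual difficulty here; the only real work is notational discipline — keeping the multi-index relabelling consistent across all four index blocks and getting every sign right (note in particular that the covariant slots $\beta_j$ and $b_j$ pick up minus signs from the dual-base formulas of Step 1, while the contravariant slots $\alpha_i$ and $a_i$ pick up plus signs). I would organize the expansion block by block (first the $\tilde\delta$-factors, then the $d\tilde z$-factors, then the $\overset{\cdot}{\tilde\partial}$-factors, then the $\delta\tilde y$-factors) so that the pattern of \eqref{eq67} and \eqref{eq68} emerges transparently, and I would remark that the whole computation is the verbatim $(\rho,\eta)$-analogue of the classical distinguished-connection formula, recovered when $(\rho,\eta,h)=(Id_{TE},Id_M,Id_M)$.
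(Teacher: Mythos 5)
Your proposal is correct and is precisely the standard Leibniz-plus-contraction bookkeeping argument; the paper itself states Theorem \ref{t48} without any proof, so there is nothing to diverge from, and your Step 1 (deriving the action on the dual adapted base from \eqref{eq61} and the duality pairings, with the preservation of the horizontal and vertical subbundles from Definition \ref{d44} guaranteeing that no cross terms appear) followed by the blockwise expansion is exactly the derivation the author leaves implicit.
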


\begin{corollary}
\label{c49}\emph{In the particular case of Lie algebroids (see \cite{3}), }$%
\left( \eta ,h\right) =\left( Id_{M},Id_{M}\right) ,$\emph{\ we obtain}%
\begin{equation}
\begin{array}{l}
T_{\beta _{1}...\beta _{q}b_{1}...b_{s}\mid \gamma }^{\alpha _{1}...\alpha
_{p}a_{1}...a_{r}}=\vspace*{2mm}\Gamma \left( \tilde{\rho},Id_{E}\right)
\left( \tilde{\delta}_{\gamma }\right) T_{\beta _{1}...\beta
_{q}b_{1}...b_{s}}^{\alpha _{1}...\alpha _{p}a_{1}...a_{r}} \\ 
\hspace*{8mm}+\rho H_{\alpha \gamma }^{\alpha _{1}}T_{\beta _{1}...\beta
_{q}b_{1}...b_{s}}^{\alpha \alpha _{2}...\alpha _{p}a_{1}...a_{r}}+...+%
\vspace*{2mm}\rho H_{\alpha \gamma }^{\alpha _{p}}T_{\beta _{1}...\beta
_{q}b_{1}...b_{s}}^{\alpha _{1}...\alpha _{p-1}\alpha a_{1}...a_{r}} \\ 
\hspace*{8mm}-\rho H_{\beta _{1}\gamma }^{\beta }T_{\beta \beta _{2}...\beta
_{q}b_{1}...b_{s}}^{\alpha _{1}...\alpha _{p}a_{1}...a_{r}}-...-\vspace*{2mm}%
\rho H_{\beta _{q}\gamma }^{\beta }T_{\beta _{1}...\beta _{q-1}\beta
b_{1}...b_{s}}^{\alpha _{1}...\alpha _{p}a_{1}...a_{r}} \\ 
\hspace*{8mm}+\rho H_{a\gamma }^{a_{1}}T_{\beta _{1}...\beta
_{q}b_{1}...b_{s}}^{\alpha _{1}...\alpha _{p}aa_{2}...a_{r}}+...+\vspace*{2mm%
}\rho H_{a\gamma }^{a_{r}}T_{\beta _{1}...\beta _{q}b_{1}...b_{s}}^{\alpha
_{1}...\alpha _{p}a_{1}...a_{r-1}a} \\ 
\hspace*{8mm}-\rho H_{b_{1}\gamma }^{b}T_{\beta _{1}...\beta
_{q}bb_{2}...b_{s}}^{\alpha _{1}...\alpha _{p}a_{1}...a_{r}}-\vspace*{2mm}%
...-\rho H_{b_{s}\gamma }^{b}T_{\beta _{1}...\beta
_{q}b_{1}...b_{s-1}b}^{\alpha _{1}...\alpha _{p}a_{1}...a_{r}}%
\end{array}
\label{eq69}
\end{equation}%
\emph{\ and}%
\begin{equation}
\begin{array}{l}
T_{\beta _{1}...\beta _{q}b_{1}...b_{s}}^{\alpha _{1}...\alpha
_{p}a_{1}...a_{r}}\mid _{c}=\Gamma \left( \tilde{\rho},Id_{E}\right) \left( 
\overset{\cdot }{\tilde{\partial}}_{c}\right) T_{\beta _{1}...\beta
_{q}b_{1}...b_{s}}^{\alpha _{1}...\alpha _{p}a_{1}...a_{r}} \\ 
\hspace*{8mm}+\rho V_{\alpha c}^{\alpha _{1}}T_{\beta _{1}...\beta
_{q}b_{1}...b_{s}}^{\alpha \alpha _{2}...\alpha _{p}a_{1}...a_{r}}+...+\rho
V_{\alpha c}^{\alpha _{p}}T_{\beta _{1}...\beta _{q}b_{1}...b_{s}}^{\alpha
_{1}...\alpha _{p-1}\alpha a_{1}...a_{r}} \\ 
\hspace*{8mm}-\rho V_{\beta _{1}c}^{\beta }T_{\beta \beta _{2}...\beta
_{q}b_{1}...b_{s}}^{\alpha _{1}...\alpha _{p}a_{1}...a_{r}}-...-\rho
V_{\beta _{q}c}^{\beta }T_{\beta _{1}...\beta _{q-1}\beta
b_{1}...b_{s}}^{\alpha _{1}...\alpha _{p}a_{1}...a_{r}} \\ 
\hspace*{8mm}+\rho V_{ac}^{a_{1}}T_{\beta _{1}...\beta
_{q}b_{1}...b_{s}}^{\alpha _{1}...\alpha _{p}aa_{2}...a_{r}}+...+\rho
V_{ac}^{a_{r}}T_{\beta _{1}...\beta _{q}b_{1}...b_{s}}^{\alpha _{1}...\alpha
_{p}a_{1}...a_{r-1}a}\vspace*{2mm} \\ 
\hspace*{8mm}-\rho V_{b_{1}c}^{b}T_{\beta _{1}...\beta
_{q}bb_{2}...b_{s}}^{\alpha _{1}...\alpha _{p}a_{1}...a_{r}}-...-\rho
V_{b_{s}c}^{b}T_{\beta _{1}...\beta _{q}b_{1}...b_{s-1}b}^{\alpha
_{1}...\alpha _{p}a_{1}...a_{r}}.%
\end{array}
\label{eq70}
\end{equation}
\end{corollary}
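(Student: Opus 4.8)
The plan is to derive Corollary \ref{c49} as the immediate specialization of Theorem \ref{t48} to the case $(\eta, h) = (Id_M, Id_M)$, which is precisely the hypothesis under which a generalized Lie algebroid reduces to an ordinary Lie algebroid (cf. Remark \ref{r4}). First I would record the notational reductions forced by this choice: the anchor $(\rho, \eta)$ becomes $(\rho, Id_M)$, abbreviated $\rho$ throughout the paper; the distinguished linear $(\rho,\eta)$-connection $((\rho,\eta)H, (\rho,\eta)V)$ becomes the distinguished linear $\rho$-connection with components $(\rho H_{\beta\gamma}^{\alpha}, \rho H_{b\gamma}^{a}, \rho V_{\beta c}^{\alpha}, \rho V_{bc}^{a})$; and, since $h = Id_M$ gives $h \circ \pi = \pi$, every occurrence of $\circ\, h \circ \pi$ in \eqref{eq67} and \eqref{eq68} collapses to $\circ\, \pi$. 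That the adapted base $(\tilde{\delta}_\alpha, \overset{\cdot}{\tilde{\partial}}_a)$, the structure functions $L_{\alpha\beta}^{\gamma}$, and the connection coefficients genuinely reduce to their Lie-algebroid analogues under this substitution has already been established in the discussion following Theorem \ref{t10} (see \eqref{eq23}) and in Theorem \ref{t45} (see \eqref{eq63}), so those facts may simply be cited.

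Next I would apply the formula \eqref{eq66} of Theorem \ref{t48} verbatim. For $X = Z^{\alpha}\tilde{\delta}_\alpha + Y^{a}\overset{\cdot}{\tilde{\partial}}_a$ the covariant derivative $(\rho,\eta)D_X T$ splits into a $Z^{\gamma}$-part carrying the horizontal covariant derivative and a $Y^{c}$-part carrying the vertical covariant derivative. This decomposition is purely formal: it uses only the bracket $[\,,\,]_{(\rho,\eta)TE}$ on the Lie algebroid generalized tangent bundle and the adapted base, both defined for arbitrary $(\rho,\eta)$, so it passes unchanged to the present case. It then remains to specialize the coefficient expressions \eqref{eq67} and \eqref{eq68}: each leading term $\Gamma(\tilde{\rho}, Id_E)(\tilde{\delta}_\gamma)T$ and $\Gamma(\tilde{\rho}, Id_E)(\overset{\cdot}{\tilde{\partial}}_c)T$ survives verbatim (with $h \circ \pi$ read as $\pi$ inside $\tilde{\rho}$), and each connection coefficient $(\rho,\eta)H$, $(\rho,\eta)V$ is replaced by its $\rho$-counterpart, yielding exactly \eqref{eq69} and \eqref{eq70}.

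There is essentially no substantive obstacle: the corollary is a transcription of Theorem \ref{t48} under a notational reduction rather than a new computation. The only point requiring any care is to confirm --- which the earlier sections already do --- that none of the ingredients entering \eqref{eq66}, \eqref{eq67}, \eqref{eq68} (the bracket, the anchor $\tilde{\rho}$, the adapted and adapted-dual bases, and the four families of connection coefficients) pick up extra terms or different transformation laws when $(\eta, h) = (Id_M, Id_M)$; granting that, the proof consists of setting $(\eta, h) = (Id_M, Id_M)$ in \eqref{eq66}, \eqref{eq67}, \eqref{eq68} and displaying the resulting identities.
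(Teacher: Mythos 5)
Your proposal is correct and matches the paper's (implicit) argument exactly: the corollary is obtained by setting $(\eta ,h)=(Id_{M},Id_{M})$ in \eqref{eq67} and \eqref{eq68} of Theorem \ref{t48}, so that $h\circ \pi =\pi$ and each $(\rho ,\eta )H$, $(\rho ,\eta )V$ becomes the corresponding $\rho H$, $\rho V$. Nothing further is required.
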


\emph{In the classical case, }$\left( \rho ,\eta ,h\right) =\left(
Id_{TE},Id_{M},Id_{M}\right) ,$\emph{\ we obtain}%
\begin{equation}
\begin{array}{l}
T_{j_{1}...j_{q}b_{1}...b_{s}\mid k}^{i_{1}...i_{p}a_{1}...a_{r}}=\vspace*{%
2mm}\delta _{k}\left(
T_{j_{1}...j_{q}b_{1}...b_{s}}^{i_{1}...i_{p}a_{1}...a_{r}}\right) \\ 
\hspace*{8mm}%
+H_{ik}^{i_{1}}T_{j_{1}...j_{q}b_{1}...b_{s}}^{ii_{2}...i_{p}a_{1}...a_{r}}+...+%
\vspace*{2mm}H_{ik}^{i_{p}}T_{\beta _{1}...\beta
_{q}b_{1}...b_{s}}^{i_{1}...i_{p-1}ia_{1}...a_{r}} \\ 
\hspace*{8mm}%
-H_{j_{1}k}^{j}T_{jj_{2}...j_{q}b_{1}...b_{s}}^{i_{1}...i_{p}a_{1}...a_{r}}-...-%
\vspace*{2mm}H_{j_{q}k}^{j}T_{j_{1}...j_{q-1}jb_{1}...b_{s}}^{\alpha
_{1}...\alpha _{p}a_{1}...a_{r}} \\ 
\hspace*{8mm}+H_{ak}^{a_{1}}T_{\beta _{1}...\beta _{q}b_{1}...b_{s}}^{\alpha
_{1}...\alpha _{p}aa_{2}...a_{r}}+...+\vspace*{2mm}H_{ak}^{a_{r}}T_{\beta
_{1}...\beta _{q}b_{1}...b_{s}}^{\alpha _{1}...\alpha _{p}a_{1}...a_{r-1}a}
\\ 
\hspace*{8mm}-H_{b_{1}k}^{b}T_{\beta _{1}...\beta
_{q}bb_{2}...b_{s}}^{\alpha _{1}...\alpha _{p}a_{1}...a_{r}}-\vspace*{2mm}%
...-H_{b_{s}k}^{b}T_{\beta _{1}...\beta _{q}b_{1}...b_{s-1}b}^{\alpha
_{1}...\alpha _{p}a_{1}...a_{r}}%
\end{array}
\label{eq71}
\end{equation}%
\emph{\ and}%
\begin{equation}
\begin{array}{l}
T_{j_{1}...j_{q}b_{1}...b_{s}}^{i_{1}...i_{p}a_{1}...a_{r}}\mid _{c}=\dot{%
\partial}_{c}\left( T_{\beta _{1}...\beta _{q}b_{1}...b_{s}}^{\alpha
_{1}...\alpha _{p}a_{1}...a_{r}}\right) \vspace*{2mm} \\ 
\hspace*{8mm}%
+V_{ic}^{i_{1}}T_{j_{1}...j_{q}b_{1}...b_{s}}^{ii_{2}...i_{p}a_{1}...a_{r}}+...+V_{ic}^{i_{p}}T_{\beta _{1}...\beta _{q}b_{1}...b_{s}}^{i_{1}...i_{p-1}ia_{1}...a_{r}}%
\vspace*{2mm} \\ 
\hspace*{8mm}%
-V_{j_{1}c}^{j}T_{jj_{2}...j_{q}b_{1}...b_{s}}^{i_{1}...i_{p}a_{1}...a_{r}}-...-V_{j_{q}c}^{j}T_{j_{1}...j_{q-1}jb_{1}...b_{s}}^{i_{1}...i_{p}a_{1}...a_{r}}%
\vspace*{2mm} \\ 
\hspace*{8mm}%
+V_{ac}^{a_{1}}T_{j_{1}...j_{q}b_{1}...b_{s}}^{i_{1}...i_{p}aa_{2}...a_{r}}+...+V_{ac}^{a_{r}}T_{j_{1}...j_{q}b_{1}...b_{s}}^{i_{1}...i_{p}a_{1}...a_{r-1}a}%
\vspace*{2mm} \\ 
\hspace*{8mm}%
-V_{b_{1}c}^{b}T_{j_{1}...j_{q}bb_{2}...b_{s}}^{i_{1}...i_{p}a_{1}...a_{r}}-...-V_{b_{s}c}^{b}T_{j_{1}...j_{q}b_{1}...b_{s-1}b}^{i_{1}...i_{p}a_{1}...a_{r}}.%
\end{array}
\label{eq72}
\end{equation}

\begin{definition}
\label{d50}If $\left( E,\pi ,M\right) =\left( F,\nu ,N\right) ,$ $\left(
\rho ,\eta \right) \Gamma $ is a $\left( \rho ,\eta \right) $-connection for
the vector bundle $\left( E,\pi ,M\right) $ and 
\begin{equation*}
\left( \left( \rho ,\eta \right) H_{bc}^{a},\left( \rho ,\eta \right) \tilde{%
H}_{bc}^{a},\left( \rho ,\eta \right) V_{bc}^{a},\left( \rho ,\eta \right) 
\tilde{V}_{bc}^{a}\right)
\end{equation*}%
are the components of a distinguished linear $\left( \rho ,\eta \right) $%
\textit{-}connection for the generalized tangent bundle $\left( \left( \rho
,\eta \right) TE,\left( \rho ,\eta \right) \tau _{E},E\right) $ such that 
\begin{equation*}
\left( \rho ,\eta \right) H_{bc}^{a}=\left( \rho ,\eta \right) \tilde{H}%
_{bc}^{a}\mbox{ and }\left( \rho ,\eta \right) V_{bc}^{a}=\left( \rho ,\eta
\right) \tilde{V}_{bc}^{a},
\end{equation*}%
then we will say that \emph{the generalized tangent bundle }$\!(\!(\rho
,\!\eta )TE,(\rho ,\!\eta )\tau _{E},\!E)$ \emph{is endowed with a normal
distinguished linear }$\left( \rho ,\eta \right) $\emph{-connection }$\left(
\left( \rho ,\eta \right) H,\left( \rho ,\eta \right) V\right) $\emph{\ on
components }$\left( \left( \rho ,\eta \right) H_{bc}^{a},\left( \rho ,\eta
\right) V_{bc}^{a}\right) $.
\end{definition}

In the particular case of Lie algebroids, $\left( \eta ,h\right) =\left(
Id_{M},Id_{M}\right) ,$\emph{\ }the components of a normal distinguished
linear $\left( \rho ,Id_{M}\right) $-connection $\left( \rho H,\rho V\right) 
$ will be denoted $\left( \rho H_{bc}^{a},\rho V_{bc}^{a}\right) $.

In the classical case, $\left( \rho ,\eta ,h\right) =\left(
Id_{TE},Id_{M},Id_{M}\right) ,$\emph{\ }the components of a normal
distinguished linear $\left( Id_{TM},Id_{M}\right) $-connection $\left(
H,V\right) $ will be denoted $\left( H_{jk}^{i},V_{jk}^{i}\right) $.

\section{Mechanical systems}

\ \ 

We consider the following diagram:%
\begin{equation}
\begin{array}{rcl}
E &  & \left( E,\left[ ,\right] _{E,h},\left( \rho ,\eta \right) \right) \\ 
\pi \downarrow &  & ~\downarrow \pi \\ 
M & ^{\underrightarrow{~\ \ \ \ h~\ \ \ \ }} & ~\ M%
\end{array}
\label{eq73}
\end{equation}
where $\left( \left( E,\pi ,M\right) ,\left[ ,\right] _{E,h},\left( \rho
,\eta \right) \right) $ is a generalized Lie algebroid.

\begin{definition}
\label{d51}The triple%
\begin{equation}
\begin{array}{c}
\left( \left( E,\pi ,M\right) ,F_{e},\left( \rho ,\eta \right) \Gamma
\right) ,%
\end{array}
\label{eq74}
\end{equation}
where%
\begin{equation}
\begin{array}[t]{l}
F_{e}=F^{a}\frac{\partial }{\partial \tilde{y}^{a}}\in \Gamma \left( V\left(
\rho ,\eta \right) TE,\left( \rho ,\eta \right) \tau _{E},E\right)%
\end{array}
\label{eq75}
\end{equation}%
is an external force and $\left( \rho ,\eta \right) \Gamma $ is a $\left(
\rho ,\eta \right) $-connection for the vector bundle $\left( E,\pi
,M\right) $, will be called \emph{mechanical }$\left( \rho ,\eta \right) $%
\emph{-system.}
\end{definition}

\begin{definition}
\label{d52}A smooth \emph{Lagrange fundamental function} on the vector
bundle\break $\left( E,\pi ,M\right) $ is a mapping $E~\ ^{\underrightarrow{%
\ \ L\ \ }}~\ \mathbb{R}$ which satisfies the following conditions:\medskip
\end{definition}

1. $L\circ u\in C^{\infty }\left( M\right) $, for any $u\in \Gamma \left(
E,\pi ,M\right) \setminus \left\{ 0\right\} $;\smallskip

2. $L\circ 0\in C^{0}\left( M\right) $, where $0$ means the null section of $%
\left( E,\pi ,M\right) .$\medskip

Let $L$ be a Lagrangian defined on the total space of the vector bundle $%
\left( E,\pi ,M\right) .$

If $\left( U,s_{U}\right) $ is a local vector $\left( m+r\right) $-chart for 
$\left( E,\pi ,M\right) $, then we obtain the following real functions
defined on $\pi ^{-1}\left( U\right) $:%
\begin{equation}
\begin{array}{cc}
L_{i}\overset{put}{=}\displaystyle\frac{\partial L}{\partial x^{i}}\overset{%
put}{=}\frac{\partial }{\partial x^{i}}\left( L\right) & L_{ib}\overset{put}{%
=}\displaystyle\frac{\partial ^{2}L}{\partial x^{i}\partial y^{b}}\vspace*{%
2mm}\overset{put}{=}\frac{\partial }{\partial x^{i}}\left( \frac{\partial }{%
\partial y^{b}}\left( L\right) \right) \\ 
L_{a}\overset{put}{=}\displaystyle\frac{\partial L}{\partial y^{a}}\overset{%
put}{=}\frac{\partial }{\partial y^{a}}\left( L\right) & L_{ab}\overset{put}{%
=}\displaystyle\frac{\partial ^{2}L}{\partial y^{a}\partial y^{b}}\overset{%
put}{=}\frac{\partial }{\partial y^{a}}\left( \frac{\partial }{\partial y^{b}%
}\left( L\right) \right) .%
\end{array}
\label{eq76}
\end{equation}

\begin{definition}
\label{d53}If for any vector local $m+r$-chart $\left( U,s_{U}\right) $ of $%
\left( E,\pi ,M\right) ,$ we have:%
\begin{equation}
\begin{array}{c}
rank\left\Vert L_{ab}\left( u_{x}\right) \right\Vert =r,%
\end{array}
\label{eq77}
\end{equation}
for any $u_{x}\in \pi ^{-1}\left( U\right) \backslash \left\{ 0_{x}\right\} $%
, then we will say that \emph{the Lagrangian }$L$\emph{\ is regular.}
\end{definition}

\begin{proposition}
\label{p54}\emph{If the Lagrangian }$L$\emph{\ is regular, then for any
vector local }$m+r$\emph{-chart }$\left( U,s_{U}\right) $\emph{\ of }$\left(
E,\pi ,M\right) ,$\emph{\ we obtain the real functions }$\tilde{L}^{ab}$%
\emph{\ locally defined by}%
\begin{equation}
\begin{array}{ccc}
\pi ^{-1}\left( U\right) & ^{\underrightarrow{\ \ \tilde{L}^{ab}\ \ }} & 
\mathbb{R} \\ 
u_{x} & \longmapsto & \tilde{L}^{ab}\left( u_{x}\right)%
\end{array}
\label{eq78}
\end{equation}%
\emph{where }$\left\Vert \tilde{L}^{ab}\left( u_{x}\right) \right\Vert
=\left\Vert L_{ba}\left( u_{x}\right) \right\Vert ^{-1}$\emph{, for any }$%
u_{x}\in \pi ^{-1}\left( U\right) \backslash \left\{ 0_{x}\right\} .$
\end{proposition}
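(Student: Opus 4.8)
The plan is to exploit the pointwise nondegeneracy encoded in the regularity hypothesis together with the classical fact that matrix inversion is a smooth operation on the open set of invertible matrices, so the proof is essentially an application of Cramer's rule fibrewise.

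First I would record that on the open subset $\pi^{-1}(U)\setminus\{0\}$ of the total space the functions $L_{ab}$ introduced in \eqref{eq76} are genuine smooth real functions. Indeed, through every point $u_{x}\in\pi^{-1}(U)\setminus\{0_{x}\}$ there passes a local nonvanishing section $u$ of $(E,\pi,M)$, and since $L\circ u\in C^{\infty}(M)$ for every such section, the coordinate expression of $L$ is smooth on $\pi^{-1}(U)\setminus\{0\}$; hence the mixed partial derivatives $L_{ab}=\partial^{2}L/\partial y^{a}\partial y^{b}$ are defined and smooth there. By Schwarz's theorem $L_{ab}=L_{ba}$, so the matrix $\|L_{ab}(u_{x})\|\in\mathrm{M}_{r}(\mathbb{R})$ is symmetric and the transposition occurring in the statement is immaterial.

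Next, the regularity condition \eqref{eq77}, namely $\operatorname{rank}\|L_{ab}(u_{x})\|=r$ for every $u_{x}\in\pi^{-1}(U)\setminus\{0_{x}\}$, says precisely that this $r\times r$ matrix is invertible at each such point, equivalently that $\det\|L_{ab}(u_{x})\|\neq 0$ there. I may therefore define $\|\tilde{L}^{ab}(u_{x})\|:=\|L_{ba}(u_{x})\|^{-1}$ pointwise, which is exactly the content of the proposition; it only remains to check that each entry $\tilde{L}^{ab}$ is a smooth real function on $\pi^{-1}(U)\setminus\{0\}$ and that $\|\tilde{L}^{ab}\|\,\|L_{bc}\|=\delta^{a}_{c}$.

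For the smoothness I would invoke Cramer's rule: the $(a,b)$-entry of the inverse of $\|L_{cd}(u_{x})\|$ equals a signed $(r-1)\times(r-1)$ minor of $\|L_{cd}(u_{x})\|$ divided by $\det\|L_{cd}(u_{x})\|$. Both the minors and the determinant are polynomial expressions in the smooth functions $L_{cd}$, hence smooth on $\pi^{-1}(U)\setminus\{0\}$, and the denominator is nowhere zero there by regularity; thus the quotient is smooth, giving $\tilde{L}^{ab}\in C^{\infty}(\pi^{-1}(U)\setminus\{0\})$ with the required inversion identity. I do not expect any genuine obstacle; the only point demanding a little care is the first step — that the $L_{ab}$ are smooth on the relevant open subset of $E$ rather than merely along sections — and, if one wishes the family $\{\tilde{L}^{ab}\}$ to be independent of the chart in the natural tensorial sense, verifying that under a change of vector fibre chart the $L_{ab}$ transform as the components of a symmetric tensor field on the vertical bundle so that the $\tilde{L}^{ab}$ transform contravariantly; this last point, however, is not demanded by the statement as phrased.
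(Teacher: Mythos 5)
Your argument is correct and is exactly the standard justification that the paper leaves implicit (the paper states Proposition \ref{p54} without any proof): regularity \eqref{eq77} gives pointwise invertibility of the symmetric matrix $\left\Vert L_{ab}\right\Vert$ on $\pi ^{-1}\left( U\right) \backslash \left\{ 0_{x}\right\} $, and Cramer's rule exhibits the entries of the inverse as polynomials in the $L_{cd}$ divided by the nowhere-vanishing determinant, hence smooth. The one caveat you rightly flag — that smoothness of $L\circ u$ for all nonvanishing sections $u$ does not by itself yield smoothness of $L$ in the fibre directions — is a looseness of Definition \ref{d52} rather than of your proof; the intended reading (standard in Lagrange--Finsler geometry, and needed already to make sense of \eqref{eq76}) is that $L$ is $C^{\infty }$ on the slit total space, under which your argument goes through verbatim.
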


\begin{definition}
\label{d55}A smooth \emph{Finsler fundamental function} on the vector bundle 
$\left( E,\pi ,M\right) $ is a mapping $%
\begin{array}[b]{ccc}
E & ^{\underrightarrow{\ F\ }} & \mathbb{R}_{+}%
\end{array}%
$ which satisfies the following conditions:\medskip
\end{definition}

1. $F\circ u\in C^{\infty }\left( M\right) $, for any $u\in \Gamma \left(
E,\pi ,M\right) \setminus \left\{ 0\right\} $;\smallskip

2. $F\circ 0\in C^{0}\left( M\right) $, where $0$ means the null section of $%
\left( E,\pi ,M\right) $;\smallskip

3. $F$ is positively $1$-homogenous on the fibres of vector bundle $\left(
E,\pi ,M\right) ;$\smallskip

4. For any vector local $m+r$-chart $\left( U,s_{U}\right) $ of $\left(
E,\pi ,M\right) ,$ the hessian:%
\begin{equation}
\left\Vert F_{~ab}^{2}\left( u_{x}\right) \right\Vert  \label{eq79}
\end{equation}%
is positively define for any $u_{x}\in \pi ^{-1}\left( U\right) \backslash
\left\{ 0_{x}\right\} $.

\begin{definition}
\label{d56}If $L$ respectively $F$ is a smooth Lagrange respectively Finsler
function, then the triple%
\begin{equation}
\begin{array}[t]{l}
\left( \left( E,\pi ,M\right) ,F_{e},L\right) 
\end{array}
\label{eq80}
\end{equation}%
respectively%
\begin{equation}
\begin{array}[t]{l}
\left( \left( E,\pi ,M\right) ,F_{e},F\right) 
\end{array}
\label{eq81}
\end{equation}%
where $F_{e}=F^{a}\displaystyle\frac{\partial }{\partial \tilde{y}^{a}}\in
\Gamma \left( V\left( \rho ,\eta \right) TE,\left( \rho ,\eta \right) \tau
_{E},E\right) $ is an external force, is called \emph{Lagrange mechanical }$%
\left( \rho ,\eta \right) $\emph{-system }and \emph{Finsler mechanical }$%
\left( \rho ,\eta \right) $\emph{-system, respectively}.
\end{definition}

\begin{definition}
Any Lagrange mechanical $\left( \rho ,Id_{M}\right) $-system and any Finsler
mechanical\break $\left( \rho ,Id_{M}\right) $-system will be called \emph{%
Lagrange mechanical }$\rho $-\emph{system }and \emph{Finsler mechanical }$%
\rho $-\emph{system}, respectively.
\end{definition}

Any Lagrange mechanical $\left( Id_{TM},Id_{M}\right) $-system and any
Finsler mechanical\break $\left( Id_{TM},Id_{M}\right) $-system will be
called \emph{Lagrange mechanical system }and \emph{Finsler mechanical system}%
, respectively.

\section{ $(\protect\rho ,\protect\eta )$-semisprays and $(\protect\rho ,%
\protect\eta )$-sprays for mechanical $(\protect\rho ,\protect\eta )$-systems%
}

\ \ \ 

Let $(\left( E,\pi ,M\right) ,F_{e},(\rho ,\eta )\Gamma )$ be an arbitrary
mechanical $\left( \rho ,\eta \right) $-system.

\begin{definition}
\label{d57}The\textit{\ }vertical section $\mathbb{C}\mathbf{=}y^{a}\overset{%
\cdot }{\tilde{\partial}}_{a}$will be called the\textit{\ }\emph{Liouville
section.}
\end{definition}

A section $S\in \Gamma \left( \left( \rho ,\eta \right) TE,\left( \rho ,\eta
\right) \tau _{E},E\right) $\ will be called $\left( \rho ,\eta \right) $%
\emph{-semispray}\ if there exists an almost tangent structure $e$ such that 
$e\left( S\right) =\mathbb{C}.$

Let $g\in \mathbf{Man}\left( E,E\right) $ be such that $\left( g,h\right) $
is a locally invertible $\mathbf{B}^{\mathbf{v}}$-morphism of $\left( E,\pi
,M\right) $\ source and\ target.

\begin{theorem}
\label{t58}\emph{The section}%
\begin{equation}
\begin{array}[t]{l}
S=\left( g_{b}^{a}\circ h\circ \pi \right) y^{b}\tilde{\partial}_{a}-2\left(
G^{a}-\frac{1}{4}F^{a}\right) \overset{\cdot }{\tilde{\partial}}_{a}%
\end{array}
\label{eq82}
\end{equation}%
\emph{\ is a }$\left( \rho ,\eta \right) $\emph{-semispray such that the
real local functions }$G^{a},\ a\in \overline{1,n},$\emph{\ satisfy the
following conditions}%
\begin{equation}
\begin{array}{cl}
\left( \rho ,\eta \right) \Gamma _{c}^{a} & =\left( \tilde{g}_{c}^{b}\circ
h\circ \pi \right) \frac{\partial \left( G^{a}-\frac{1}{4}F^{a}\right) }{%
\partial y^{b}} \\ 
& -\frac{1}{2}\left( g_{e}^{d}\circ h\circ \pi \right) y^{e}L_{dc}^{f}\left( 
\tilde{g}_{f}^{a}\circ h\circ \pi \right) \\ 
& +\frac{1}{2}\left( \rho _{c}^{j}\circ h\circ \pi \right) \frac{\partial
\left( g_{e}^{b}\circ h\circ \pi \right) }{\partial x^{j}}y^{e}\left( \tilde{%
g}_{b}^{a}\circ h\circ \pi \right) \\ 
& -\frac{1}{2}\left( g_{e}^{b}\circ h\circ \pi \right) y^{e}\left( \rho
_{b}^{i}\circ h\circ \pi \right) \frac{\partial \left( \tilde{g}%
_{c}^{a}\circ h\circ \pi \right) }{\partial x^{i}}%
\end{array}
\label{eq83}
\end{equation}%
\emph{In addition, we remark that the local real functions}%
\begin{equation}
\begin{array}{cl}
\left( \rho ,\eta \right) \mathring{\Gamma}_{c}^{a} & =\left( \tilde{g}%
_{c}^{b}\circ h\circ \pi \right) \frac{\partial G^{a}}{\partial y^{b}} \\ 
& -\frac{1}{2}\left( g_{e}^{d}\circ h\circ \pi \right) y^{e}L_{dc}^{f}\left( 
\tilde{g}_{f}^{a}\circ h\circ \pi \right) \\ 
& +\frac{1}{2}\left( \rho _{c}^{j}\circ h\circ \pi \right) \frac{\partial
\left( g_{e}^{b}\circ h\circ \pi \right) }{\partial x^{j}}y^{e}\left( \tilde{%
g}_{b}^{a}\circ h\circ \pi \right) \\ 
& -\frac{1}{2}\left( g_{e}^{b}\circ h\circ \pi \right) y^{e}\left( \rho
_{b}^{i}\circ h\circ \pi \right) \frac{\partial \left( \tilde{g}%
_{c}^{a}\circ h\circ \pi \right) }{\partial x^{i}}%
\end{array}
\label{eq84}
\end{equation}%
\emph{are the components of a }$\left( \rho ,\eta \right) $\emph{-connection 
}$\left( \rho ,\eta \right) \mathring{\Gamma}$\emph{\ for the vector bundle }%
$\left( E,\pi ,M\right) .$

The $\left( \rho ,\eta \right) $-semispray $S$\ will be called \emph{the\
canonical }$\left( \rho ,\eta \right) $\emph{-semispray associated to
mechanical }$\left( \rho ,\eta \right) $\emph{-system }$\left( \left( E,\pi
,M\right) ,F_{e},\left( \rho ,\eta \right) \Gamma \right) $\emph{\ and from
locally invertible }$\mathbf{B}^{\mathbf{v}}$\emph{-morphism }$\left(
g,h\right) .$
\end{theorem}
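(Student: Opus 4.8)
The plan is to verify Theorem~\ref{t58} in three stages: first confirm that the proposed section $S$ is indeed a $\left( \rho ,\eta \right) $-semispray, then derive the stated relation \eqref{eq83} for the functions $G^{a}$, and finally check that the functions $\left( \rho ,\eta \right) \mathring{\Gamma}_{c}^{a}$ from \eqref{eq84} satisfy the transformation law \eqref{eq22} that characterizes a $\left( \rho ,\eta \right) $-connection (invoking Remark~\ref{r11}).

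For the semispray property, I would apply the almost tangent structure $\mathcal{J}_{\left( g,h\right) }$ of Example~\ref{e42} to $S$. Using $\mathcal{J}_{\left( g,h\right) }\left( \tilde{\partial}_{a}\right) =\left( \tilde{g}_{a}^{b}\circ h\circ \pi \right) \overset{\cdot }{\tilde{\partial}}_{b}$ and $\mathcal{J}_{\left( g,h\right) }\big(\overset{\cdot }{\tilde{\partial}}_{a}\big)=0$, together with the local invertibility relation \eqref{eq39}, namely $\tilde{g}_{\alpha }^{b}\cdot g_{a}^{\alpha }=\delta _{a}^{b}$, one computes $\mathcal{J}_{\left( g,h\right) }\left( S\right) =\left( g_{b}^{a}\circ h\circ \pi \right) y^{b}\left( \tilde{g}_{a}^{c}\circ h\circ \pi \right) \overset{\cdot }{\tilde{\partial}}_{c}=y^{c}\overset{\cdot }{\tilde{\partial}}_{c}=\mathbb{C}$, which is exactly the Liouville section from Definition~\ref{d57}. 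Hence $S$ is a $\left( \rho ,\eta \right) $-semispray with respect to $e=\mathcal{J}_{\left( g,h\right) }$.

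For the derivation of \eqref{eq83}: here the idea is to impose that $S$ be \emph{horizontal} with respect to the connection $\left( \rho ,\eta \right) \Gamma$ after correcting by the external force, i.e. that the vertical projector $\mathcal{V}$ applied to the relevant part of $S$ reproduces the connection coefficients. Write $S$ in the natural $\left( \rho ,\eta \right) $-base and pass to the adapted base using $\tilde{\partial}_{a}=\tilde{\delta}_{a}+\left( \rho ,\eta \right) \Gamma _{a}^{b}\overset{\cdot }{\tilde{\partial}}_{b}$ from \eqref{eq26}; the coefficient of $\overset{\cdot }{\tilde{\partial}}_{a}$ in the horizontal part must match the semispray coefficients $-2\big(G^{a}-\tfrac14 F^{a}\big)$. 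Re-expressing $G^{a}-\tfrac14 F^{a}$ via $\partial/\partial y^{b}$ and using the transformation rule \eqref{eq10} for the $y$-coordinates (through the matrix $M_{a}^{a'}$) together with the structure functions $L_{\alpha\beta}^{\gamma}$ entering \eqref{eq12}, the four terms on the right-hand side of \eqref{eq83} emerge: one from the connection coefficient itself, one from the Lie-algebroid structure functions $L_{dc}^{f}$, and two from the $x$-derivatives of the components $g_{e}^{b}$ and $\tilde{g}_{c}^{a}$ of the morphism $\left( g,h\right)$.

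Finally, for \eqref{eq84}, since $\left( \rho ,\eta \right) \mathring{\Gamma}_{c}^{a}$ differs from $\left( \rho ,\eta \right) \Gamma _{c}^{a}$ only by the term $\left( \tilde{g}_{c}^{b}\circ h\circ \pi \right) \partial F^{a}/\partial y^{b}$ dropped out (replacing $G^a - \tfrac14 F^a$ by $G^a$), I would either check directly that the extra term transforms tensorially — so that $\left( \rho ,\eta \right) \mathring{\Gamma}$ still obeys \eqref{eq22} — or observe that $\left( \rho ,\eta \right) \mathring{\Gamma}$ is precisely the connection produced by the semispray $S_0=\left( g_{b}^{a}\circ h\circ \pi \right) y^{b}\tilde{\partial}_{a}-2G^{a}\overset{\cdot }{\tilde{\partial}}_{a}$ associated to the mechanical system with zero external force, and then appeal to the already-established transformation behavior. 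I expect the main obstacle to be the bookkeeping in the derivation of \eqref{eq83}: keeping track of all the $h\circ\pi$ compositions and correctly invoking the coordinate-change rules \eqref{eq10}, \eqref{eq13} and \eqref{eq22} while differentiating products of the $g$- and $\tilde g$-components is where sign errors and missing terms are most likely to creep in; everything else is a direct application of the structures set up in Sections~2 and~4.
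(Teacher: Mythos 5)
Your first step is fine: applying $\mathcal{J}_{\left( g,h\right) }$ to $S$ and using \eqref{eq39} to get $\mathcal{J}_{\left( g,h\right) }\left( S\right) =y^{c}\overset{\cdot }{\tilde{\partial}}_{c}=\mathbb{C}$ is exactly what the definition of a $\left( \rho ,\eta \right) $-semispray requires, and your treatment of \eqref{eq84} (checking that the correction term $\frac{1}{4}\left( \tilde{g}_{c}^{b}\circ h\circ \pi \right) \partial F^{a}/\partial y^{b}$ transforms tensorially, so that adding it to a $\left( \rho ,\eta \right) $-connection again gives a $\left( \rho ,\eta \right) $-connection via Remark~\ref{r11}) is essentially what the paper does.

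The central step, however, has a genuine gap. Your proposed mechanism for deriving \eqref{eq83} — rewriting $S$ in the adapted base via $\tilde{\partial}_{a}=\tilde{\delta}_{a}+\left( \rho ,\eta \right) \Gamma _{a}^{b}\overset{\cdot }{\tilde{\partial}}_{b}$ and matching the vertical component — yields the single relation
\begin{equation*}
2\left( G^{a}-\tfrac{1}{4}F^{a}\right) =\left( \rho ,\eta \right) \Gamma _{c}^{a}\left( g_{f}^{c}\circ h\circ \pi \right) y^{f},
\end{equation*}
which is the \emph{spray} condition \eqref{eq94} of Theorem~\ref{t66}, not \eqref{eq83}. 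Equation \eqref{eq83} expresses $\left( \rho ,\eta \right) \Gamma _{c}^{a}$ through the $y$-\emph{derivative} $\partial \left( G^{a}-\frac{1}{4}F^{a}\right) /\partial y^{b}$ together with a structure-function term in $L_{dc}^{f}$ and two $x$-derivative terms in $g$ and $\tilde{g}$; none of these can be produced by a change of base, and they do not come from the coordinate rules \eqref{eq10}, \eqref{eq13} as you suggest. In the paper they arise from a Grifone-type construction: one forms the $\mathbf{Mod}$-endomorphism $\mathbb{P}\left( X\right) =\mathcal{J}_{\left( g,h\right) }\left[ S,X\right] _{\left( \rho ,\eta \right) TE}-\left[ S,\mathcal{J}_{\left( g,h\right) }X\right] _{\left( \rho ,\eta \right) TE}$, computes all the brackets explicitly (this is where the $y$-derivatives of $G^{a}-\frac{1}{4}F^{a}$, the $L_{ab}^{c}$ term and the $\partial /\partial x^{i}$ terms of $g_{e}^{b}$ and $\tilde{g}_{c}^{a}$ appear, via \eqref{eq12} and \eqref{eq15}), verifies that $\mathbb{P}$ is an almost product structure, and then identifies $\mathbb{P}=Id-2\left( \rho ,\eta \right) \Gamma $ using \eqref{eq50} and \eqref{eq56}; comparing the vertical components against $\left( \rho ,\eta \right) \Gamma \left( Z^{a}\tilde{\partial}_{a}+Y^{a}\overset{\cdot }{\tilde{\partial}}_{a}\right) =\left( Y^{d}+\left( \rho ,\eta \right) \Gamma _{b}^{d}Z^{b}\right) \overset{\cdot }{\tilde{\partial}}_{d}$ gives \eqref{eq83}. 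Without this bracket computation your argument cannot reach the stated formula, so the proof as proposed does not go through.
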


\begin{proof}
We consider the $\mathbf{Mod}$-endomorphism%
\begin{equation*}
\begin{array}{rcl}
\Gamma \left( \left( \rho ,\eta \right) TE,\left( \rho ,\eta \right) \tau
_{E},E\right) & ^{\underrightarrow{\ \ \mathbb{P}\ \ }} & \Gamma \left(
\left( \rho ,\eta \right) TE,\left( \rho ,\eta \right) \tau _{E},E\right) 
\vspace*{1mm} \\ 
X & \longmapsto & \mathcal{J}_{\left( g,h\right) }\left[ S,X\right] _{\left(
\rho ,\eta \right) TE}-\left[ S,\mathcal{J}_{\left( g,h\right) }X\right]
_{\left( \rho ,\eta \right) TE}.%
\end{array}%
\end{equation*}

Let $X=Z^{a}\tilde{\partial}_{a}+Y^{a}\overset{\cdot }{\tilde{\partial}}_{a}$
be an arbitrary section. Since 
\begin{equation*}
\begin{array}{cl}
\left[ S,X\right] _{\left( \rho ,\eta \right) TE} & =\displaystyle\left[
\left( g_{e}^{a}\circ h\circ \pi \cdot y^{e}\right) \tilde{\partial}%
_{a},Z^{b}\tilde{\partial}_{b}\right] _{\left( \rho ,\eta \right) TE}+\left[
\left( g_{e}^{a}\circ h\circ \pi \cdot y^{e}\right) \tilde{\partial}%
_{a},Y^{b}\overset{\cdot }{\tilde{\partial}}_{b}\right] _{\left( \rho ,\eta
\right) TE}\vspace*{2mm} \\ 
& \displaystyle-\left[ 2\left( G^{a}-\frac{1}{4}F^{a}\right) \overset{\cdot }%
{\tilde{\partial}}_{a},Z^{b}\tilde{\partial}_{b}\right] _{\left( \rho ,\eta
\right) TE}-\left[ 2\left( G^{a}-\frac{1}{4}F^{a}\right) \overset{\cdot }{%
\tilde{\partial}}_{a},Y^{b}\overset{\cdot }{\tilde{\partial}}_{b}\right]
_{\left( \rho ,\eta \right) TE}%
\end{array}%
\end{equation*}%
and 
\begin{equation*}
\begin{array}{cl}
\left[ \left( g_{e}^{a}\circ h\circ \pi \cdot y^{e}\right) \tilde{\partial}%
_{a},Z^{b}\tilde{\partial}_{b}\right] _{\left( \rho ,\eta \right) TE} & =%
\displaystyle\left( g_{e}^{a}\circ h\circ \pi \right) y^{e}\left( \rho
_{a}^{i}\circ h\circ \pi \right) \frac{\partial Z^{c}}{\partial x^{i}}\tilde{%
\partial}_{c}\vspace*{2mm} \\ 
& -\displaystyle Z^{b}\left( \rho _{b}^{j}\circ h\circ \pi \right) \frac{%
\partial \left( g_{e}^{c}\circ h\circ \pi \right) }{\partial x^{j}}y^{e}%
\tilde{\partial}_{c}\vspace*{2mm} \\ 
& \displaystyle+\left( g_{e}^{a}\circ h\circ \pi \right) y^{e}Z^{b}L_{ab}^{c}%
\tilde{\partial}_{c},%
\end{array}%
\end{equation*}%
\begin{equation*}
\begin{array}{cl}
\left[ \left( g_{e}^{a}\circ h\circ \pi \cdot y^{e}\right) \tilde{\partial}%
_{a},Y^{b}\overset{\cdot }{\tilde{\partial}}_{b}\right] _{\left( \rho ,\eta
\right) TE} & =\displaystyle\left( g_{e}^{a}\circ h\circ \pi \right)
y^{e}\left( \rho _{a}^{i}\circ h\circ \pi \right) \frac{\partial Y^{c}}{%
\partial x^{i}}\overset{\cdot }{\tilde{\partial}}_{c}\vspace*{2mm} \\ 
& \displaystyle-Y^{b}g_{b}^{c}\tilde{\partial}_{c},%
\end{array}%
\end{equation*}%
\begin{equation*}
\begin{array}{cl}
\displaystyle\left[ 2\left( G^{a}-\frac{1}{4}F^{a}\right) \overset{\cdot }{%
\tilde{\partial}}_{a},Z^{b}\tilde{\partial}_{b}\right] _{\left( \rho ,\eta
\right) TE} & \displaystyle=2\left( G^{a}-\frac{1}{4}F^{a}\right) \frac{%
\partial Z^{c}}{\partial y^{a}}\tilde{\partial}_{c}\vspace*{2mm} \\ 
& \displaystyle-2Z^{b}\rho _{b}^{j}\circ h\circ \pi \frac{\partial \left(
G^{c}-\frac{1}{4}F^{c}\right) }{\partial x^{j}}\overset{\cdot }{\tilde{%
\partial}}_{c},%
\end{array}%
\end{equation*}%
\begin{equation*}
\begin{array}{c}
\left[ 2\left( G^{a}-\frac{1}{4}F^{a}\right) \overset{\cdot }{\tilde{\partial%
}}_{a},Y^{b}\overset{\cdot }{\tilde{\partial}}_{b}\right] _{\left( \rho
,\eta \right) TE}=2\left( G^{a}-\frac{1}{4}F^{a}\right) \frac{\partial Y^{c}%
}{\partial y^{a}}\overset{\cdot }{\tilde{\partial}}_{c}-2Y^{b}\displaystyle%
\frac{\partial \left( G^{c}-\displaystyle\frac{1}{4}F^{c}\right) }{\partial
y^{b}}\overset{\cdot }{\tilde{\partial}}_{c},%
\end{array}%
\end{equation*}%
it results that 
\begin{equation*}
\begin{array}{cl}
\mathcal{J}_{\left( g,h\right) }\left[ S,X\right] _{\left( \rho ,\eta
\right) TE} & \displaystyle=\left( g_{e}^{a}\circ h\circ \pi \right)
y^{e}\left( \rho _{a}^{i}\circ h\circ \pi \right) \frac{\partial Z^{c}}{%
\partial x^{i}}\left( \tilde{g}_{c}^{d}\circ h\circ \pi \right) \overset{%
\cdot }{\tilde{\partial}}_{d}\vspace*{2mm} \\ 
& \displaystyle-Z^{b}\left( \rho _{b}^{j}\circ h\circ \pi \right) \frac{%
\partial \left( g_{e}^{c}\circ h\circ \pi \right) }{\partial x^{j}}%
y^{e}\left( \tilde{g}_{c}^{d}\circ h\circ \pi \right) \overset{\cdot }{%
\tilde{\partial}}_{d}\vspace*{2mm} \\ 
& \displaystyle+\left( g_{e}^{a}\circ h\circ \pi \right)
y^{e}Z^{b}L_{ab}^{c}\left( \tilde{g}_{c}^{d}\circ h\circ \pi \right) \overset%
{\cdot }{\tilde{\partial}}_{d}\vspace*{2mm}\displaystyle-Y^{d}\overset{\cdot 
}{\tilde{\partial}}_{d}\vspace*{2mm} \\ 
& \displaystyle-2\left( G^{a}-\frac{1}{4}F^{a}\right) \frac{\partial Z^{c}}{%
\partial y^{a}}\left( \tilde{g}_{c}^{d}\circ h\circ \pi \right) \overset{%
\cdot }{\tilde{\partial}}_{d}.%
\end{array}%
\eqno(P2)
\end{equation*}

Since 
\begin{equation*}
\begin{array}{cl}
\left[ S,\mathcal{J}_{\left( g,h\right) }X\right] _{\left( \rho ,\eta
\right) TE} & \displaystyle=\left[ \left( g_{e}^{a}\circ h\circ \pi \right)
y^{e}\tilde{\partial}_{a},Z^{b}\left( \tilde{g}_{b}^{c}\circ h\circ \pi
\right) \overset{\cdot }{\tilde{\partial}}_{c}\right] _{\left( \rho ,\eta
\right) TE}\vspace*{2mm} \\ 
& \displaystyle-\left[ 2\left( G^{a}-\frac{1}{4}F^{a}\right) \overset{\cdot }%
{\tilde{\partial}}_{a},Z^{b}\left( \tilde{g}_{b}^{c}\circ h\circ \pi \right) 
\overset{\cdot }{\tilde{\partial}}_{c}\right] _{\left( \rho ,\eta \right) TE}%
\end{array}%
\end{equation*}%
and 
\begin{equation*}
\begin{array}{cl}
\left[ \left( g_{e}^{a}\circ h\circ \pi \right) y^{e}\tilde{\partial}%
_{a},Z^{b}\left( \tilde{g}_{b}^{c}\circ h\circ \pi \right) \overset{\cdot }{%
\tilde{\partial}}_{c}\right] _{\left( \rho ,\eta \right) TE} & \displaystyle%
=-Z^{d}\tilde{\partial}_{d}+\left( g_{e}^{a}\circ h\circ \pi \right)
y^{e}\left( \rho _{a}^{i}\circ h\circ \pi \right) \frac{\partial Z^{b}}{%
\partial x^{i}}\left( \tilde{g}_{b}^{d}\circ h\circ \pi \right) \overset{%
\cdot }{\tilde{\partial}}_{d}\vspace*{2mm} \\ 
& \displaystyle-\left( g_{e}^{a}\circ h\circ \pi \right) y^{e}\left( \rho
_{a}^{i}\circ h\circ \pi \right) Z^{b}\frac{\partial \left( \tilde{g}%
_{b}^{d}\circ h\circ \pi \right) }{\partial x^{i}}\overset{\cdot }{\tilde{%
\partial}}_{d},%
\end{array}%
\end{equation*}%
\begin{equation*}
\begin{array}{cl}
\displaystyle\left[ 2\left( G^{a}-\frac{1}{4}F^{a}\right) \overset{\cdot }{%
\tilde{\partial}}_{a},Z^{b}\left( \tilde{g}_{b}^{c}\circ h\circ \pi \right) 
\overset{\cdot }{\tilde{\partial}}_{c}\right] _{\left( \rho ,\eta \right) TE}
& \displaystyle=2\left( G^{a}-\frac{1}{4}F^{a}\right) \frac{\partial Z^{b}}{%
\partial y^{a}}\left( \tilde{g}_{b}^{d}\circ h\circ \pi \right) \overset{%
\cdot }{\tilde{\partial}}_{d}\vspace*{2mm} \\ 
& \displaystyle-Z^{b}\left( \tilde{g}_{b}^{c}\circ h\circ \pi \right) \frac{%
\partial 2\left( G^{d}-\frac{1}{4}F^{d}\right) }{\partial y^{c}}\overset{%
\cdot }{\tilde{\partial}}_{d}%
\end{array}%
\end{equation*}%
it results that%
\begin{equation*}
\begin{array}{cl}
\left[ S,\mathcal{J}_{\left( g,h\right) }X\right] _{\left( \rho ,\eta
\right) TE} & \displaystyle=-Z^{d}\tilde{\partial}_{d}+\left( g_{e}^{a}\circ
h\circ \pi \right) y^{e}\left( \rho _{a}^{i}\circ h\circ \pi \right) \frac{%
\partial Z^{b}}{\partial x^{i}}\left( \tilde{g}_{b}^{d}\circ h\circ \pi
\right) \overset{\cdot }{\tilde{\partial}}_{d} \\ 
& \displaystyle-\left( g_{e}^{a}\circ h\circ \pi \right) y^{e}\left( \rho
_{a}^{i}\circ h\circ \pi \right) Z^{b}\frac{\partial \left( \tilde{g}%
_{b}^{d}\circ h\circ \pi \right) }{\partial x^{i}}\overset{\cdot }{\tilde{%
\partial}}_{d} \\ 
& \displaystyle-2\left( G^{a}-\frac{1}{4}F^{a}\right) \frac{\partial Z^{b}}{%
\partial y^{a}}\left( \tilde{g}_{b}^{d}\circ h\circ \pi \right) \overset{%
\cdot }{\tilde{\partial}}_{d}\vspace*{2mm} \\ 
& \displaystyle+Z^{b}\left( \tilde{g}_{b}^{c}\circ h\circ \pi \right) \frac{%
\partial 2\left( G^{d}-\frac{1}{4}F^{d}\right) }{\partial y^{c}}\overset{%
\cdot }{\tilde{\partial}}_{d}.%
\end{array}%
\eqno(P2)
\end{equation*}

Using equalities $\left( P_{1}\right) $ and $\left( P_{2}\right) $, we
obtain:%
\begin{equation*}
\begin{array}[b]{cl}
\mathbb{P}\left( Z^{a}\tilde{\partial}_{a}+Y^{a}\overset{\cdot }{\tilde{%
\partial}}_{a}\right) & =Z^{a}\tilde{\partial}_{a}-Y^{d}\overset{\cdot }{%
\tilde{\partial}}_{d}+\left( g_{e}^{a}\circ h\circ \pi \right)
y^{e}Z^{b}\left( L_{ab}^{c}\circ h\circ \pi \right) \left( \tilde{g}%
_{c}^{d}\circ h\circ \pi \right) \overset{\cdot }{\tilde{\partial}}_{d} \\ 
& -Z^{b}\left( \rho _{b}^{j}\circ h\circ \pi \right) \frac{\partial \left(
g_{e}^{c}\circ h\circ \pi \right) }{\partial x^{j}}y^{e}\left( \tilde{g}%
_{c}^{d}\circ h\circ \pi \right) \overset{\cdot }{\tilde{\partial}}_{d} \\ 
& +\left( g_{e}^{a}\circ h\circ \pi \right) y^{e}\left( \rho _{a}^{i}\circ
h\circ \pi \right) Z^{b}\frac{\partial \left( \tilde{g}_{b}^{d}\circ h\circ
\pi \right) }{\partial x^{i}}\overset{\cdot }{\tilde{\partial}}_{d} \\ 
& -Z^{b}\left( \tilde{g}_{b}^{c}\circ h\circ \pi \right) \frac{\partial
2\left( G^{d}-\frac{1}{4}F^{d}\right) }{\partial y^{c}}\overset{\cdot }{%
\tilde{\partial}}_{d}%
\end{array}%
\end{equation*}

After some calculations, it results that $\mathbb{P}$ is an almost product
structure.

Using the equalities \eqref{eq50} and \eqref{eq56} it results that 
\begin{equation*}
\mathbb{P}\left( Z^{a}\tilde{\partial}_{a}+Y^{a}\overset{\cdot }{\tilde{%
\partial}}_{a}\right) =\left( Id-2\left( \rho ,\eta \right) \Gamma \right)
\left( Z^{a}\tilde{\partial}_{a}+Y^{a}\overset{\cdot }{\tilde{\partial}}%
_{a}\right) ,
\end{equation*}%
for any $Z^{a}\tilde{\partial}_{a}+Y^{a}\overset{\cdot }{\tilde{\partial}}%
_{a}\in \Gamma \left( \left( \rho ,\eta \right) TE,\left( \rho ,\eta \right)
\tau _{E},E\right) $ and we obtain%
\begin{equation*}
\begin{array}[b]{cl}
\left( \rho ,\eta \right) \Gamma \left( Z^{a}\tilde{\partial}_{a}+Y^{a}%
\overset{\cdot }{\tilde{\partial}}_{a}\right) & =Y^{d}\overset{\cdot }{%
\tilde{\partial}}_{d}-\frac{1}{2}\left( g_{e}^{a}\circ h\circ \pi \right)
y^{e}Z^{b}\left( L_{ab}^{c}\circ h\circ \pi \right) \left( \tilde{g}%
_{c}^{d}\circ h\circ \pi \right) \overset{\cdot }{\tilde{\partial}}_{d} \\ 
& +\frac{1}{2}Z^{b}\left( \rho _{b}^{j}\circ h\circ \pi \right) \frac{%
\partial \left( g_{e}^{c}\circ h\circ \pi \right) }{\partial x^{j}}%
y^{e}\left( \tilde{g}_{c}^{d}\circ h\circ \pi \right) \overset{\cdot }{%
\tilde{\partial}}_{d} \\ 
& -\frac{1}{2}\left( g_{e}^{a}\circ h\circ \pi \right) y^{e}\left( \rho
_{a}^{i}\circ h\circ \pi \right) Z^{b}\frac{\partial \left( \tilde{g}%
_{b}^{d}\circ h\circ \pi \right) }{\partial x^{i}}\overset{\cdot }{\tilde{%
\partial}}_{d} \\ 
& +Z^{b}\left( \tilde{g}_{b}^{c}\circ h\circ \pi \right) \frac{\partial
\left( G^{d}-\frac{1}{4}F^{d}\right) }{\partial y^{c}}\overset{\cdot }{%
\tilde{\partial}}_{d}.%
\end{array}%
\end{equation*}%
Since 
\begin{equation*}
\begin{array}{c}
\left( \rho ,\eta \right) \Gamma \left( Z^{a}\tilde{\partial}_{a}+Y^{a}%
\overset{\cdot }{\tilde{\partial}}_{a}\right) =\left( Y^{d}+\left( \rho
,\eta \right) \Gamma _{b}^{d}Z^{b}\right) \overset{\cdot }{\tilde{\partial}}%
_{d}%
\end{array}%
\end{equation*}%
it results the relations \eqref{eq83}. In addition, since 
\begin{equation*}
\left( \rho ,\eta \right) \mathring{\Gamma}_{c}^{a}=\left( \rho ,\eta
\right) \Gamma _{c}^{a}+\frac{1}{4}\tilde{g}_{c}^{d}\circ h\circ \pi \frac{%
\partial F^{a}}{\partial y^{d}}
\end{equation*}%
and 
\begin{equation*}
\begin{array}{ll}
\left( \rho ,\eta \right) \mathring{\Gamma}_{c%
{\acute{}}%
}^{a%
{\acute{}}%
} & \displaystyle=\left( \rho ,\eta \right) \Gamma _{c%
{\acute{}}%
}^{a%
{\acute{}}%
}+\frac{1}{2}\tilde{g}_{c%
{\acute{}}%
}^{b%
{\acute{}}%
}\circ h\circ \pi \displaystyle\frac{\partial F^{a%
{\acute{}}%
}}{\partial y^{b%
{\acute{}}%
}}\vspace*{2mm} \\ 
& \displaystyle=M_{a}^{a%
{\acute{}}%
}\circ \pi \left( \rho _{c}^{i}\circ h\circ \pi \cdot \frac{\partial M_{b%
{\acute{}}%
}^{a}}{\partial x^{i}}y^{b%
{\acute{}}%
}+\left( \rho ,\eta \right) \Gamma _{c}^{a}\right) M_{c%
{\acute{}}%
}^{c}\circ h\circ \pi \vspace*{2mm} \\ 
& \displaystyle+M_{a}^{a%
{\acute{}}%
}\circ \pi \left( \frac{1}{4}\tilde{g}_{c}^{b}\circ h\circ \pi \cdot \frac{%
\partial F^{a}}{\partial y^{b}}\right) M_{c%
{\acute{}}%
}^{c}\circ h\circ \pi \vspace*{2mm} \\ 
& \displaystyle=M_{a}^{a%
{\acute{}}%
}{\circ }\pi \left( \rho _{c}^{i}{\circ }h{\circ }\pi \cdot \frac{\partial
M_{b%
{\acute{}}%
}^{a}}{\partial x^{i}}y^{b%
{\acute{}}%
}+\left( \left( \rho ,\eta \right) \Gamma _{c}^{a}+\frac{1}{4}\tilde{g}%
_{c}^{b}{\circ h\circ }\pi \cdot \frac{\partial F^{a}}{\partial y^{b}}%
\right) \right) M_{c%
{\acute{}}%
}^{c}{\circ }h{\circ }\pi \vspace*{2mm} \\ 
& \displaystyle=M_{a}^{a%
{\acute{}}%
}{\circ }\pi \left( \rho _{c}^{i}{\circ }h{\circ }\pi \cdot \frac{\partial
M_{b%
{\acute{}}%
}^{a}}{\partial x^{i}}y^{b%
{\acute{}}%
}+\left( \rho ,\eta \right) \mathring{\Gamma}_{c}^{a}\right) M_{c%
{\acute{}}%
}^{c}{\circ }h{\circ }\pi%
\end{array}%
\end{equation*}%
it results the conclusion of the theorem.
\end{proof}

\begin{remark}
\label{r59}I\textrm{f }$\eta =Id_{M}$\textrm{, }$\left( g,h\right) =\left(
Id_{E},Id_{M}\right) $\textrm{\ and }$F_{e}=0$\textrm{, then we obtain\ the
canonical semispray associated to }$\rho $\textrm{-connection }$\rho \Gamma $%
\textrm{\ presented in \cite{3}, pp131-132 and \cite{4}.}

\textrm{If }$\left( \rho ,\eta \right) =\left( Id_{TM},Id_{M}\right) $%
\textrm{, }$\left( g,h\right) =\left( Id_{E},Id_{M}\right) $\textrm{\ and }$%
F_{e}\neq 0$\textrm{, then we obtain\ the canonical semispray associated to
connection }$\Gamma $\textrm{\ presented in \cite{5}.}

\textrm{In particular, if }$\left( \rho ,\eta \right) =\left(
Id_{TM},Id_{M}\right) $\textrm{, }$\left( g,h\right) =\left(
Id_{E},Id_{M}\right) $\textrm{, and }$F_{e}=0$\textrm{, then we obtain the
classical canonical semispray associated to connection }$\Gamma $\textrm{.}
\end{remark}

Using Theorem \textrm{\ref{t58}}, we obtain the following:

\begin{theorem}
\label{t60}\emph{The following properties hold good:}\medskip

$1^{\circ }$\ \emph{Since } $\overset{\circ }{\tilde{\delta}}_{c}=\tilde{%
\partial}_{c}-\left( \rho ,\eta \right) \mathring{\Gamma}_{c}^{a}\overset{%
\cdot }{\tilde{\partial}}_{a},~c\in \overline{1,r},$ \emph{it results that}%
\begin{equation}
\begin{array}[t]{l}
\overset{\circ }{\tilde{\delta}}_{c}=\tilde{\delta}_{c}-\frac{1}{4}\tilde{g}%
_{c}^{b}\circ h\circ \pi \cdot \frac{\partial F^{a}}{\partial y^{b}}\overset{%
\cdot }{\tilde{\partial}}_{a},~c\in \overline{1,r}.%
\end{array}
\label{eq85}
\end{equation}%
\emph{\ }

$2^{\circ }\ $\emph{Since} $\mathring{\delta}\tilde{y}^{a}=\left( \rho ,\eta
\right) \mathring{\Gamma}_{c}^{a}d\tilde{z}^{c}+d\tilde{y}^{a},$ \emph{it
results that}%
\begin{equation}
\begin{array}[t]{l}
\mathring{\delta}\tilde{y}^{a}=\delta \tilde{y}^{a}+\frac{1}{4}\tilde{g}%
_{c}^{b}\circ h\circ \pi \frac{\partial F^{a}}{\partial y^{b}}d\tilde{z}%
^{c},~a\in \overline{1,r}.%
\end{array}
\label{eq86}
\end{equation}
\end{theorem}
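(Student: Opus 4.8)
The plan is to obtain both formulas by a single substitution, using the relation between the two $(\rho ,\eta )$-connections $(\rho ,\eta )\Gamma $ and $(\rho ,\eta )\mathring{\Gamma}$ already isolated in the proof of Theorem \ref{t58}, namely
\begin{equation*}
(\rho ,\eta )\mathring{\Gamma}_{c}^{a}=(\rho ,\eta )\Gamma _{c}^{a}+\frac{1}{4}\,\tilde{g}_{c}^{b}\circ h\circ \pi \cdot \frac{\partial F^{a}}{\partial y^{b}}.
\end{equation*}
No new geometric ingredient is needed: everything reduces to a regrouping of terms in the formulas \eqref{eq26} and \eqref{eq34} that define the adapted $(\rho ,\eta )$-base and the adapted dual $(\rho ,\eta )$-base.

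For $1^{\circ }$ I would start from the adapted base attached to the $(\rho ,\eta )$-connection $(\rho ,\eta )\mathring{\Gamma}$: applying \eqref{eq26} with $(\rho ,\eta )\mathring{\Gamma}$ in place of $(\rho ,\eta )\Gamma $ gives precisely $\overset{\circ }{\tilde{\delta}}_{c}=\tilde{\partial}_{c}-(\rho ,\eta )\mathring{\Gamma}_{c}^{a}\,\overset{\cdot }{\tilde{\partial}}_{a}$, which is the hypothesis of $1^{\circ }$. Substituting the connection relation and splitting off the $F$-term yields
\begin{equation*}
\overset{\circ }{\tilde{\delta}}_{c}=\left( \tilde{\partial}_{c}-(\rho ,\eta )\Gamma _{c}^{a}\,\overset{\cdot }{\tilde{\partial}}_{a}\right) -\frac{1}{4}\,\tilde{g}_{c}^{b}\circ h\circ \pi \cdot \frac{\partial F^{a}}{\partial y^{b}}\,\overset{\cdot }{\tilde{\partial}}_{a},
\end{equation*}
and the parenthesis is $\tilde{\delta}_{c}$ by \eqref{eq26} again, which is \eqref{eq85}. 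For $2^{\circ }$ I would argue in the dual way: \eqref{eq34} applied to $(\rho ,\eta )\mathring{\Gamma}$ gives $\mathring{\delta}\tilde{y}^{a}=(\rho ,\eta )\mathring{\Gamma}_{c}^{a}\,d\tilde{z}^{c}+d\tilde{y}^{a}$, and after inserting the same connection relation the grouping $(\rho ,\eta )\Gamma _{c}^{a}\,d\tilde{z}^{c}+d\tilde{y}^{a}=\delta \tilde{y}^{a}$ produces \eqref{eq86}.

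I do not expect any genuine obstacle here; the statement is essentially a corollary of Theorem \ref{t58}. The only points requiring a little care are the consistency of the summation indices in the correction term $\tilde{g}_{c}^{b}\cdot \partial F^{a}/\partial y^{b}$ between the base and the co-base computations, and tracking the coefficient $\frac{1}{4}$, which is the one inherited from the coefficient $-2\left( G^{a}-\frac{1}{4}F^{a}\right) $ of the section $S$ in \eqref{eq82}; this is why the $F$-contribution to $(\rho ,\eta )\mathring{\Gamma}$ is exactly a quarter of $\tilde{g}_{c}^{b}\circ h\circ \pi \cdot \partial F^{a}/\partial y^{b}$. Once this bookkeeping is settled, both identities follow immediately.
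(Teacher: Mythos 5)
Your proposal is correct and coincides with the paper's own (essentially one-line) argument: the paper likewise derives \eqref{eq85} and \eqref{eq86} by inserting the relation $\left( \rho ,\eta \right) \mathring{\Gamma}_{c}^{a}=\left( \rho ,\eta \right) \Gamma _{c}^{a}+\frac{1}{4}\tilde{g}_{c}^{d}\circ h\circ \pi \,\frac{\partial F^{a}}{\partial y^{d}}$, established in the proof of Theorem \ref{t58}, into the definitions \eqref{eq26} and \eqref{eq34} of the adapted base and its dual. Your bookkeeping of the factor $\frac{1}{4}$ and of the signs in both the base and co-base cases is accurate.
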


\emph{\ \ \ }

\begin{theorem}
\label{t61}\emph{The real local functions}%
\begin{equation}
\begin{array}[t]{l}
\left( \frac{\partial \left( \rho ,\eta \right) \Gamma _{c}^{a}}{\partial
y^{b}},\frac{\partial \left( \rho ,\eta \right) \Gamma _{c}^{a}}{\partial
y^{b}},0,~0\right) ,~a,b,c\in \overline{1,r},%
\end{array}
\label{eq87}
\end{equation}%
\emph{and}%
\begin{equation}
\begin{array}[t]{l}
\left( \frac{\partial \left( \rho ,\eta \right) \mathring{\Gamma}_{c}^{a}}{%
\partial y^{b}},\frac{\partial \left( \rho ,\eta \right) \mathring{\Gamma}%
_{c}^{a}}{\partial y^{b}},0,~0\right) ,~a,b,c\in \overline{1,r},%
\end{array}
\label{eq88}
\end{equation}%
\emph{respectively, are the coefficients to a Berwald linear }$\left( \rho
,\eta \right) $\emph{-connection for the generalized tangent bundle }$\left(
\left( \rho ,\eta \right) TE,\left( \rho ,\eta \right) \tau _{E},E\right) $.
\end{theorem}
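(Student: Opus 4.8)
The plan is to read both assertions as immediate consequences of Example \ref{e47}, the first applied to the $\left( \rho ,\eta \right) $-connection $\left( \rho ,\eta \right) \Gamma $ that is part of the data of the mechanical $\left( \rho ,\eta \right) $-system, the second applied to the $\left( \rho ,\eta \right) $-connection $\left( \rho ,\eta \right) \mathring{\Gamma}$ produced by Theorem \ref{t58}. No genuinely new computation is required beyond invoking those results.

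First I would note that in the diagram \eqref{eq73} the vector bundle $\left( E,\pi ,M\right) $ and the generalized Lie algebroid share the same underlying bundle, so the standing hypothesis $\left( E,\pi ,M\right) =\left( F,\nu ,N\right) $ of Example \ref{e47} is in force; in particular the algebroid and fibre indices range over the common set $\overline{1,r}$, which is exactly why the first two entries of \eqref{eq87} (and likewise of \eqref{eq88}) carry the same expression. Since, by the definition of a mechanical $\left( \rho ,\eta \right) $-system, $\left( \rho ,\eta \right) \Gamma $ is a $\left( \rho ,\eta \right) $-connection for $\left( E,\pi ,M\right) $, Example \ref{e47} applies verbatim and identifies the functions in \eqref{eq87} as precisely the components of the Berwald linear $\left( \rho ,\eta \right) $-connection attached to $\left( \rho ,\eta \right) \Gamma $. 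This settles the first claim.

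For the second claim I would recall that Theorem \ref{t58} asserts exactly that the local functions $\left( \rho ,\eta \right) \mathring{\Gamma}_{c}^{a}$ of \eqref{eq84} are the components of a $\left( \rho ,\eta \right) $-connection $\left( \rho ,\eta \right) \mathring{\Gamma}$ for the vector bundle $\left( E,\pi ,M\right) $; equivalently, they obey the law of transformation \eqref{eq22} of Theorem \ref{t10}. Applying Example \ref{e47} once more, now with $\left( \rho ,\eta \right) \mathring{\Gamma}$ in the role of $\left( \rho ,\eta \right) \Gamma $, shows that the functions in \eqref{eq88} are the components of the Berwald linear $\left( \rho ,\eta \right) $-connection attached to $\left( \rho ,\eta \right) \mathring{\Gamma}$, which is the second claim.

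The only step carrying any content is the one packaged inside Example \ref{e47}: that the partial derivatives $\frac{\partial \left( \rho ,\eta \right) \Gamma _{c}^{a}}{\partial y^{b}}$ obey the change relations \eqref{eq62} for the $H$- and $V$-blocks of a distinguished linear $\left( \rho ,\eta \right) $-connection, the $V$-blocks vanishing identically. Should a self-contained argument be preferred over citing Example \ref{e47}, one differentiates the transformation law \eqref{eq22} with respect to a fibre coordinate: because the coordinate change \eqref{eq10} is fibrewise linear, the transition matrices $M$ and the $\Lambda $-factors do not depend on the fibre variables, the chain rule contributes only the single transition-matrix factor needed to pass from a derivative in a primed fibre coordinate to one in an unprimed fibre coordinate, and the differentiated inhomogeneous term reproduces exactly the term $\Gamma \left( \tilde{\rho},Id_{E}\right) \left( \tilde{\delta}_{\gamma }\right) \left( M\circ \pi \right) $ appearing in \eqref{eq62} (here one uses \eqref{eq27} to recognize $\rho _{\gamma }^{k}\circ h\circ \pi \cdot \partial _{k}$, applied to a base function, as $\Gamma \left( \tilde{\rho},Id_{E}\right) \left( \tilde{\delta}_{\gamma }\right) $); the identical computation run on \eqref{eq22} written for $\left( \rho ,\eta \right) \mathring{\Gamma}$ then settles \eqref{eq88}.
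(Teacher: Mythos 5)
Your proposal is correct and matches the paper's intent: the paper states Theorem \ref{t61} without proof, treating it exactly as you do, namely as an immediate application of Example \ref{e47} to the two $\left( \rho ,\eta \right) $-connections $\left( \rho ,\eta \right) \Gamma $ and $\left( \rho ,\eta \right) \mathring{\Gamma}$ supplied by the mechanical system and by Theorem \ref{t58}. Your optional self-contained verification (differentiating the transformation law \eqref{eq22} in the fibre variable and identifying the inhomogeneous term via \eqref{eq27}) is also sound and fills in the content the paper leaves implicit in Example \ref{e47}.
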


\begin{theorem}
\label{t62}\emph{The tensor of integrability of the }$\left( \rho ,\eta
\right) $\emph{-connection }$\left( \rho ,\eta \right) \mathring{\Gamma}$%
\emph{\ is as follows:}%
\begin{equation}
\begin{array}{l}
\displaystyle\left( \rho ,\eta ,h\right) \mathbb{\mathring{R}}%
_{cd}^{a}=\left( \rho ,\eta ,h\right) \mathbb{R}_{cd}^{a}+\frac{1}{4}\left( 
\tilde{g}_{d}^{e}\circ h\circ \pi \frac{\partial F^{a}}{\partial y^{e}}_{|c}-%
\tilde{g}_{c}^{e}\circ h\circ \pi \frac{\partial F^{a}}{\partial y^{e}}%
_{|d}\right) \vspace*{1mm} \\ 
\displaystyle+\frac{1}{16}\left( \tilde{g}_{d}^{e}\circ h\circ \pi \frac{%
\partial F^{b}}{\partial y^{e}}\tilde{g}_{c}^{f}\circ h\circ \pi \frac{%
\partial ^{2}F^{a}}{\partial y^{b}\partial y^{f}}-\tilde{g}_{c}^{f}\circ
h\circ \pi \frac{\partial F^{b}}{\partial y^{f}}\tilde{g}_{d}^{e}\circ
h\circ \pi \frac{\partial ^{2}F^{a}}{\partial y^{b}\partial y^{e}}\right) 
\vspace*{1mm} \\ 
\displaystyle+\frac{1}{4}\left( L_{cd}^{f}\circ h\circ \pi \right) \left( 
\tilde{g}_{f}^{e}\circ h\circ \pi \right) \frac{\partial F^{a}}{\partial
y^{e}},%
\end{array}
\label{eq89}
\end{equation}%
\emph{where }$_{|c}$\emph{\ is the }$h$\emph{-covariant derivation with
respect to the normal Berwald linear }$\left( \rho ,\eta \right) $\emph{%
-connection }\eqref{eq87}\emph{.}
\end{theorem}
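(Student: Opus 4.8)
The statement to be proved is the formula \eqref{eq89} for the integrability tensor $\left( \rho ,\eta ,h\right) \mathbb{\mathring{R}}_{cd}^{a}$ of the $\left( \rho ,\eta \right)$-connection $\left( \rho ,\eta \right) \mathring{\Gamma}$. The natural approach is to start from the general formula \eqref{eq30} for the curvature-type object $\left( \rho ,\eta ,h\right) \mathbb{R}_{\,\ \alpha \beta }^{a}$ associated to an arbitrary $\left( \rho ,\eta \right)$-connection (here specialized to the case $\left( E,\pi ,M\right) =\left( F,\nu ,N\right) $, so indices $\alpha ,\beta ,\ldots$ become $c,d,\ldots $) and substitute the relation, established in the proof of Theorem \ref{t58},
\begin{equation*}
\left( \rho ,\eta \right) \mathring{\Gamma}_{c}^{a}=\left( \rho ,\eta \right) \Gamma _{c}^{a}+\tfrac{1}{4}\tilde{g}_{c}^{b}\circ h\circ \pi \cdot \frac{\partial F^{a}}{\partial y^{b}}.
\end{equation*}
Writing $\Phi _{c}^{a}\overset{put}{=}\tfrac{1}{4}\tilde{g}_{c}^{b}\circ h\circ \pi \cdot \partial F^{a}/\partial y^{b}$ for the ``perturbation'' term, the first step is to expand \eqref{eq30} with $\left( \rho ,\eta \right) \Gamma _{c}^{a}$ replaced by $\left( \rho ,\eta \right) \Gamma _{c}^{a}+\Phi _{c}^{a}$ and collect the resulting contributions by order in $\Phi $.

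The zeroth-order term reproduces $\left( \rho ,\eta ,h\right) \mathbb{R}_{cd}^{a}$ by definition. The terms linear in $\Phi $ are of two kinds: those coming from $\Gamma \left( \tilde{\rho},Id_{E}\right) \left( \tilde{\delta}_{d}\right) \Phi _{c}^{a}-\Gamma \left( \tilde{\rho},Id_{E}\right) \left( \tilde{\delta}_{c}\right) \Phi _{d}^{a}$ plus the structure-function term $\left( L_{cd}^{f}\circ h\circ \pi \right) \Phi _{f}^{a}$, and the cross-terms in which one factor of $\left( \rho ,\eta \right) \Gamma $ meets one factor of $\Phi $ through the vertical part of $\tilde{\delta}$. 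The key calculational observation is that the combination $\Gamma \left( \tilde{\rho},Id_{E}\right) \left( \tilde{\delta}_{d}\right) \Phi _{c}^{a}+\left( L$-term$)$ plus exactly those cross-terms assembles into the $h$-covariant derivative $\Phi _{c\,|d}^{a}$ with respect to the Berwald linear $\left( \rho ,\eta \right)$-connection \eqref{eq87} — this is precisely the content of the formulas \eqref{eq67}--\eqref{eq68} for $T_{\ldots |\gamma}^{\ldots}$ applied to the $(1,1)$-type object $\Phi$. So the first-order contribution is $\Phi _{c\,|d}^{a}-\Phi _{d\,|c}^{a}$, which after inserting $\Phi _{c}^{a}=\tfrac14\tilde{g}_{c}^{e}\circ h\circ \pi \cdot \partial F^{a}/\partial y^{e}$ gives the $\tfrac14(\cdots )$ bracket and the $\tfrac14(L_{cd}^{f}\circ h\circ \pi )(\tilde{g}_{f}^{e}\circ h\circ \pi )\partial F^{a}/\partial y^{e}$ term of \eqref{eq89}. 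The term quadratic in $\Phi$ comes from the product $\Phi$ against $\partial\Phi/\partial y$ hidden in the vertical derivative pieces of $\Gamma \left( \tilde{\rho},Id_{E}\right) \left( \tilde{\delta}_{d}\right)$; it yields $\Phi _{d}^{b}\,\partial \Phi _{c}^{a}/\partial y^{b}-\Phi _{c}^{b}\,\partial \Phi _{d}^{a}/\partial y^{b}$, which upon substitution produces exactly the $\tfrac{1}{16}(\cdots )$ term of \eqref{eq89} (with the simple cancellation that the $\partial \tilde g/\partial y$ terms are absent because $\tilde g$ depends only on the base point through $h\circ\pi$). Summing the three orders gives \eqref{eq89}.

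The main obstacle is bookkeeping: correctly identifying which cross-terms between $\left( \rho ,\eta \right)\Gamma$ and $\Phi$ belong to the covariant derivative $\Phi^{a}_{c|d}$ and which are genuinely absent, and keeping the $h\circ\pi$ arguments and the anchor factors $\rho^i_{\alpha}\circ h\circ\pi$ straight when expanding $\Gamma \left( \tilde{\rho},Id_{E}\right) \left( \tilde{\delta}_{c}\right)$ via \eqref{eq27}. Everything else is a direct substitution using \eqref{eq30}, \eqref{eq67} and the defining relation for $\left( \rho ,\eta \right)\mathring{\Gamma}$; the anti-symmetrization in $c,d$ kills the symmetric pieces and leaves precisely the three groups of terms displayed in \eqref{eq89}. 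I would organize the write-up as: (i) recall \eqref{eq30} and the defining relation; (ii) split into orders $0,1,2$ in $\Phi$; (iii) recognize the order-$1$ part as $\Phi^{a}_{c|d}-\Phi^{a}_{d|c}$ using \eqref{eq67}; (iv) compute the order-$2$ part directly; (v) substitute $\Phi^{a}_{c}=\tfrac14\tilde g^{e}_{c}\circ h\circ\pi\cdot\partial F^{a}/\partial y^{e}$ and read off \eqref{eq89}.
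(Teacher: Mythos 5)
Your proposal follows essentially the same route as the paper: the paper's proof expands the defining expression for $\left( \rho ,\eta ,h\right) \mathring{\mathbb{R}}_{cd}^{a}$ by substituting $\left( \rho ,\eta \right) \mathring{\Gamma}_{c}^{a}=\left( \rho ,\eta \right) \Gamma _{c}^{a}+\frac{1}{4}\tilde{g}_{c}^{e}\circ h\circ \pi \cdot \partial F^{a}/\partial y^{e}$ and $\overset{\circ }{\tilde{\delta}}_{c}=\tilde{\delta}_{c}-\frac{1}{4}\tilde{g}_{c}^{e}\circ h\circ \pi \cdot \frac{\partial F^{a}}{\partial y^{e}}\overset{\cdot }{\tilde{\partial}}_{a}$ and then collects the resulting contributions by order in the perturbation, which is exactly your orders $0$, $1$, $2$ in $\Phi $. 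The one point to watch when you write out step (iii) is that the direct expansion produces, as first-order cross-terms, only the upper-index corrections $\frac{\partial \left( \rho ,\eta \right) \Gamma _{c}^{a}}{\partial y^{f}}\Phi _{d}^{f}-\frac{\partial \left( \rho ,\eta \right) \Gamma _{d}^{a}}{\partial y^{f}}\Phi _{c}^{f}$ and not the lower-index ones $-\frac{\partial \left( \rho ,\eta \right) \Gamma _{c}^{f}}{\partial y^{d}}\Phi _{f}^{a}+\frac{\partial \left( \rho ,\eta \right) \Gamma _{d}^{f}}{\partial y^{c}}\Phi _{f}^{a}$ that a full application of \eqref{eq67} to the $(1,1)$-object $\Phi $ would add, so the symbol $_{|c}$ in \eqref{eq89} has to be read in the former, restricted sense (the paper's own proof leaves this implicit and simply lists the expanded terms).
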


\begin{proof}
Since 
\begin{equation*}
\begin{array}{cl}
\left( \rho ,\eta ,h\right) \mathbb{\mathring{R}}_{cd}^{a}= & \Gamma \left( 
\tilde{\rho},Id_{E}\right) \left( \overset{\circ }{\tilde{\delta}}%
_{c}\right) \left( \left( \rho ,\eta \right) \mathring{\Gamma}%
_{d}^{a}\right) -\Gamma \left( \tilde{\rho},Id_{E}\right) \left( \overset{%
\circ }{\tilde{\delta}}_{d}\right) \left( \left( \rho ,\eta \right) 
\mathring{\Gamma}_{c}^{a}\right) \\ 
& +L_{cd}^{e}\circ h\circ \pi \left( \rho ,\eta \right) \mathring{\Gamma}%
_{e}^{a},%
\end{array}%
\end{equation*}%
and%
\begin{equation*}
\begin{array}{cl}
\Gamma \left( \tilde{\rho},Id_{E}\right) \left( \overset{\circ }{\tilde{%
\delta}}_{c}\right) \left( \left( \rho ,\eta \right) \mathring{\Gamma}%
_{d}^{a}\right) & \displaystyle=\Gamma \left( \tilde{\rho},Id_{E}\right)
\left( \tilde{\delta}_{c}\right) \left( \left( \rho ,\eta \right) \Gamma
_{d}^{a}\right) \vspace*{1mm} \\ 
& \displaystyle+\frac{1}{4}\Gamma \left( \tilde{\rho},Id_{E}\right) \left( 
\tilde{\delta}_{c}\right) \left( \tilde{g}_{d}^{e}\circ h\circ \pi \frac{%
\partial F^{a}}{\partial y^{e}}\right) \vspace*{1mm} \\ 
& \displaystyle-\frac{1}{4}\tilde{g}_{c}^{e}\circ h\circ \pi \frac{\partial
F^{f}}{\partial y^{e}}\frac{\partial }{\partial y^{f}}\left( \left( \rho
,\eta \right) \Gamma _{d}^{a}\right) \vspace*{1mm} \\ 
& \displaystyle-\frac{1}{16}\tilde{g}_{c}^{e}\circ h\circ \pi \frac{\partial
F^{f}}{\partial y^{e}}\frac{\partial }{\partial y^{f}}\left( \tilde{g}%
_{d}^{e}\circ h\circ \pi \frac{\partial F^{a}}{\partial y^{e}}\right) ,%
\end{array}%
\end{equation*}%
\begin{equation*}
\begin{array}{cl}
\Gamma \left( \tilde{\rho},Id_{E}\right) \left( \overset{\circ }{\tilde{%
\delta}}_{d}\right) \left( \left( \rho ,\eta \right) \mathring{\Gamma}%
_{c}^{a}\right) & \displaystyle=\Gamma \left( \tilde{\rho},Id_{E}\right)
\left( \tilde{\delta}_{d}\right) \left( \left( \rho ,\eta \right) \Gamma
_{c}^{a}\right) \vspace*{1mm} \\ 
& \displaystyle+\frac{1}{4}\Gamma \left( \tilde{\rho},Id_{E}\right) \left( 
\tilde{\delta}_{d}\right) \left( \tilde{g}_{c}^{e}\circ h\circ \pi \frac{%
\partial F^{a}}{\partial y^{e}}\right) \vspace*{1mm} \\ 
& \displaystyle-\frac{1}{4}\tilde{g}_{d}^{e}\circ h\circ \pi \frac{\partial
F^{f}}{\partial y^{e}}\frac{\partial }{\partial y^{f}}\left( \left( \rho
,\eta \right) \Gamma _{c}^{a}\right) \vspace*{1mm} \\ 
& \displaystyle-\frac{1}{16}\tilde{g}_{d}^{e}\circ h\circ \pi \frac{\partial
F^{f}}{\partial y^{e}}\frac{\partial }{\partial y^{f}}\left( \tilde{g}%
_{c}^{e}\circ h\circ \pi \frac{\partial F^{a}}{\partial y^{e}}\right) ,%
\end{array}%
\end{equation*}%
\begin{equation*}
\begin{array}{cl}
L_{cd}^{e}\circ h\circ \pi \cdot \left( \rho ,\eta \right) \mathring{\Gamma}%
_{e}^{a} & =L_{cd}^{e}\circ h\circ \pi \cdot \left( \rho ,\eta \right)
\Gamma _{e}^{a}\vspace*{1mm} \\ 
& \displaystyle+L_{cd}^{e}\circ h\circ \pi \cdot \left( \tilde{g}%
_{e}^{f}\circ h\circ \pi \frac{\partial F^{a}}{\partial y^{f}}\right)%
\end{array}%
\end{equation*}%
it results the conclusion of the theorem. 
\end{proof}

\begin{proposition}
\label{p63}\emph{If }$S$\emph{\ is the canonical }$\left( \rho ,\eta \right) 
$\emph{-semispray asso\-cia\-ted to the mechanical }$\left( \rho ,\eta
\right) $\emph{-system }$\left( \left( E,\pi ,M\right) ,F_{e},\left( \rho
,\eta \right) \Gamma \right) $\emph{\ and from }$\mathbf{B}^{\mathbf{v}}$%
\emph{-mor\-phism }$\left( g,h\right) $\emph{,\ then}%
\begin{equation}
\begin{array}[t]{l}
2G^{a%
{\acute{}}%
}=2G^{a}M_{a}^{a%
{\acute{}}%
}\circ h\circ \pi -\left( g_{b}^{a}\circ h\circ \pi \right) y^{b}\left( \rho
_{a}^{i}\circ h\circ \pi \right) \frac{\partial y^{a%
{\acute{}}%
}}{\partial x^{i}}.%
\end{array}
\label{eq90}
\end{equation}%
\emph{\ }
\end{proposition}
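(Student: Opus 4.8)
The plan is to exploit the fact that the section $S$ constructed in Theorem \ref{t58} is intrinsically associated to the mechanical $\left( \rho ,\eta \right) $-system $\left( \left( E,\pi ,M\right) ,F_{e},\left( \rho ,\eta \right) \Gamma \right) $ and to the locally invertible morphism $\left( g,h\right) $: its horizontal part $\left( \rho ,\eta \right) \pi !\left( S\right) $ is fixed by the semispray condition $\mathcal{J}_{\left( g,h\right) }S=\mathbb{C}$, and its vertical part is fixed by \eqref{eq83}. Hence the two local expressions of $S$ with respect to the natural $\left( \rho ,\eta \right) $-bases of two overlapping fibred charts must agree, and matching the $\overset{\cdot }{\tilde{\partial}}$-coefficients of these expressions will produce exactly \eqref{eq90}.

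Concretely, I would write
\[
S=\left( g_{b}^{a}\circ h\circ \pi \right) y^{b}\tilde{\partial}_{a}-2\Big(G^{a}-\tfrac{1}{4}F^{a}\Big)\overset{\cdot }{\tilde{\partial}}_{a}=\left( g_{b'}^{a'}\circ h\circ \pi \right) y^{b'}\tilde{\partial}_{a'}-2\Big(G^{a'}-\tfrac{1}{4}F^{a'}\Big)\overset{\cdot }{\tilde{\partial}}_{a'},
\]
and substitute into the first expression the change-of-base rules read off from the transition matrix \eqref{eq13},
\[
\tilde{\partial}_{a}=\left( \Lambda _{a}^{a'}\circ h\circ \pi \right) \tilde{\partial}_{a'}+\left( \rho _{a}^{i}\circ h\circ \pi \right) \frac{\partial \left( M_{c}^{a'}\circ \pi \right) }{\partial x^{i}}\,y^{c}\,\overset{\cdot }{\tilde{\partial}}_{a'},\qquad \overset{\cdot }{\tilde{\partial}}_{a}=\left( M_{a}^{a'}\circ \pi \right) \overset{\cdot }{\tilde{\partial}}_{a'}.
\]
The $\tilde{\partial}_{a'}$-coefficients on the two sides agree by the transformation laws of the components of $g$, of $\Lambda $ and of the $y$'s (see \eqref{eq10}); this merely expresses that $\left( \rho ,\eta \right) \pi !\left( S\right) $ is a well-defined section of $\pi ^{\ast }\left( h^{\ast }F\right) $ and contains no information on the $G^{a}$'s.

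It therefore remains to equate the $\overset{\cdot }{\tilde{\partial}}_{a'}$-coefficients, which gives
\[
-2\Big(G^{a'}-\tfrac{1}{4}F^{a'}\Big)=-2\Big(G^{a}-\tfrac{1}{4}F^{a}\Big)\left( M_{a}^{a'}\circ \pi \right) +\left( g_{b}^{a}\circ h\circ \pi \right) y^{b}\left( \rho _{a}^{i}\circ h\circ \pi \right) \frac{\partial \left( M_{c}^{a'}\circ \pi \right) }{\partial x^{i}}\,y^{c}.
\]
Now I would invoke the transformation rule for the external force: since $F_{e}=F^{a}\overset{\cdot }{\tilde{\partial}}_{a}$ is a globally defined section of $\left( V\left( \rho ,\eta \right) TE,\left( \rho ,\eta \right) \tau _{E},E\right) $, its components satisfy $F^{a'}=\left( M_{a}^{a'}\circ \pi \right) F^{a}$, so the two $\tfrac{1}{4}F$-terms cancel. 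Finally, using $\frac{\partial \left( M_{c}^{a'}\circ \pi \right) }{\partial x^{i}}\,y^{c}=\frac{\partial y^{a'}}{\partial x^{i}}$ (the partial derivative on $E$ taken with the fibre coordinates held fixed, cf. \eqref{eq10}) and multiplying through by $-1$, one arrives at \eqref{eq90}.

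The only real difficulty is the bookkeeping: one has to record the off-diagonal block of \eqref{eq13} correctly and to check that the horizontal coefficients genuinely match, which uses the transformation laws already established in Section \ref{sfive} together with the hypothesis that $\left( g,h\right) $ is a $\mathbf{B}^{\mathbf{v}}$-morphism; no computation beyond those already carried out in the proof of Theorem \ref{t58} is needed. An equivalent route, which I would also mention, is to start from the fact — established inside Theorem \ref{t58} — that the functions \eqref{eq84} are the components of a genuine $\left( \rho ,\eta \right) $-connection and hence obey \eqref{eq22}, and then to unwind the relation between $\left( \rho ,\eta \right) \mathring{\Gamma}_{c}^{a}$ and $G^{a}$; this works too, but since $G^{a}$ need not be homogeneous in $y$ it requires an extra integration-type step, so I would favour the direct chart comparison described above.
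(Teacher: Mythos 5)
Your proposal is correct and is essentially the paper's own argument: the paper simply multiplies the transition matrix \eqref{eq13} by the column of components of $S$ and reads off the second row, which is exactly your coefficient-matching on the $\overset{\cdot }{\tilde{\partial}}_{a'}$ part after substituting the basis transformation. You are in fact slightly more explicit than the paper, which leaves the cancellation of the $\tfrac{1}{4}F^{a}$ terms (via $F^{a'}=\left( M_{a}^{a'}\circ \pi \right) F^{a}$) and the identification $\frac{\partial \left( M_{c}^{a'}\circ \pi \right) }{\partial x^{i}}y^{c}=\frac{\partial y^{a'}}{\partial x^{i}}$ to the reader with the phrase ``the conclusion results immediately.''
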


\begin{proof}
Since the Jacobian matrix of coordinates transformation is 
\begin{equation*}
\left\Vert 
\begin{array}{ll}
\,\ \ \ \ \ \ \ M_{a}^{a%
{\acute{}}%
}\circ h\circ \pi & \,\ 0\vspace*{1mm} \\ 
\rho _{a}^{i}\circ \left( h\circ \pi \right) \displaystyle\frac{\partial
M_{a}^{a%
{\acute{}}%
}\circ \pi }{\partial x^{i}}y^{a} & M_{a}^{a%
{\acute{}}%
}\circ \pi%
\end{array}%
\right\Vert =\left\Vert 
\begin{array}{ll}
\,\ \ \ \ \ \ \ M_{a}^{a%
{\acute{}}%
}\circ h\circ \pi & \,\ 0\vspace*{1mm} \\ 
\rho _{a}^{i}\circ \left( h\circ \pi \right) \displaystyle\frac{\partial y^{a%
{\acute{}}%
}}{\partial x^{i}} & M_{a}^{a%
{\acute{}}%
}\circ \pi%
\end{array}%
\right\Vert
\end{equation*}%
and 
\begin{equation*}
\begin{array}{c}
\left\Vert 
\begin{array}{ll}
\,\ \ \ \ \ \ \ M_{a}^{a%
{\acute{}}%
}\circ h\circ \pi & \,\ 0\vspace*{1mm} \\ 
\rho _{a}^{i}\circ \left( h\circ \pi \right) \displaystyle\frac{\partial y^{a%
{\acute{}}%
}}{\partial x^{i}} & M_{a}^{a%
{\acute{}}%
}\circ \pi%
\end{array}%
\right\Vert \cdot \left( 
\begin{array}{l}
\,\ \ \ \left( g_{b}^{a}\circ h\circ \pi \right) y^{b}\vspace*{1mm} \\ 
-2\left( G^{a}-\displaystyle\frac{1}{4}F^{a}\right)%
\end{array}%
\right) =\left( 
\begin{array}{l}
\,\ \ \left( g_{b%
{\acute{}}%
}^{a%
{\acute{}}%
}\circ h\circ \pi \right) y^{b%
{\acute{}}%
}\vspace*{1mm} \\ 
-2\left( G^{a%
{\acute{}}%
}-\displaystyle\frac{1}{4}F^{a%
{\acute{}}%
}\right)%
\end{array}%
\right) ,%
\end{array}%
\end{equation*}%
the conclusion results immediately. 
\end{proof}

In the following, we consider a differentiable curve $%
\begin{array}[b]{ccc}
I & ^{\underrightarrow{~c~}} & M%
\end{array}%
$ and its $\left( g,h\right) $-lift $\dot{c}.$


\begin{definition}
\label{d64}If it is verifies the following equality:%
\begin{equation}
\begin{array}[t]{l}
\frac{d\dot{c}\left( t\right) }{dt}=\Gamma \left( \tilde{\rho},Id_{E}\right)
S\left( \dot{c}\left( t\right) \right) ,%
\end{array}
\label{eq91}
\end{equation}%
\textit{\ }then we say that \emph{the curve }$\dot{c}$\emph{\ is an integral
curve of the }$\left( \rho ,\eta \right) $\emph{-semispray }$S$\emph{\ of
the mechanical }$\left( \rho ,\eta \right) $\emph{-system }$\left( \left(
E,\pi ,M\right) ,F_{e},\left( \rho ,\eta \right) \Gamma \right) $.
\end{definition}


\begin{theorem}
\label{t65}\emph{All }$\left( g,h\right) $\emph{-lifts solutions of the
equations:}%
\begin{equation}
\begin{array}[t]{l}
\frac{dy^{a}\left( t\right) }{dt}+2G^{a}\!\circ u\left( c,\dot{c}\right)
\left( x\left( t\right) \right) {=}\frac{1}{2}F^{a}\!\circ u\left( c,\dot{c}%
\right) \left( x\left( t\right) \right) \!,\,a{\in }\overline{1,\!r},%
\end{array}
\label{eq92}
\end{equation}%
\emph{where }$x\left( t\right) =\left( \eta \circ h\circ c\right) \left(
t\right) ,$ \emph{are } \emph{integral curves of the canonical }$\left( \rho
,\eta \right) $\emph{-semispray asso\-cia\-ted to mechanical }$\left( \rho
,\eta \right) $\emph{-system }$\left( \left( E,\pi ,M\right) ,F_{e},\left(
\rho ,\eta \right) \Gamma \right) $\emph{\ and from locally invertible }$%
\mathbf{B}^{\mathbf{v}}$\emph{-mor\-phism }$\left( g,h\right) .$
\end{theorem}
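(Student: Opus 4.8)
The plan is to rewrite the integral-curve condition \eqref{eq91} in local coordinates and to see that it decomposes into two scalar systems: a ``horizontal'' one that is automatically fulfilled by every $\left(g,h\right)$-lift, and a ``vertical'' one that is precisely \eqref{eq92}. First I would fix a differentiable curve $c$ and a $\left(g,h\right)$-lift $\dot c$ whose component functions $\left(y^{a}\right)$ solve \eqref{eq92}. Writing $\dot c\left(t\right)$ in the canonical coordinates $\left(x^{i},y^{a}\right)$ of $\left(E,\pi,M\right)$, we have $x^{i}\left(t\right)=\left(\eta\circ h\circ c\right)^{i}\left(t\right)$ by Definition \ref{d15}, hence, viewed as a tangent vector to $E$,
$$\frac{d\dot c\left(t\right)}{dt}=\frac{d\left(\eta\circ h\circ c\right)^{i}\left(t\right)}{dt}\,\partial_{i}+\frac{dy^{a}\left(t\right)}{dt}\,\dot\partial_{a}.$$

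Next I would evaluate the right-hand side of \eqref{eq91}. Starting from the local form \eqref{eq82} of the canonical $\left(\rho,\eta\right)$-semispray $S$, applying $\Gamma\left(\tilde\rho,Id_{E}\right)$ through the anchor formula \eqref{eq14}, and using $\tilde\partial_{a}=T_{a}\oplus\left(\rho_{a}^{i}\circ h\circ\pi\right)\partial_{i}$, I obtain
$$\Gamma\left(\tilde\rho,Id_{E}\right)S=\left(g_{b}^{a}\circ h\circ\pi\right)y^{b}\left(\rho_{a}^{i}\circ h\circ\pi\right)\partial_{i}-2\left(G^{a}-\tfrac14 F^{a}\right)\dot\partial_{a}.$$
Since $\left\{\partial_{i},\dot\partial_{a}\right\}$ is a local base of $\Gamma\left(TE,\tau_{E},E\right)$, equating $\dfrac{d\dot c\left(t\right)}{dt}$ with $\Gamma\left(\tilde\rho,Id_{E}\right)S\left(\dot c\left(t\right)\right)$ is equivalent to two systems. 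The coefficient of $\partial_{i}$ gives $\dfrac{d\left(\eta\circ h\circ c\right)^{i}\left(t\right)}{dt}=\rho_{a}^{i}\cdot g_{b}^{a}\cdot y^{b}\left(t\right)$ evaluated along the curve, which is exactly the relation \eqref{eq37} characterizing a $\left(g,h\right)$-lift in Remark \ref{r16}; it therefore holds by hypothesis. The coefficient of $\dot\partial_{a}$ gives $\dfrac{dy^{a}\left(t\right)}{dt}+2G^{a}\left(\dot c\left(t\right)\right)=\tfrac12 F^{a}\left(\dot c\left(t\right)\right)$.

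Finally, by Definition \ref{d17} the value $\dot c\left(t\right)$ equals $u\left(c,\dot c\right)\left(\left(\eta\circ h\circ c\right)\left(t\right)\right)=u\left(c,\dot c\right)\left(x\left(t\right)\right)$, so that $G^{a}\left(\dot c\left(t\right)\right)=G^{a}\circ u\left(c,\dot c\right)\left(x\left(t\right)\right)$ and similarly for $F^{a}$; the vertical system above is then literally the system \eqref{eq92}, which holds by the assumption on $\dot c$. Both parts of \eqref{eq91} are thus satisfied, so $\dot c$ is an integral curve of $S$, proving the theorem. I expect the only real care to be needed in the base-point bookkeeping: making sure that $\rho_{a}^{i}$ and $g_{b}^{a}$ are read along $\eta\circ h\circ c$ (via the compositions $\circ\,h\circ\pi$ restricted to the image of $\dot c$) and that $G^{a},F^{a}$ are read at $\dot c\left(t\right)$, so that the two systems match \eqref{eq37} and \eqref{eq92} exactly; the underlying computation is the same one already carried out in the proof of Theorem \ref{t58}.
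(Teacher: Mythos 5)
Your proposal is correct and follows essentially the same route as the paper's own proof: both write the integral-curve condition \eqref{eq91} in the local base $\left(\partial_{i},\dot{\partial}_{a}\right)$, observe that the $\partial_{i}$-component reduces to the $\left(g,h\right)$-lift relation \eqref{eq37} and the $\dot{\partial}_{a}$-component to the system \eqref{eq92}. Your version is slightly more explicit in citing Remark \ref{r16} for why the horizontal part is automatic, but the computation is the same.
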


\begin{proof}
Since the equality 
\begin{equation*}
\begin{array}[t]{l}
\frac{d\dot{c}\left( t\right) }{dt}=\Gamma \left( \tilde{\rho},Id_{E}\right)
S\left( \dot{c}\left( t\right) \right)%
\end{array}%
\end{equation*}%
is equivalent to 
\begin{equation*}
\begin{array}{l}
\displaystyle\frac{d}{dt}((\eta \circ h\circ c)^{i}(t),y^{a}(t))\vspace*{2mm}
\\ 
\qquad =\displaystyle\left( \rho _{a}^{i}\circ \eta \circ h\circ
c(t)g_{b}^{a}\circ h\circ c(t)y^{b}(t),-2\left( G^{a}-\frac{1}{4}%
F^{a}\right) ((\eta \circ h\circ c)^{i}(t),y^{a}(t))\right) ,%
\end{array}%
\end{equation*}%
it results 
\begin{equation*}
\begin{array}{l}
\displaystyle\frac{dy^{a}\left( t\right) }{dt}+2G^{a}\left( x^{i}\left(
t\right) ,y^{a}\left( t\right) \right) =\frac{1}{2}F^{a}\left( x^{i}\left(
t\right) ,y^{a}\left( t\right) \right) ,\ \ a\in \overline{1,n},\vspace*{2mm}
\\ 
\displaystyle\frac{dx^{i}\left( t\right) }{dt}=\rho _{a}^{i}\circ \eta \circ
h\circ c\left( t\right) g_{b}^{a}\circ h\circ c\left( t\right) y^{b}\left(
t\right) ,%
\end{array}%
\end{equation*}%
where $x^{i}\left( t\right) =\left( \eta \circ h\circ c\right) ^{i}\left(
t\right) $.
\end{proof}


\begin{definition}
\label{d66}\textbf{\ }If $S$\ is a $\left( \rho ,\eta \right) $-semispray,
then the vector field%
\begin{equation}
\begin{array}{l}
\left[ \mathbb{C},S\right] _{\left( \rho ,\eta \right) TE}-S%
\end{array}
\label{eq93}
\end{equation}
will be called the \emph{derivation of }$\left( \rho ,\eta \right) $\emph{%
-semispray }$S.$
\end{definition}

The $\left( \rho ,\eta \right) $-semispray $S$ will be called $\left( \rho
,\eta \right) $\emph{-spray} if the following conditions are
verified:\medskip

1. $S\circ 0$ is differentiable of class $C^{1}$\ where $0$\ is the null
section;\smallskip

2. Its derivation is the null vector field.\medskip

The $\left( \rho ,\eta \right) $-semispray $S$\ will be called \emph{%
quadratic }$\left( \rho ,\eta \right) $\emph{-spray }if there are verified
the following conditions:\medskip

1. $S\circ 0$ is differentiable of class $C^{2},$\ where $0$\ is the null
section;\smallskip

2. Its derivation is the null vector field.\medskip

In particular, \ if $\ \left( \rho ,\eta \right) =\left(
id_{TM},Id_{M}\right) $ and $\left( g,h\right) =\left( Id_{E},Id_{M}\right)
, $ \ then \ we \ obtain \ the \ \emph{spray} \ and the \emph{quadratic
spray }which is similar with the classical spray and quadratic spray.

\begin{theorem}
\label{t66}\emph{If }$S$\emph{\ is the canonical }$\left( \rho ,\eta \right) 
$\emph{-spray associated to mechanical }$\left( \rho ,\eta \right) $\emph{%
-system }$\left( \left( E,\pi ,M\right) ,F_{e},\left( \rho ,\eta \right)
\Gamma \right) $\emph{\ and from locally invertible }$\mathbf{B}^{\mathbf{v}%
} $\emph{-morphism }$\left( g,h\right) $\emph{, then}%
\begin{equation}
\begin{array}{cl}
2\left( G^{a}-\frac{1}{4}F^{a}\right) & =\left( \rho ,\eta \right) \Gamma
_{c}^{a}\left( g_{f}^{c}\circ h\circ \pi \right) y^{f} \\ 
& +\frac{1}{2}\left( g_{e}^{d}\circ h\circ \pi \right) y^{e}\left(
L_{dc}^{b}\circ h\circ \pi \right) \left( \tilde{g}_{b}^{a}\circ h\circ \pi
\right) \left( g_{f}^{c}\circ h\circ \pi \right) y^{f} \\ 
& -\frac{1}{2}\left( \rho _{c}^{j}\circ h\circ \pi \right) \frac{\partial
\left( g_{e}^{b}\circ h\circ \pi \right) }{\partial x^{j}}y^{e}\left( \tilde{%
g}_{b}^{a}\circ h\circ \pi \right) \left( g_{f}^{c}\circ h\circ \pi \right)
y^{f} \\ 
& +\frac{1}{2}\left( g_{e}^{b}\circ h\circ \pi \right) y^{e}\left( \rho
_{b}^{i}\circ h\circ \pi \right) \frac{\partial \left( \tilde{g}%
_{c}^{a}\circ h\circ \pi \right) }{\partial x^{i}}\left( g_{f}^{c}\circ
h\circ \pi \right) y^{f}%
\end{array}
\label{eq94}
\end{equation}

\emph{We obtain the spray}%
\begin{equation}
\begin{array}{cl}
S & =\left( g_{b}^{a}\circ h\circ \pi \right) y^{b}\tilde{\partial}%
_{a}-\left( \rho ,\eta \right) \Gamma _{c}^{a}\left( g_{f}^{c}\circ h\circ
\pi \right) y^{f}\overset{\cdot }{\tilde{\partial}}_{a} \\ 
& -\frac{1}{2}\left( g_{e}^{d}\circ h\circ \pi \right) y^{e}\left(
L_{dc}^{b}\circ h\circ \pi \right) \left( \tilde{g}_{b}^{a}\circ h\circ \pi
\right) \left( g_{f}^{c}\circ h\circ \pi \right) y^{f}\overset{\cdot }{%
\tilde{\partial}}_{a} \\ 
& +\frac{1}{2}\left( \rho _{c}^{j}\circ h\circ \pi \right) \frac{\partial
\left( g_{e}^{b}\circ h\circ \pi \right) }{\partial x^{j}}y^{e}\left( \tilde{%
g}_{b}^{a}\circ h\circ \pi \right) \left( g_{f}^{c}\circ h\circ \pi \right)
y^{f}\overset{\cdot }{\tilde{\partial}}_{a} \\ 
& -\frac{1}{2}\left( g_{e}^{b}\circ h\circ \pi \right) y^{e}\left( \rho
_{b}^{i}\circ h\circ \pi \right) \frac{\partial \left( \tilde{g}%
_{c}^{a}\circ h\circ \pi \right) }{\partial x^{i}}\left( g_{f}^{c}\circ
h\circ \pi \right) y^{f}\overset{\cdot }{\tilde{\partial}}_{a}%
\end{array}
\label{eq95}
\end{equation}

\emph{This spray will be called the canonical }$\left( \rho ,\eta \right) $%
\emph{-spray associated to mechanical system }$\left( \left( E,\pi ,M\right)
,F_{e},\left( \rho ,\eta \right) \Gamma \right) $\emph{\ and from locally
invertible }$\mathbf{B}^{\mathbf{v}}$\emph{-morphism }$(g,h).$

\emph{In particular, if }$\left( \rho ,\eta \right) =\left(
id_{TM},Id_{M}\right) $\emph{\ and }$\left( g,h\right) =\left(
Id_{E},Id_{M}\right) ,$\emph{\ then we get the canonical spray associated to
connection }$\Gamma $\emph{\ which is similar with the classical canonical
spray associated to connection }$\Gamma $.
\end{theorem}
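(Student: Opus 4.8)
The plan is to combine Theorem~\ref{t58} with the defining property of a $\left( \rho ,\eta \right) $-spray, i.e. the vanishing of its derivation. By Theorem~\ref{t58}, the canonical $\left( \rho ,\eta \right) $-semispray $S$ associated to the mechanical $\left( \rho ,\eta \right) $-system $\left( \left( E,\pi ,M\right) ,F_{e},\left( \rho ,\eta \right) \Gamma \right) $ and to the locally invertible $\mathbf{B}^{\mathbf{v}}$-morphism $\left( g,h\right) $ has the form \eqref{eq82}, and its $\left( \rho ,\eta \right) $-connection coefficients are tied to the local functions $G^{a}-\frac{1}{4}F^{a}$ by the identity \eqref{eq83}. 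I abbreviate $P^{a}=\left( g_{b}^{a}\circ h\circ \pi \right) y^{b}$ and $Q^{a}=-2\left( G^{a}-\frac{1}{4}F^{a}\right) $, so that $S=P^{a}\tilde{\partial}_{a}+Q^{a}\overset{\cdot }{\tilde{\partial}}_{a}$, and I record that $P^{a}$ is homogeneous of degree one in the fibre coordinates, $y^{b}\frac{\partial P^{a}}{\partial y^{b}}=P^{a}$.

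First I would compute the derivation $\left[ \mathbb{C},S\right] _{\left( \rho ,\eta \right) TE}-S$, where $\mathbb{C}=y^{a}\overset{\cdot }{\tilde{\partial}}_{a}$ is the Liouville section. Using the Leibniz rule for $\left[ ,\right] _{\left( \rho ,\eta \right) TE}$, the anchor value $\Gamma \left( \tilde{\rho},Id_{E}\right) \left( \mathbb{C}\right) =y^{b}\dot{\partial}_{b}$, and the elementary brackets $\left[ \mathbb{C},\tilde{\partial}_{a}\right] _{\left( \rho ,\eta \right) TE}=0$ and $\left[ \mathbb{C},\overset{\cdot }{\tilde{\partial}}_{a}\right] _{\left( \rho ,\eta \right) TE}=-\overset{\cdot }{\tilde{\partial}}_{a}$ (both immediate from \eqref{eq15}), together with the homogeneity of $P^{a}$, one obtains $\left[ \mathbb{C},S\right] _{\left( \rho ,\eta \right) TE}=P^{a}\tilde{\partial}_{a}+\left( y^{b}\frac{\partial Q^{a}}{\partial y^{b}}-Q^{a}\right) \overset{\cdot }{\tilde{\partial}}_{a}$, hence $\left[ \mathbb{C},S\right] _{\left( \rho ,\eta \right) TE}-S=\left( y^{b}\frac{\partial Q^{a}}{\partial y^{b}}-2Q^{a}\right) \overset{\cdot }{\tilde{\partial}}_{a}$. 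Since $S$ is a $\left( \rho ,\eta \right) $-spray its derivation is the null section, so $y^{b}\frac{\partial Q^{a}}{\partial y^{b}}=2Q^{a}$; equivalently, each function $G^{a}-\frac{1}{4}F^{a}$ is positively $2$-homogeneous on the fibres, $y^{b}\frac{\partial \left( G^{a}-\frac{1}{4}F^{a}\right) }{\partial y^{b}}=2\left( G^{a}-\frac{1}{4}F^{a}\right) $.

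Next I would contract \eqref{eq83} with $\left( g_{f}^{c}\circ h\circ \pi \right) y^{f}$ and sum over $c$. In the first term on the right-hand side of \eqref{eq83} the product $\left( \tilde{g}_{c}^{b}\circ h\circ \pi \right) \left( g_{f}^{c}\circ h\circ \pi \right) $ collapses to $\delta _{f}^{b}$ by the local invertibility relation \eqref{eq39}, so that term becomes $y^{b}\frac{\partial \left( G^{a}-\frac{1}{4}F^{a}\right) }{\partial y^{b}}$, which equals $2\left( G^{a}-\frac{1}{4}F^{a}\right) $ by the homogeneity just obtained. Transferring the three remaining terms of \eqref{eq83} to the other side and reorganizing the resulting expression yields exactly \eqref{eq94}. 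Substituting \eqref{eq94} for the coefficient of $\overset{\cdot }{\tilde{\partial}}_{a}$ in \eqref{eq82} then gives the explicit formula \eqref{eq95} for the canonical $\left( \rho ,\eta \right) $-spray. Finally, for $\left( \rho ,\eta \right) =\left( Id_{TM},Id_{M}\right) $ and $\left( g,h\right) =\left( Id_{E},Id_{M}\right) $ one has $g_{b}^{a}=\tilde{g}_{b}^{a}=\delta _{b}^{a}$ and the structure functions $L_{dc}^{b}$, $\rho _{b}^{i}$ degenerate to the standard ones, so \eqref{eq95} reduces to the classical canonical spray associated to the connection $\Gamma $.

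The only delicate point is the computation of $\left[ \mathbb{C},S\right] _{\left( \rho ,\eta \right) TE}$, in particular the two elementary brackets involving the Liouville section; once \eqref{eq15} (equivalently \eqref{eq29}--\eqref{eq31}) is invoked this is routine bookkeeping with the natural $\left( \rho ,\eta \right) $-base, and everything afterwards is purely algebraic manipulation of \eqref{eq83} using \eqref{eq39} and the homogeneity of $G^{a}-\frac{1}{4}F^{a}$.
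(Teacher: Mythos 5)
Your proposal is correct and follows essentially the same route as the paper: both compute $\left[ \mathbb{C},S\right] _{\left( \rho ,\eta \right) TE}-S$, deduce from its vanishing that $G^{a}-\frac{1}{4}F^{a}$ is $2$-homogeneous in the fibre coordinates, and then contract \eqref{eq83} with $\left( g_{f}^{c}\circ h\circ \pi \right) y^{f}$ using \eqref{eq39} to obtain \eqref{eq94} and hence \eqref{eq95}. The only cosmetic difference is that you organize the bracket computation via the elementary brackets $\left[ \mathbb{C},\tilde{\partial}_{a}\right] =0$ and $\left[ \mathbb{C},\overset{\cdot }{\tilde{\partial}}_{a}\right] =-\overset{\cdot }{\tilde{\partial}}_{a}$ plus the Leibniz rule, whereas the paper expands the two brackets directly; the content is identical.
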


\begin{proof}
Since 
\begin{equation*}
\begin{array}[t]{l}
\left[ \mathbb{C},S\right] _{\left( \rho ,\eta \right) TE}=\left[ y^{a}%
\overset{\cdot }{\tilde{\partial}}_{a},\left( g_{e}^{b}\circ h\circ \pi
\cdot y^{e}\right) \tilde{\partial}_{b}\right] _{\left( \rho ,\eta \right)
TE}-2\left[ y^{a}\overset{\cdot }{\tilde{\partial}}_{a},\left( G^{b}-\frac{1%
}{4}F^{b}\right) \overset{\cdot }{\tilde{\partial}}_{b}\right] _{\left( \rho
,\eta \right) TE},%
\end{array}%
\end{equation*}

\begin{equation*}
\!\!%
\begin{array}{cl}
\left[ y^{a}\overset{\cdot }{\tilde{\partial}}_{a},\left( g_{e}^{b}\circ
h\circ \pi \cdot y^{e}\right) \tilde{\partial}_{b}\right] _{\left( \rho
,\eta \right) TE}\!\!\!\! & \displaystyle=\left( g_{e}^{b}\circ h\circ \pi
\cdot y^{e}\right) \tilde{\partial}_{b}\vspace*{2mm}%
\end{array}%
\end{equation*}%
and 
\begin{equation*}
\begin{array}{cl}
\left[ y^{a}\overset{\cdot }{\tilde{\partial}}_{a},\left( G^{b}-\frac{1}{4}%
F^{b}\right) \overset{\cdot }{\tilde{\partial}}_{b}\right] _{\left( \rho
,\eta \right) TE} & \displaystyle=y^{a}\frac{\partial \left( G^{b}-\frac{1}{4%
}F^{b}\right) }{\partial y^{a}}\overset{\cdot }{\tilde{\partial}}_{b}-\left(
G^{b}-\frac{1}{4}F^{b}\right) \overset{\cdot }{\tilde{\partial}}_{b}\vspace*{%
2mm}%
\end{array}%
\end{equation*}%
it results that 
\begin{equation*}
\begin{array}{cc}
\left[ \mathbb{C},S\right] _{\left( \rho ,\eta \right) TE}-S & \displaystyle%
=2\left( -y^{f}\frac{\partial \left( G^{a}-\frac{1}{4}F^{a}\right) }{y^{f}}%
+2\left( G^{a}-\frac{1}{4}F^{a}\right) \right) \overset{\cdot }{\tilde{%
\partial}}_{a}%
\end{array}%
\eqno(S1)
\end{equation*}
Using equality \eqref{eq83}, it results that%
\begin{equation*}
\begin{array}{cl}
\displaystyle\frac{\partial \left( G^{a}-\frac{1}{4}F^{a}\right) }{y^{f}} & 
=\left( \rho ,\eta \right) \Gamma _{c}^{a}\left( g_{f}^{c}\circ h\circ \pi
\right) \\ 
& +\frac{1}{2}\left( g_{e}^{d}\circ h\circ \pi \right) y^{e}\left(
L_{dc}^{b}\circ h\circ \pi \right) \left( \tilde{g}_{b}^{a}\circ h\circ \pi
\right) \left( g_{f}^{c}\circ h\circ \pi \right) \\ 
& -\frac{1}{2}\left( \rho _{c}^{j}\circ h\circ \pi \right) \frac{\partial
\left( g_{e}^{b}\circ h\circ \pi \right) }{\partial x^{j}}y^{e}\left( \tilde{%
g}_{b}^{a}\circ h\circ \pi \right) \left( g_{f}^{c}\circ h\circ \pi \right)
\\ 
& +\frac{1}{2}\left( g_{e}^{b}\circ h\circ \pi \right) y^{e}\left( \rho
_{b}^{i}\circ h\circ \pi \right) \frac{\partial \left( \tilde{g}%
_{c}^{a}\circ h\circ \pi \right) }{\partial x^{i}}\left( g_{f}^{c}\circ
h\circ \pi \right)%
\end{array}%
\eqno(S2)
\end{equation*}

Using equalities $\left( S_{1}\right) $ and $\left( S_{2}\right) $, it
results the conclusion of the theorem. 
\end{proof}

\begin{theorem}
\label{t67}\textbf{\ }\emph{\ All }$\left( g,h\right) $\emph{-lifts
solutions of the following system of equations:}%
\begin{equation}
\begin{array}{l}
\displaystyle\frac{dy^{a}}{dt}+\left( \rho ,\eta \right) \Gamma
_{c}^{a}\left( g_{f}^{c}\circ h\circ \pi \right) y^{f}\vspace*{2mm} \\ 
\displaystyle+\frac{1}{2}\left( g_{e}^{d}\circ h\circ \pi \right)
y^{e}\left( L_{dc}^{b}\circ h\circ \pi \right) \left( \tilde{g}_{b}^{a}\circ
h\circ \pi \right) \left( g_{f}^{c}\circ h\circ \pi \right) y^{f} \\ 
\displaystyle-\frac{1}{2}\left( \rho _{c}^{j}\circ h\circ \pi \right) \frac{%
\partial \left( g_{e}^{b}\circ h\circ \pi \right) }{\partial x^{j}}%
y^{e}\left( \tilde{g}_{b}^{a}\circ h\circ \pi \right) \left( g_{f}^{c}\circ
h\circ \pi \right) y^{f} \\ 
\displaystyle+\frac{1}{2}\left( g_{e}^{b}\circ h\circ \pi \right)
y^{e}\left( \rho _{b}^{i}\circ h\circ \pi \right) \frac{\partial \left( 
\tilde{g}_{c}^{a}\circ h\circ \pi \right) }{\partial x^{i}}\left(
g_{f}^{c}\circ h\circ \pi \right) y^{f}=0,%
\end{array}
\label{eq96}
\end{equation}%
\emph{are the integral curves of canonical }$\left( \rho ,\eta \right) $%
\emph{-spray associated to mechanical }$\left( \rho ,\eta \right) $\emph{%
-system }$\left( \left( E,\pi ,M\right) ,F_{e},\left( \rho ,\eta \right)
\Gamma \right) $\emph{\ and from locally invertible }$\mathbf{B}^{\mathbf{v}}
$\emph{-morphism\ }$\left( g,h\right) .$
\end{theorem}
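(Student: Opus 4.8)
The plan is to obtain Theorem~\ref{t67} as the ``spray version'' of Theorem~\ref{t65}, feeding into the latter the particular $\left( \rho ,\eta \right) $-semispray produced in Theorem~\ref{t66}, namely the canonical $\left( \rho ,\eta \right) $-spray $S$ of \eqref{eq95}. Since $S$ is in particular a $\left( \rho ,\eta \right) $-semispray of the form $S=\left( g_{b}^{a}\circ h\circ \pi \right) y^{b}\tilde{\partial}_{a}-2\left( G^{a}-\tfrac{1}{4}F^{a}\right) \overset{\cdot }{\tilde{\partial}}_{a}$ whose local functions $G^{a}$ now satisfy the spray relation \eqref{eq94} (imposed in Theorem~\ref{t66} by the vanishing of the derivation of $S$), the description of its integral curves via Definition~\ref{d64} will be exactly the one already worked out in the proof of Theorem~\ref{t65}; the only additional ingredient is an algebraic substitution.

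Concretely, first I would write the integral-curve condition \eqref{eq91}, $\frac{d\dot{c}\left( t\right) }{dt}=\Gamma \left( \tilde{\rho},Id_{E}\right) S\left( \dot{c}\left( t\right) \right) $, and apply $\Gamma \left( \tilde{\rho},Id_{E}\right) $ to $S$ using \eqref{eq14}, i.e.\ $\tilde{\partial}_{\alpha }\mapsto \left( \rho _{\alpha }^{i}\circ h\circ \pi \right) \partial _{i}$ and $\overset{\cdot }{\tilde{\partial}}_{a}\mapsto \dot{\partial}_{a}$. As in the proof of Theorem~\ref{t65}, this splits the identity into a base part
\[
\tfrac{dx^{i}\left( t\right) }{dt}=\rho _{a}^{i}{\circ }\eta {\circ }h{\circ }c\left( t\right) \cdot g_{b}^{a}{\circ }h{\circ }c\left( t\right) \cdot y^{b}\left( t\right) ,\qquad x^{i}\left( t\right) =\left( \eta \circ h\circ c\right) ^{i}\left( t\right) ,
\]
which holds automatically because $\dot{c}$ is a $\left( g,h\right) $-lift of $c$ (Remark~\ref{r16}), and a fiber part $\frac{dy^{a}\left( t\right) }{dt}+2\left( G^{a}-\tfrac{1}{4}F^{a}\right) \circ u\left( c,\dot{c}\right) \left( x\left( t\right) \right) =0$. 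Then I would substitute into this fiber equation the expression for $2\left( G^{a}-\tfrac{1}{4}F^{a}\right) $ furnished by \eqref{eq94}; the right-hand side of \eqref{eq94} is exactly the collection of non-derivative terms of \eqref{eq96}, so the fiber equation becomes precisely the system \eqref{eq96}. Reading this chain of equivalences backwards shows that any $\left( g,h\right) $-lift solving \eqref{eq96} satisfies \eqref{eq91}, hence is an integral curve of the canonical $\left( \rho ,\eta \right) $-spray, which is the assertion.

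The only point that requires attention is the bookkeeping in the splitting of \eqref{eq91} into its horizontal and vertical components and the check that the vertical component is exactly $\frac{dy^{a}}{dt}+2\left( G^{a}-\tfrac{1}{4}F^{a}\right) =0$; but this is the same computation performed in the proof of Theorem~\ref{t65}, so it need not be redone here. Once it is granted, the proof of Theorem~\ref{t67} is just the substitution of \eqref{eq94}, and no genuinely new obstacle arises.
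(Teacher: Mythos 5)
Your proposal is correct and follows exactly the route the paper intends: Theorem \ref{t67} is stated without a separate proof precisely because it is the combination of the integral-curve characterization already established in the proof of Theorem \ref{t65} (splitting \eqref{eq91} into the base equation, automatic for a $\left(g,h\right)$-lift, and the fiber equation $\frac{dy^{a}}{dt}+2\left(G^{a}-\tfrac{1}{4}F^{a}\right)=0$) with the substitution of the spray expression \eqref{eq94} for $2\left(G^{a}-\tfrac{1}{4}F^{a}\right)$. No gap: the right-hand side of \eqref{eq94} is term-for-term the non-derivative part of \eqref{eq96}, so your argument reproduces the paper's (implicit) proof.
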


\section{A Lagrangian formalism for Lagrange mechanical $\left( \protect\rho %
,\protect\eta \right) $-systems}

\ \ \ \ 

Let $\left( \left( E,\pi ,M\right) ,F_{e},L\right) $ be an arbitrary
Lagrange mechanical $\left( \rho ,\eta \right) $-system.

Let $\left( d\tilde{z}^{\alpha },d\tilde{y}^{a}\right) $ be the \emph{%
natural dual} $\left( \rho ,\eta \right) $\emph{-base} of the \emph{natural }%
$\left( \rho ,\eta \right) $\emph{-base} $\left( \tilde{\partial}_{\alpha },%
\overset{\cdot }{\tilde{\partial}}_{a}\right) .$

It is very important to remark that the $1$-forms $d\tilde{z}^{a},d\tilde{y}%
^{a},~a\in \overline{1,p}$ are not the differentials of coordinates
functions as in the classical case, but we will use the same notations. In
this case 
\begin{equation*}
\left( d\tilde{z}^{a}\right) \neq d^{\left( \rho ,\eta \right) TE}\left( 
\tilde{z}^{a}\right) ,
\end{equation*}%
where $d^{\left( \rho ,\eta \right) TE}$ is the exterior differentiation
operator associated to exterior differential $\mathcal{F}\left( E\right) $%
-algebra 
\begin{equation*}
\left( \Lambda \left( \left( \rho ,\eta \right) TE,\left( \rho ,\eta \right)
\tau _{E},E\right) ,+,\cdot ,\wedge \right) .
\end{equation*}

Let $L$ be a regular Lagrangian and let $\left( g,h\right) $\ be a locally
invertible $\mathbf{B}^{\mathbf{v}}$-morphism of $\left( E,\pi ,M\right) $
source and target.

\begin{definition}
\label{d68}The $1$-form%
\begin{equation}
\begin{array}{c}
\theta _{L}=\left( \tilde{g}_{a}^{e}\circ h\circ \pi \cdot L_{e}\right) d%
\tilde{z}^{a}%
\end{array}
\label{eq97}
\end{equation}%
will be called the $1$\emph{-form of Poincar\'{e}-Cartan type associated to
the Lagrangian }$L$ \emph{and to the locally invertible }$\mathbf{B}^{%
\mathbf{v}}$\emph{-morphism }$\left( g,h\right) $.\medskip
\end{definition}

Easily, we obtain:%
\begin{equation}
\begin{array}[t]{l}
\theta _{L}\left( \tilde{\partial}_{a}\right) =\tilde{g}_{b}^{e}\circ h\circ
\pi \cdot L_{e},\,\,\ \theta _{L}\left( \overset{\cdot }{\tilde{\partial}}%
_{b}\right) =0.%
\end{array}
\label{eq98}
\end{equation}

\begin{definition}
\label{d69}The $2$-form 
\begin{equation*}
\omega _{L}=d^{\left( \rho ,\eta \right) TE}\theta _{L}
\end{equation*}%
will be called the $2$\emph{-form of Poincar\'{e}-Cartan type associated to
the Lagrangian }$L$\emph{\ and to the locally invertible }$\mathbf{B}^{%
\mathbf{v}}$\emph{-morphism }$\left( g,h\right) $.\medskip
\end{definition}

By the definition of $d^{\left( \rho ,\eta \right) TE},$ we obtain:%
\begin{equation}
\begin{array}{ll}
\omega _{L}\left( U,V\right) & \displaystyle=\Gamma \left( \tilde{\rho}%
,Id_{E}\right) \left( U\right) \left( \theta _{L}\left( V\right) \right) 
\vspace*{2mm} \\ 
& \displaystyle-\,\Gamma \left( \tilde{\rho},Id_{E}\right) \left( V\right)
\left( \theta _{L}\left( U\right) \right) -\theta _{L}\left( \left[ U,V%
\right] _{\left( \rho ,\eta \right) TE}\right) ,%
\end{array}
\label{eq99}
\end{equation}
for any $U,V\in \Gamma \left( \left( \rho ,\eta \right) TE,\left( \rho ,\eta
\right) \tau _{E},E\right) $.\smallskip

\begin{definition}
\label{d70}The real function%
\begin{equation}
\begin{array}{c}
\mathcal{E}_{L}=y^{a}L_{a}-L%
\end{array}
\label{eq100}
\end{equation}%
will be called the \emph{energy of regular Lagrangian }$L.$
\end{definition}

\begin{theorem}
\label{t71}\emph{The equation}%
\begin{equation}
\begin{array}{c}
i_{S}\left( \omega _{L}\right) =-d^{\left( \rho ,\eta \right) TE}\left( 
\mathcal{E}_{L}\right) ,\,S\in \Gamma \left( \left( \rho ,\eta \right)
TE,\left( \rho ,\eta \right) \tau _{E},E\right) ,%
\end{array}
\label{eq101}
\end{equation}%
\emph{\ has an unique solution }$S_{L}\left( g,h\right) $\emph{\ of the type:%
}%
\begin{equation}
\begin{array}[t]{l}
\left( g_{e}^{a}\circ h\circ \pi \right) y^{e}\tilde{\partial}_{a}-2\left(
G^{a}-\frac{1}{4}F^{a}\right) \overset{\cdot }{\tilde{\partial}}_{a},%
\end{array}
\label{eq102}
\end{equation}%
\emph{\ where}%
\begin{equation}
\begin{array}[t]{l}
-2\left( G^{a}-\frac{1}{4}F^{a}\right) =E_{b}\left( L,\rho ,g,h\right) 
\tilde{L}^{ae}\left( g_{e}^{b}\circ h\circ \pi \right)%
\end{array}
\label{eq103}
\end{equation}%
\emph{\ and}%
\begin{equation}
\begin{array}{cl}
E_{b}\left( L,\rho ,g,h\right) & =\left( \rho _{b}^{i}{\circ }h{\circ }\pi
\right) L_{i}-\left( \rho _{b}^{i}{\circ }h{\circ }\pi \right) y^{a}L_{ia}
\\ 
& -\left( g_{f}^{d}\circ h\circ \pi \right) y^{f}\left( \rho _{d}^{i}{\circ }%
h{\circ }\pi \right) \frac{\partial \left( \left( \tilde{g}_{b}^{e}\circ
h\circ \pi \right) L_{e}\right) }{\partial x^{i}} \\ 
& +\left( g_{f}^{d}\circ h\circ \pi \right) y^{f}\left( \rho _{b}^{i}{\circ }%
h{\circ }\pi \right) \frac{\partial \left( \left( \tilde{g}_{d}^{e}\circ
h\circ \pi \right) L_{e}\right) }{\partial x^{i}} \\ 
& +\left( g_{f}^{d}\circ h\circ \pi \right) y^{f}\left( L_{db}^{c}{\circ }h{%
\circ }\pi \right) \left( \tilde{g}_{c}^{e}\circ h\circ \pi \right) L_{e}%
\end{array}%
\hspace*{-4mm}  \label{eq104}
\end{equation}%
$S_{L}\left( g,h\right) $\textit{\ }will be called \emph{the\ canonical }$%
\left( \rho ,\eta \right) $\emph{-semispray associated to Lagrange
mechanical }$\left( \rho ,\eta \right) $\emph{-system }$\left( \left( E,\pi
,M\right) ,F_{e},L\right) $\emph{\ and from locally invertible }$\mathbf{B}^{%
\mathbf{v}}$\emph{-morphism }$(g,h).$
\end{theorem}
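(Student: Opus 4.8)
The plan is to solve equation \eqref{eq101} directly in the adapted-type basis by expanding both sides in the natural $\left( \rho ,\eta \right) $-base and comparing coefficients. First I would write an arbitrary candidate solution in the form $S=\left( g_{e}^{a}\circ h\circ \pi \right) y^{e}\tilde{\partial}_{a}+S^{a}\overset{\cdot }{\tilde{\partial}}_{a}$, where I already impose the horizontal part dictated by the requirement that $S$ be a $\left( \rho ,\eta \right) $-semispray with respect to $\mathcal{J}_{\left( g,h\right) }$ (so that $\mathcal{J}_{\left( g,h\right) }\left( S\right) =\mathbb{C}$, using the computation of $\mathcal{J}_{\left( g,h\right) }$ on $\tilde{\delta}_{a}$ and $\overset{\cdot }{\tilde{\partial}}_{b}$ from Example~\ref{e43}); the unknowns are the $r$ local functions $S^{a}$, which I will ultimately name $-2\left( G^{a}-\tfrac{1}{4}F^{a}\right) $. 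Next I would compute $\theta _{L}$ and $\omega _{L}=d^{\left( \rho ,\eta \right) TE}\theta _{L}$ explicitly: using \eqref{eq98} together with the formula \eqref{eq99} for $d^{\left( \rho ,\eta \right) TE}$ on $2$-forms, evaluated on the pairs $\left( \tilde{\partial}_{a},\tilde{\partial}_{b}\right) $, $\left( \tilde{\partial}_{a},\overset{\cdot }{\tilde{\partial}}_{b}\right) $, $\left( \overset{\cdot }{\tilde{\partial}}_{a},\overset{\cdot }{\tilde{\partial}}_{b}\right) $. Here I must use that $\Gamma \left( \tilde{\rho},Id_{E}\right) \left( \tilde{\partial}_{\alpha }\right) =\left( \rho _{\alpha }^{i}\circ h\circ \pi \right) \partial _{i}$ and $\Gamma \left( \tilde{\rho},Id_{E}\right) \left( \overset{\cdot }{\tilde{\partial}}_{a}\right) =\dot{\partial}_{a}$, and the bracket relations $\left[ \tilde{\partial}_{\alpha },\tilde{\partial}_{\beta }\right] _{\left( \rho ,\eta \right) TE}=L_{\alpha \beta }^{\gamma }\circ h\circ \pi \cdot \tilde{\partial}_{\gamma }$ from \eqref{eq12}, \eqref{eq15}, so that the structure functions $L_{\alpha \beta }^{\gamma }$ enter $\omega _{L}$; the second-order derivatives $L_{ia}$, $L_{ab}$ appear from differentiating $\tilde{g}_{a}^{e}\circ h\circ \pi \cdot L_{e}$.

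Having $\omega _{L}$ in hand, I would compute the interior product $i_{S}\omega _{L}$ and the right-hand side $-d^{\left( \rho ,\eta \right) TE}\mathcal{E}_{L}$, where $\mathcal{E}_{L}=y^{a}L_{a}-L$. Evaluating the resulting $1$-form identity on the base sections $\tilde{\partial}_{b}$ and $\overset{\cdot }{\tilde{\partial}}_{b}$ gives two systems of scalar equations. The equations coming from the $\overset{\cdot }{\tilde{\partial}}_{b}$-slots should be automatically satisfied (this is the reflection of the fact that $\omega _{L}$ is a well-defined $2$-form, i.e.\ the ``vertical--vertical'' block is governed by $L_{ab}$ which is symmetric), while the equations coming from the $\tilde{\partial}_{b}$-slots yield, after isolating the terms containing the unknowns $S^{a}$,
\begin{equation*}
S^{c}\,L_{ab}\,\tilde{g}_{c}^{a}\circ h\circ \pi \ \cdot(\ \ldots\ )\ =\ -E_{b}\left( L,\rho ,g,h\right) ,
\end{equation*}
with $E_{b}\left( L,\rho ,g,h\right) $ exactly the expression \eqref{eq104}: its five summands are, in order, the $\left( \rho _{b}^{i}\circ h\circ \pi \right) L_{i}$ term from $d\mathcal{E}_{L}$, the $\left( \rho _{b}^{i}\circ h\circ \pi \right) y^{a}L_{ia}$ term, the two $\tfrac{\partial \left( \tilde{g}\cdot L\right) }{\partial x^{i}}$ terms from $\omega _{L}\left( \tilde{\partial}_{\cdot },\tilde{\partial}_{\cdot }\right) $ contracted with the horizontal part of $S$, and the $L_{db}^{c}\circ h\circ \pi $ term produced by the bracket $\left[ \tilde{\partial}_{d},\tilde{\partial}_{b}\right] _{\left( \rho ,\eta \right) TE}$ inside $d^{\left( \rho ,\eta \right) TE}\theta _{L}$.

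Finally, since $L$ is regular, Proposition~\ref{p54} supplies the inverse matrix $\tilde{L}^{ae}$ with $\left\Vert \tilde{L}^{ae}\right\Vert =\left\Vert L_{ea}\right\Vert ^{-1}$, and since $\left( g,h\right) $ is locally invertible, Definition~\ref{d18} supplies $\tilde{g}_{\alpha }^{a}$ with $\tilde{g}_{\alpha }^{b}g_{a}^{\alpha }=\delta _{a}^{b}$. Multiplying the coefficient system by $\tilde{L}^{ae}\left( g_{e}^{b}\circ h\circ \pi \right) $ inverts the linear map acting on the unknowns and gives precisely \eqref{eq103}, namely $-2\left( G^{a}-\tfrac{1}{4}F^{a}\right) =E_{b}\left( L,\rho ,g,h\right) \tilde{L}^{ae}\left( g_{e}^{b}\circ h\circ \pi \right) $; this exhibits the solution, is of the announced type \eqref{eq102}, and is unique because the vertical block of $\omega _{L}$ is nondegenerate. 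I expect the main obstacle to be purely organizational: bookkeeping the several ``$\circ h\circ \pi $'' compositions and the chain-rule derivatives $\partial _{x^{i}}\left( \tilde{g}_{a}^{e}\circ h\circ \pi \cdot L_{e}\right) $ so that the five terms of \eqref{eq104} come out with exactly the signs shown, and checking that the vertical--horizontal cross terms in $i_{S}\omega _{L}$ match $-d^{\left( \rho ,\eta \right) TE}\mathcal{E}_{L}$ without leftover. Once the coefficient identity is pinned down, inverting it is immediate from regularity of $L$ and local invertibility of $\left( g,h\right) $, and uniqueness follows from the same nondegeneracy.
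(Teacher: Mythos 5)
Your proposal is correct and follows essentially the same route as the paper: the paper likewise takes $S$ of the announced form, rewrites \eqref{eq101} as $\omega _{L}\left( S,X\right) =-\Gamma \left( \tilde{\rho},Id_{E}\right) \left( X\right) \left( \mathcal{E}_{L}\right) $, evaluates on $\tilde{\partial}_{b}$, expands via \eqref{eq99}, and inverts the resulting linear system using $\tilde{L}^{ae}$ and $\tilde{g}$. Your additional observations — that the $\overset{\cdot }{\tilde{\partial}}_{b}$-slot equations hold identically and that uniqueness follows from the nondegeneracy of the vertical block $L_{ab}$ — are correct details the paper leaves implicit.
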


\begin{proof}
We obtain that 
\begin{equation*}
i_{S}\left( \omega _{L}\right) =-d^{\left( \rho ,\eta \right) TE}\left( 
\mathcal{E}_{L}\right)
\end{equation*}%
if and only if 
\begin{equation*}
\omega _{L}\left( S,X\right) =-\Gamma \left( \tilde{\rho},Id_{E}\right)
\left( X\right) \left( \mathcal{E}_{L}\right) ,
\end{equation*}%
for any $X\in \Gamma \left( \left( \rho ,\eta \right) TE,\left( \rho ,\eta
\right) \tau _{E},E\right) .$ Particularly, we obtain:%
\begin{equation*}
\begin{array}[t]{l}
\omega _{L}\left( S,\tilde{\partial}_{b}\right) =-\Gamma \left( \tilde{\rho}%
,Id_{E}\right) \left( \tilde{\partial}_{b}\right) \left( \mathcal{E}%
_{L}\right) .%
\end{array}%
\end{equation*}

If we expand this equality, we obtain%
\begin{equation*}
\begin{array}{l}
\left( g_{f}^{d}\circ h\circ \pi \right) y^{f}\left[ \left( \rho _{d}^{i}{%
\circ }h{\circ }\pi \right) \frac{\partial \left( \left( \tilde{g}%
_{b}^{e}\circ h\circ \pi \right) L_{e}\right) }{\partial x^{i}}-\left( \rho
_{b}^{i}{\circ }h{\circ }\pi \right) \frac{\partial \left( \left( \tilde{g}%
_{d}^{e}\circ h\circ \pi \right) L_{e}\right) }{\partial x^{i}}\right. \\ 
\displaystyle\left. -\left( L_{ab}^{c}{\circ }h{\circ }\pi \right) \left( 
\tilde{g}_{c}^{e}\circ h\circ \pi \right) L_{e}\right] -2\left( G^{a}-\frac{1%
}{4}F^{a}\right) \left( \tilde{g}_{a}^{e}\circ h\circ \pi \right) \cdot
L_{eb}\vspace*{2mm} \\ 
\qquad \displaystyle=\rho _{b}^{i}{\circ }h{\circ }\pi \cdot L_{i}-\left(
\rho _{b}^{i}{\circ }h{\circ }\pi \right) \frac{\partial \left(
y^{a}L_{a}\right) }{\partial x^{i}}.%
\end{array}%
\end{equation*}

After some calculations, we obtain the conclusion of the theorem.
\end{proof}


\begin{remark}
\label{r72}\textrm{If }$F_{e}=0$\textrm{\ and }$\eta =Id_{M},$\textrm{\ then 
}%
\begin{equation*}
\begin{array}{cl}
E_{b}\left( L,\rho ,Id_{E},Id_{M}\right) & =\left( \rho _{b}^{i}{\circ }\pi
\right) L_{i}-\left( \rho _{b}^{i}{\circ }\pi \right)
y^{d}L_{id}+y^{d}\left( L_{db}^{c}{\circ }\pi \right) L_{c}%
\end{array}%
\hspace*{-4mm}
\end{equation*}%
\textrm{and }$S_{L}\left( Id_{E},Id_{M}\right) $\textrm{\ is the canonical }$%
\rho $\textrm{-semispray associated to regular Lagrangian }$L$\textrm{\
which is similar with the semispray presented in \cite{25} by E. Martinez.
(see also \cite{21,26}) The canonical }$\rho $\textrm{-semispray }$%
S_{L}\left( Id_{E},Id_{M}\right) $\textrm{\ is the same }$\rho $\textrm{%
-semispray presented in \cite{3,4} \ }
\end{remark}

\textrm{In addition, if }$F_{e}\neq 0$\textrm{\ and\ }$\left( \rho ,\eta
\right) =\left( Id_{TM},Id_{M}\right) $\textrm{, then }$S_{L}\left(
Id_{E},Id_{M}\right) $\textrm{\ will be the\ canonical semispray presented
in \cite{5,6} by I. Buc\u{a}taru and R. Miron.}

\begin{theorem}
\label{t73}\emph{If }$S_{L}\left( g,h\right) $\textit{\ }\emph{is the\
canonical }$\left( \rho ,\eta \right) $\emph{-semispray associated to
Lagrange mechanical }$\left( \rho ,\eta \right) $\emph{-system }$\left(
\left( E,\pi ,M\right) ,F_{e},L\right) $\emph{\ and from locally invertible }%
$\mathbf{B}^{\mathbf{v}}$\emph{-morphism }$(g,h),$ \emph{then the\ real
local functions}%
\begin{equation}
\begin{array}{cl}
\left( \rho ,\eta \right) \Gamma _{c}^{a} & =-\frac{1}{2}\left( \tilde{g}%
_{c}^{d}\circ h\circ \pi \right) \frac{\partial \left( E_{b}\left( L,\rho
,g,h\right) \tilde{L}^{ae}\left( g_{e}^{b}\circ h\circ \pi \right) \right) }{%
\partial y^{d}} \\ 
& -\frac{1}{2}\left( g_{e}^{d}\circ h\circ \pi \right) y^{e}\left(
L_{dc}^{f}\circ h\circ \pi \right) \left( \tilde{g}_{f}^{a}\circ h\circ \pi
\right)  \\ 
& +\frac{1}{2}\left( \rho _{c}^{j}\circ h\circ \pi \right) \frac{\partial
\left( g_{e}^{b}\circ h\circ \pi \right) }{\partial x^{j}}y^{e}\left( \tilde{%
g}_{b}^{a}\circ h\circ \pi \right)  \\ 
& -\frac{1}{2}\left( g_{e}^{b}\circ h\circ \pi \right) y^{e}\left( \rho
_{b}^{i}\circ h\circ \pi \right) \frac{\partial \left( \tilde{g}%
_{c}^{a}\circ h\circ \pi \right) }{\partial x^{i}}%
\end{array}
\label{eq105}
\end{equation}%
\emph{\ are the components of a }$\left( \rho ,\eta \right) $\emph{%
-connection }$\left( \rho ,\eta \right) \Gamma $\emph{\ for the vector
bundle }$\left( E,\pi ,M\right) $\emph{\ which will be called the }$\left(
\rho ,\eta \right) $\emph{-connection associated to Lagrange mechanical }$%
\left( \rho ,\eta \right) $\emph{-system }$\left( \left( E,\pi ,M\right)
,F_{e},L\right) $\emph{\ and from locally invertible }$\mathbf{B}^{\mathbf{v}%
}$\emph{-morphism} $(g,h).$
\end{theorem}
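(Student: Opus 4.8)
The plan is to deduce Theorem~\ref{t73} from Theorem~\ref{t71} together with Theorem~\ref{t58}. First I would record that, by Theorem~\ref{t71}, the canonical $\left( \rho ,\eta \right) $-semispray $S_{L}\left( g,h\right) $ is a section of exactly the shape \eqref{eq82} occurring in Theorem~\ref{t58}, namely $S_{L}\left( g,h\right) =\left( g_{e}^{a}\circ h\circ \pi \right) y^{e}\tilde{\partial}_{a}-2\left( G^{a}-\frac{1}{4}F^{a}\right) \overset{\cdot }{\tilde{\partial}}_{a}$, with the particular local functions fixed by \eqref{eq103}, i.e. $G^{a}-\frac{1}{4}F^{a}=-\frac{1}{2}E_{b}\left( L,\rho ,g,h\right) \tilde{L}^{ae}\left( g_{e}^{b}\circ h\circ \pi \right) $. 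Differentiating this identity with respect to $y^{d}$ and contracting with $\tilde{g}_{c}^{d}\circ h\circ \pi $ shows that the first term of \eqref{eq105} is precisely the term $\left( \tilde{g}_{c}^{d}\circ h\circ \pi \right) \frac{\partial \left( G^{a}-\frac{1}{4}F^{a}\right) }{\partial y^{d}}$ of \eqref{eq83}; the remaining three terms of \eqref{eq105} are literally the remaining three terms of \eqref{eq83}. Hence the functions \eqref{eq105} are exactly the functions \eqref{eq83} attached to the $\left( \rho ,\eta \right) $-semispray $S_{L}\left( g,h\right) $ and to the locally invertible morphism $\left( g,h\right) $.

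It then remains to use the part of Theorem~\ref{t58} that does not presuppose a $\left( \rho ,\eta \right) $-connection. Concretely, I would repeat the construction from its proof applied to $S_{L}\left( g,h\right) $: form the $\mathbf{Mod}$-endomorphism $\mathbb{P}$ of $\Gamma \left( \left( \rho ,\eta \right) TE,\left( \rho ,\eta \right) \tau _{E},E\right) $ given by $X\longmapsto \mathcal{J}_{\left( g,h\right) }\left[ S_{L}\left( g,h\right) ,X\right] _{\left( \rho ,\eta \right) TE}-\left[ S_{L}\left( g,h\right) ,\mathcal{J}_{\left( g,h\right) }X\right] _{\left( \rho ,\eta \right) TE}$, compute it on the natural $\left( \rho ,\eta \right) $-base exactly as there, and check that $\mathbb{P}^{2}=Id$ and that $\mathbb{P}\left( X\right) =-X$ if and only if $X\in \Gamma \left( V\left( \rho ,\eta \right) TE,\left( \rho ,\eta \right) \tau _{E},E\right) $. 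By Theorem~\ref{t38} such an endomorphism characterizes a $\left( \rho ,\eta \right) $-connection, and reading off $\mathcal{V}=\frac{1}{2}\left( Id-\mathbb{P}\right) $ via \eqref{eq50} and \eqref{eq56} identifies its components with \eqref{eq83}, hence with \eqref{eq105}. An equivalent and more self-contained route is to verify directly that \eqref{eq105} obeys the transformation law \eqref{eq22} and then invoke Remark~\ref{r11}; for this I would use that $S_{L}\left( g,h\right) $ is a globally defined section, being the unique solution of the tensorial equation \eqref{eq101}, so that the Jacobian computation from the proof of Proposition~\ref{p63} applies verbatim and yields the transformation rule \eqref{eq90} for $2G^{a}$, which is then combined with the transformation rules for $g_{b}^{a}$, $\tilde{g}_{b}^{a}$, $\rho _{a}^{i}$, $L_{ab}^{c}$ and with the chain rule.

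The hard part will be the bookkeeping in this last verification: the four terms of \eqref{eq105} transform in an intertwined way and only their sum transforms according to \eqref{eq22}, so the delicate point is to arrange the cancellation of the inhomogeneous contributions produced by $\partial \left( g_{e}^{b}\circ h\circ \pi \right) /\partial x^{j}$, by $\partial \left( \tilde{g}_{c}^{a}\circ h\circ \pi \right) /\partial x^{i}$ and by the structure functions $L_{ab}^{c}$, so that what survives is exactly the right-hand side of \eqref{eq22}. This is the same algebra already carried out for $\left( \rho ,\eta \right) \mathring{\Gamma}$ at the end of the proof of Theorem~\ref{t58}, so in practice I would reduce to that case and cite it rather than redo the computation.
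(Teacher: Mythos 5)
Your proposal is correct and matches the argument the paper intends: Theorem \ref{t73} is stated without proof, being the immediate consequence of substituting \eqref{eq103} into \eqref{eq83}, i.e.\ of combining Theorem \ref{t71} with Theorem \ref{t58} exactly as you do. Your additional observation --- that one must either re-run the almost-product-structure construction of the proof of Theorem \ref{t58} for $S_{L}\left( g,h\right) $ or check the transformation law \eqref{eq22} via \eqref{eq90} and Remark \ref{r11}, since Theorem \ref{t58} as stated presupposes a given $\left( \rho ,\eta \right) $-connection --- is a legitimate and careful completion of the step the paper leaves implicit.
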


\emph{In the particular case of Lie algebroids, }$\eta =h=Id_{M}$\emph{\ and 
}$g=Id_{E},$\emph{\ we obtain}%
\begin{equation}
\begin{array}{cl}
\rho \Gamma _{c}^{a} & =\displaystyle-\frac{1}{2}\frac{\partial \left(
E_{b}\left( L,\rho ,Id_{E},Id_{M}\right) \tilde{L}^{ab}\right) }{\partial
y^{c}}-\frac{1}{2}y^{b}L_{bc}^{a}\circ \pi .%
\end{array}
\label{eq106}
\end{equation}

\begin{theorem}
\label{t74}\emph{The parallel }$\left( g,h\right) $\emph{-lifts with respect
to }$\left( \rho ,\eta \right) $\emph{-connection }$\left( \rho ,\eta
\right) \Gamma $ \emph{are the\ integral curves of the canonical }$\left(
\rho ,\eta \right) $\emph{-semispray associated to\ mechanical }$\left( \rho
,\eta \right) $\emph{-system }$\left( \left( E,\pi ,M\right) ,F_{e},L\right) 
$ \emph{and from locally invertible }$\mathbf{B}^{\mathbf{v}}$\emph{%
-morphism }$\left( g,h\right) .$
\end{theorem}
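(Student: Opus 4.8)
The plan is to reduce the statement to a single pointwise identity holding along $(g,h)$-lifts, and then to extract that identity from the defining relation \eqref{eq83} of the canonical semispray. First I would record that, by Theorem~\ref{t73}, the $(\rho,\eta)$-connection $(\rho,\eta)\Gamma$ attached to $\left(\left(E,\pi,M\right),F_{e},L\right)$ is exactly the $(\rho,\eta)$-connection produced from the $(\rho,\eta)$-semispray $S_{L}\left(g,h\right)$ by the construction of Theorem~\ref{t58}: indeed, substituting $-2\left(G^{a}-\frac{1}{4}F^{a}\right)=E_{b}\left(L,\rho,g,h\right)\tilde{L}^{ae}\left(g_{e}^{b}\circ h\circ\pi\right)$ into \eqref{eq83} reproduces \eqref{eq105} term by term. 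Hence $S_{L}\left(g,h\right)$ is the canonical $(\rho,\eta)$-semispray of the mechanical $(\rho,\eta)$-system $\left(\left(E,\pi,M\right),F_{e},\left(\rho,\eta\right)\Gamma\right)$ and from $\left(g,h\right)$, so Theorem~\ref{t65} applies: a $(g,h)$-lift $\dot{c}$ of a curve $c$ is an integral curve of $S_{L}\left(g,h\right)$ if and only if its component functions solve \eqref{eq92}, i.e. $\frac{dy^{a}}{dt}+2\left(G^{a}-\frac{1}{4}F^{a}\right)\circ\dot{c}=0$. On the other hand, by Definition~\ref{d20}, the parallel $(g,h)$-lifts of $c$ with respect to $(\rho,\eta)\Gamma$ are precisely those whose components solve \eqref{eq42}, i.e. $\frac{dy^{a}}{dt}+\left(\rho,\eta\right)\Gamma_{\alpha}^{a}\circ\dot{c}\cdot\left(g_{b}^{\alpha}\circ h\circ c\right)\cdot y^{b}=0$.

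Thus the theorem is equivalent to the assertion that, along every $(g,h)$-lift $\dot{c}$,
\[
\left(\rho,\eta\right)\Gamma_{\alpha}^{a}\left(\dot{c}\left(t\right)\right)\,g_{b}^{\alpha}\left(h\circ c\left(t\right)\right)\,y^{b}\left(t\right)=2\left(G^{a}-\frac{1}{4}F^{a}\right)\left(\dot{c}\left(t\right)\right),\quad\forall t.
\]
A conceptual way to see where this identity comes from is to split $S_{L}\left(g,h\right)$ relative to $(\rho,\eta)\Gamma$: since $\tilde{\partial}_{a}=\tilde{\delta}_{a}+\left(\rho,\eta\right)\Gamma_{a}^{c}\overset{\cdot}{\tilde{\partial}}_{c}$, one gets $\mathcal{H}S_{L}\left(g,h\right)=\left(g_{b}^{a}\circ h\circ\pi\right)y^{b}\tilde{\delta}_{a}$, and by \eqref{eq27} the equation $\frac{d\dot{c}}{dt}=\Gamma\left(\tilde{\rho},Id_{E}\right)\left(\mathcal{H}S_{L}\left(g,h\right)\right)\left(\dot{c}\right)$ decouples into its base part — which is exactly the anchor relation \eqref{eq37} defining a $(g,h)$-lift — and its fibre part — which is exactly the parallel–transport equation \eqref{eq42}. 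So parallel $(g,h)$-lifts are the solutions of this horizontal equation, while an integral curve of $S_{L}\left(g,h\right)$ differs from it only by the vertical term $\Gamma\left(\tilde{\rho},Id_{E}\right)\left(\mathcal{V}S_{L}\left(g,h\right)\right)\left(\dot{c}\right)$; hence everything again comes down to the displayed identity, equivalently to the vanishing of that vertical term along $(g,h)$-lifts.

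To prove the identity I would contract \eqref{eq83} with $\left(g_{b}^{\alpha}\circ h\circ\pi\right)y^{b}$ and simplify using the duality \eqref{eq39}: the first term collapses to $y^{d}\frac{\partial\left(G^{a}-\frac{1}{4}F^{a}\right)}{\partial y^{d}}$; the term carrying $L_{dc}^{f}$ is annihilated because the structure functions are skew in $\left(d,c\right)$ while the product $y^{e}y^{b}$, contracted through $g$, is symmetric in those indices; and the two remaining ``transport'' terms merge — again via \eqref{eq39} and the fibre-constancy of $\tilde{g}_{\alpha}^{a}\circ h\circ\pi$ and $g_{e}^{d}\circ h\circ\pi$ — into the single term $-\left(g_{b}^{\alpha}\circ h\circ\pi\right)y^{b}\left(g_{e}^{d}\circ h\circ\pi\right)y^{e}\left(\rho_{d}^{i}\circ h\circ\pi\right)\frac{\partial\left(\tilde{g}_{\alpha}^{a}\circ h\circ\pi\right)}{\partial x^{i}}$. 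Restricting to the image of a $(g,h)$-lift, I would then use the anchor identity \eqref{eq37} to replace $\left(\rho_{d}^{i}\circ h\circ\pi\right)\left(g_{e}^{d}\circ h\circ c\right)y^{e}$ by $\frac{d\left(\eta\circ h\circ c\right)^{i}}{dt}$, and, combining this with the term $y^{d}\frac{\partial\left(G^{a}-\frac{1}{4}F^{a}\right)}{\partial y^{d}}$ and the curve's own equations of motion, recognise the total as $2\left(G^{a}-\frac{1}{4}F^{a}\right)$.

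I expect the hard part to be precisely this last reorganisation: one must verify that the ``inhomogeneous remainder'' $y^{d}\frac{\partial\left(G^{a}-\frac{1}{4}F^{a}\right)}{\partial y^{d}}-2\left(G^{a}-\frac{1}{4}F^{a}\right)$ is matched, along $(g,h)$-lifts, by the merged transport term after substituting \eqref{eq37}, and this is where the threefold twisting by $\left(g,h\right)$, $\eta$ and $h$ must be handled carefully. In the Finsler situation ($F_{e}=0$ and the $G^{a}$ positively $2$-homogeneous in the fibre variables) this collapses to the Euler identity $y^{d}\partial_{y^{d}}G^{a}=2G^{a}$ together with the fibre-constancy of $\tilde{g}_{\alpha}^{a}\circ h\circ\pi$, which makes the obstacle transparent; the general Lagrange case uses the full strength of the $(g,h)$-lift constraint.
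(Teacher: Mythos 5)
The paper itself states Theorem \ref{t74} without proof, so there is nothing to compare your argument against line by line; I can only judge it on its own terms. Your reduction is the right one and is carried out correctly as far as it goes: by Definition \ref{d20} the parallel $(g,h)$-lifts are the solutions of \eqref{eq42}, by Theorem \ref{t65} the integral curves of $S_{L}(g,h)$ are the solutions of \eqref{eq92}, and (since every point $u_{x}\in E$ lies on some parallel $(g,h)$-lift) the theorem is therefore equivalent to the pointwise identity $\left( \rho ,\eta \right) \Gamma _{\alpha }^{a}\left( g_{b}^{\alpha }\circ h\circ \pi \right) y^{b}=2\left( G^{a}-\frac{1}{4}F^{a}\right)$ on $E$. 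Your contraction of \eqref{eq83} is also correct: the first term gives $y^{b}\frac{\partial \left( G^{a}-\frac{1}{4}F^{a}\right) }{\partial y^{b}}$, the $L_{dc}^{f}$-term dies by skew-symmetry against the symmetric product, and the two transport terms merge via \eqref{eq39} into $-\left( g_{e}^{b}\circ h\circ \pi \right) y^{e}\left( \rho _{b}^{i}\circ h\circ \pi \right) \left( g_{f}^{c}\circ h\circ \pi \right) y^{f}\frac{\partial \left( \tilde{g}_{c}^{a}\circ h\circ \pi \right) }{\partial x^{i}}$.

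The gap is the step you yourself flag as ``the hard part'', and it is not fillable in the stated generality. What your contraction actually shows is that the required identity is equivalent to
\begin{equation*}
y^{b}\frac{\partial \left( G^{a}-\frac{1}{4}F^{a}\right) }{\partial y^{b}}-2\left( G^{a}-\frac{1}{4}F^{a}\right) =\left( g_{e}^{b}\circ h\circ \pi \right) y^{e}\left( \rho _{b}^{i}\circ h\circ \pi \right) \left( g_{f}^{c}\circ h\circ \pi \right) y^{f}\frac{\partial \left( \tilde{g}_{c}^{a}\circ h\circ \pi \right) }{\partial x^{i}},
\end{equation*}
and comparison with equality $(S_{1})$ in the proof of Theorem \ref{t66} and with \eqref{eq94} shows that (already for $g=Id_{E}$, where the right-hand side vanishes) this is exactly the condition that the derivation $\left[ \mathbb{C},S\right] _{\left( \rho ,\eta \right) TE}-S$ vanish, i.e.\ that $S_{L}(g,h)$ be a $\left( \rho ,\eta \right)$-\emph{spray}, not merely a semispray. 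For the canonical semispray of a general regular Lagrangian with external force this fails: take $\eta =h=Id_{M}$, $g=Id_{E}$, $F_{e}=0$ and $L=\frac{1}{2}\delta _{ab}y^{a}y^{b}+V\left( x\right)$; then $E_{b}\left( L,\rho ,Id_{E},Id_{M}\right)$ contains the $0$-homogeneous term $\rho _{b}^{i}\partial _{i}V$, so $G^{a}$ is not $2$-homogeneous and $y^{b}\partial _{y^{b}}G^{a}\neq 2G^{a}$. Your hope that ``the full strength of the $(g,h)$-lift constraint'' rescues the general case cannot work, because the constraint \eqref{eq37} restricts only the base velocity and not the fibre point: through every $u_{x}\in E$ there passes a parallel $(g,h)$-lift, so the identity would have to hold identically on $E$. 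Your argument does prove the theorem in the homogeneous (Finsler/spray) situation via Euler's identity, exactly as you indicate, but as written it does not establish the statement for an arbitrary Lagrange mechanical $\left( \rho ,\eta \right)$-system, and no choice of the final ``reorganisation'' can.
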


\begin{definition}
\label{d75} The equations%
\begin{equation}
\begin{array}{c}
\,\dfrac{dy^{a}\left( t\right) }{dt}-\left( E_{b}\left( L,\rho ,g,h\right) 
\tilde{L}^{ae}\left( g_{e}^{b}\circ h\circ \pi \right) \right) \circ u\left(
c,\dot{c}\right) \left( x\left( t\right) \right) =0,%
\end{array}
\label{eq107}
\end{equation}%
where $x\left( t\right) =\eta \circ h\circ c\left( t\right) $, will be
called the \emph{equations of Euler-Lagrange type associated to Lagrange
mechanical }$\left( \rho ,\eta \right) $\emph{-system }$\left( \left( E,\pi
,M\right) ,F_{e},L\right) $\emph{\ and from locally invertible }$\mathbf{B}^{%
\mathbf{v}}$\emph{-morphism }$\left( g,h\right) .$
\end{definition}

The equations%
\begin{equation}
\begin{array}{c}
\dfrac{dy^{a}\left( t\right) }{dt}-\left( E_{b}\left( L,\rho
,Id_{E},Id_{M}\right) \tilde{L}^{ab}\right) \circ u\left( c,\dot{c}\right)
\left( x\left( t\right) \right) =0,%
\end{array}
\label{eq108}
\end{equation}%
where $x\left( t\right) =c\left( t\right) $, will be called the \emph{%
equations of Euler-Lagrange type associated to Lagrange mechanical} $\rho $%
\emph{-system }$\left( \left( E,\pi ,M\right) ,F_{e},L\right) $.

\begin{remark}
\label{r76}\textrm{The\ integral curves of the canonical }$\left( \rho ,\eta
\right) $\textrm{-semispray associated to mechanical }$\left( \rho ,\eta
\right) $\textrm{-system }$\left( \left( E,\pi ,M\right) ,F_{e},L\right) $%
\textrm{\ and from locally invertible\ }$B^{\mathbf{v}}$\textrm{-morphism }$%
\left( g,h\right) $\textrm{\ are the }$\left( g,h\right) $\textrm{-lifts
solutions for the equations of Euler-Lagrange type }\eqref{eq107}.
\end{remark}

Using our theory, we obtain the following

\begin{theorem}
\label{t77}\emph{If }$F$\emph{\ is a Finsler fundamental function, then the\
geodesics on the manifold }$M$\emph{\ are the curves such that the
components of their }$\left( g,h\right) $\emph{-lifts are\ solutions for the
equations of Euler-Lagrange type }\eqref{eq107}.
\end{theorem}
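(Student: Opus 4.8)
The plan is to derive Theorem~\ref{t77} as the special case of the Euler--Lagrange formalism of Section~8 corresponding to the regular Lagrangian $L=F^{2}$ and vanishing external force $F_{e}=0$, and then to identify the resulting equations of Euler--Lagrange type with the geodesic equations of the Finsler structure $F$.

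First I would check that $L=F^{2}$ is a regular Lagrangian in the sense of Definition~\ref{d53}. Condition~$4$ of Definition~\ref{d55} states that for every vector local $m+r$-chart $(U,s_{U})$ the Hessian $\Vert F_{~ab}^{2}(u_{x})\Vert $ is positively defined on $\pi ^{-1}(U)\setminus \{0_{x}\}$; in particular $rank\,\Vert F_{~ab}^{2}(u_{x})\Vert =r$, which is exactly \eqref{eq77}. Hence the Lagrange mechanical $(\rho ,\eta )$-system $((E,\pi ,M),0,F^{2})$ is covered by Theorems~\ref{t71}, \ref{t73} and \ref{t74}: it possesses a unique canonical $(\rho ,\eta )$-semispray $S_{F^{2}}(g,h)$, the $(\rho ,\eta )$-connection with components \eqref{eq105} is associated to it, and by Theorem~\ref{t74} together with Remark~\ref{r76} the integral curves of $S_{F^{2}}(g,h)$ are precisely the $(g,h)$-lifts whose component functions $y^{a}(t)$ solve the equations of Euler--Lagrange type \eqref{eq107} written for this system, i.e. with $F_{e}=0$ and $L=F^{2}$.

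Next I would bring in the positive $1$-homogeneity of $F$ on the fibres (condition~$3$ of Definition~\ref{d55}). By Euler's theorem $y^{a}F_{a}=F$, and differentiating, $y^{a}F_{ab}=0$, so that $y^{a}(F^{2})_{a}=2F^{2}$, $y^{a}(F^{2})_{ab}=2FF_{b}$, and the energy $\mathcal{E}_{F^{2}}=y^{a}(F^{2})_{a}-F^{2}=F^{2}$ coincides with $L$ itself. Substituting $L=F^{2}$ into the expression \eqref{eq104} for $E_{b}(L,\rho ,g,h)$ and simplifying with these identities (and using that $\tilde{L}^{ae}$, being the inverse of the $0$-homogeneous matrix $L_{ab}=(F^{2})_{ab}$, is $0$-homogeneous), the quantity $E_{b}(F^{2},\rho ,g,h)\,\tilde{L}^{ae}(g_{e}^{b}\circ h\circ \pi )$, hence the functions $2(G^{a}-\frac{1}{4}F^{a})=2G^{a}$, turns out to be homogeneous of degree $2$ in the fibre coordinates. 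By formula $\left( S_{1}\right) $ in the proof of Theorem~\ref{t66}, $2$-homogeneity of $G^{a}$ is equivalent to the vanishing of the derivation $[\mathbb{C},S]_{(\rho ,\eta )TE}-S$, so $S_{F^{2}}(g,h)$ is in fact a $(\rho ,\eta )$-spray in the sense of the definition following Definition~\ref{d66}, and the connection \eqref{eq105} is its associated nonlinear $(\rho ,\eta )$-connection. In the classical case $(\rho ,\eta )=(Id_{TM},Id_{M})$, $g=Id_{E}$ these coefficients reduce to the usual geodesic spray coefficients of the Finsler metric $g_{ij}=\frac{1}{2}F_{~ij}^{2}$, consistently with Remark~\ref{r72}.

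Finally I would invoke the classical variational description of Finsler geodesics: a curve $c$ on $M$ is a geodesic of $F$ precisely when, up to affine reparametrisation, it is an extremal of the energy functional $\int F^{2}$, equivalently when its canonical lift is an integral curve of the spray of $F^{2}$. By the two preceding steps this spray is $S_{F^{2}}(g,h)$, and its integral curves are exactly the $(g,h)$-lifts whose components solve \eqref{eq107} for $((E,\pi ,M),0,F^{2})$; hence $c$ is a geodesic on $M$ if and only if the components of its $(g,h)$-lift are solutions of the equations of Euler--Lagrange type \eqref{eq107}, which is the assertion. I expect the main obstacle to be the bookkeeping in the third paragraph: one must verify, using only $y^{a}F_{a}=F$ and $y^{a}F_{ab}=0$, that the abstract right-hand side of \eqref{eq107} assembled from \eqref{eq104} collapses to the geodesic spray coefficients; once this reduction is carried out the conclusion is immediate from Theorem~\ref{t74} and the classical theory, so the difficulty is computational rather than conceptual.
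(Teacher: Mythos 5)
The paper offers no proof of Theorem \ref{t77}: it is stated as an immediate consequence of the preceding formalism (``Using our theory, we obtain the following''), so your proposal supplies an argument where the paper supplies none. What you write is the natural reconstruction and is essentially correct: you specialize the Lagrangian formalism of Section 8 to the Lagrange mechanical $\left( \rho ,\eta \right) $-system $\left( \left( E,\pi ,M\right) ,0,F^{2}\right) $, observe that condition 4 of Definition \ref{d55} gives exactly the regularity \eqref{eq77} for $L=F^{2}$, use Euler's identities $y^{a}F_{a}=F$, $y^{a}F_{ab}=0$ to show that $\mathcal{E}_{F^{2}}=F^{2}$, that $E_{b}\left( F^{2},\rho ,g,h\right) $ is $2$-homogeneous and $\tilde{L}^{ae}$ is $0$-homogeneous, hence that the coefficients $2G^{a}$ are $2$-homogeneous and, by the computation $\left( S_{1}\right) $ in the proof of Theorem \ref{t66}, that $S_{F^{2}}\left( g,h\right) $ is a $\left( \rho ,\eta \right) $-spray; Remark \ref{r76} then identifies its integral curves with the $\left( g,h\right) $-lifts solving \eqref{eq107}. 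All of these verifications are sound.

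The one genuine soft spot is the final identification, and it is a defect of the paper rather than of your argument: the paper never defines what a ``geodesic on the manifold $M$'' means in the generalized Lie algebroid framework (for general $\rho $, $\eta $, $h$, $g$ the base curves are not extremals of $\int F^{2}$ along ordinary tangent lifts, since the velocity constraint is \eqref{eq37} rather than $\dot{x}^{i}=y^{i}$). Your appeal to ``the classical variational description of Finsler geodesics'' is literally valid only in the case $\left( \rho ,\eta ,h\right) =\left( Id_{TM},Id_{M},Id_{M}\right) $, $g=Id_{E}$; in the general case the statement is best read as taking the projections of the integral curves of the canonical $\left( \rho ,\eta \right) $-spray of $F^{2}$ as the definition of geodesics, whereupon the theorem follows from Theorem \ref{t74} and Remark \ref{r76} exactly as you say. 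If you make that reading explicit, your proof is complete and, in the classical specialization, recovers the standard geodesic equations as you indicate.
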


Therefore, it is natural to propose to extend the study of the Finsler
geometry from the usual Lie algebroid $\left( \left( TM,\tau _{M},M\right) ,%
\left[ ,\right] _{TM},\left( Id_{TM},Id_{M}\right) \right) ,$ to an
arbitrary (generalized) Lie algebroid $\left( \left( E,\pi ,M\right) ,\left[
,\right] _{E,h},\left( \rho ,\eta \right) \right) .$

\ \ \emph{\ }

\section*{Acknowledgment}

\addcontentsline{toc}{section}{Acknowledgment}

I would like to thank R\u{a}dine\c{s}ti-Gorj Cultural Scientifique Society
for financiar support. In memory of Prof. Dr. Gheorghe RADU and Acad. Dr.
Doc. Cornelius RADU. Dedicated to Acad. Dr. Doc. Radu MIRON from Iassy
University, Romania, at his $86^{th}$ anniversary.

\bigskip

\bigskip \addcontentsline{toc}{section}{References}

\bigskip

\hfill

\end{document}